\setlist[2]{nosep}
\renewcommand{\emph}{\textsl}
\newcommand*\smallbullet{\mathpalette\smallbullet@{.5}}
\newcommand*\smallbullet@[2]{\mathbin{\vcenter{\hbox{\scalebox{#2}{$\m@th#1\bullet$}}}}}
\DeclareSymbolFont{origAMSa}{U}{msa}{m}{n}
\DeclareMathSymbol{\Box}{\mathord}{origAMSa}{3}
\renewcommand{\geq}{\geqslant}
\renewcommand{\leq}{\leqslant}
\theoremstyle{plain}
\newtheorem{theorem}{Theorem}[section]
\newtheorem{definition}[theorem]{Definition}
\newtheorem{proposition}[theorem]{Proposition}
\newtheorem{lemmadefinition}[theorem]{Lemma and Definition}
\newtheorem{lemma}[theorem]{Lemma} 
\newtheorem{remark}[theorem]{Remark}
\newtheorem{example}[theorem]{Example}
\numberwithin{equation}{section}
\numberwithin{figure}{section}
\newcommand{\bei}{\begin{itemize}}
\newcommand{\eei}{\end{itemize}}
\newcommand{\be}{\begin{equation}}
\newcommand{\ee}{\end{equation}} 
\newcommand{\bel}{\be \label}
\newcommand{\bse}{\begin{subequations}}
\newcommand{\ese}{\end{subequations}}
\newcommand{\mathtikz}[2][]{\begin{tikzpicture}[baseline=\the\dimexpr-\fontdimen22\textfont2\relax,#1]#2\end{tikzpicture}}
\DeclareMathOperator{\diag}{diag}
\DeclareMathOperator{\Cl}{\bf Cl}
\DeclareMathOperator{\Tr}{Tr}
\DeclareMathOperator{\sgn}{sgn}
\newcommand{\abs}[1]{\lvert#1\rvert}
\newcommand{\RR}{\mathbb{R}}
\newcommand{\ZZ}{\mathbb{Z}}
\newcommand{\Tbb}{\mathbb{T}}
\newcommand{\del}{\partial}
\newcommand{\Hcal}{\mathcal{H}}
\newcommand{\Lcal}{\mathcal{L}}
\newcommand{\Mcal}{\mathcal{M}}
\newcommand{\Ncal}{\mathcal{N}}
\newcommand{\Pcal}{\mathcal{P}}
\newcommand{\Qcal}{\mathcal{Q}}
\newcommand{\Sbf}{\mathbf S}
\newcommand{\Siso}{\mathbf S^{\textnormal{iso}}}
\newcommand{\Sani}{\mathbf S^{\textnormal{ani}}}
\newcommand{\rmoi}{r_-}
\newcommand{\rpoi}{r_+}
\newcommand{\thetamoi}{\theta_-}
\newcommand{\gmoi}{g_-{}}
\newcommand{\gpoi}{g_+{}}
\newcommand{\Kmoi}{K_-^{}{}}
\newcommand{\Kmoicirc}{\mathring{K}_-}
\newcommand{\Kpoi}{K_+^{}{}}
\newcommand{\Kpoicirc}{\mathring{K}_+}
\newcommand{\gst}{g_*{}}
\newcommand{\Kst}{K_*^{}{}}
\newcommand{\phist}{\phi_*{}}
\newcommand{\Mup}{\Mcal_{\vee}}
\newcommand{\Mplus}{\Mcal_{//}}
\newcommand{\Mminus}{\Mcal_{\backslash\backslash}}
\newcommand{\Mleftfar}{\Mcal_{>}}
\newcommand{\Mrightfar}{\Mcal_{<}}
\newcommand{\Mdown}{\Mcal_{\wedge}}
\newcommand{\Ncalplus}{\mathcal{N}}
\newcommand{\Ncalminus}{\overline{\mathcal{N}}}
\newcommand{\Lscr}{\mathscr{L}}
\newcommand{\Eplus}{E}
\newcommand{\Fplus}{F}
\newcommand{\Iplus}{I}
\newcommand{\vplus}{{v\hskip.02cm}}
\renewcommand{\uplus}{{u\hskip.02cm}}
\newcommand{\lplus}{l}
\newcommand{\mplus}{m}
\renewcommand{\nplus}{n}
\newcommand{\rplus}{r}
\newcommand{\xplus}{x}
\newcommand{\Phiplus}{\Phi}
\newcommand{\Psiplus}{\Psi}
\newcommand{\epsplus}{\epsilon}
\newcommand{\phiplus}{\varphi}
\newcommand{\psiplus}{\psi}
\newcommand{\Eminus}{\overline{E}}
\newcommand{\Fminus}{\overline{F}}
\newcommand{\Iminus}{\overline{I}}
\newcommand{\mminus}{\overline{m}}
\newcommand{\nminus}{\overline{n}}
\newcommand{\rminus}{{\overline{r}}}
\newcommand{\uminus}{{\overline u\hskip.02cm}}
\newcommand{\vminus}{{\overline v\hskip.02cm}}
\newcommand{\Phiminus}{\overline{\Phi}}
\newcommand{\Psiminus}{\overline{\Psi}{}}
\newcommand{\epsminus}{\overline \epsilon}
\newcommand{\phiminus}{{\overline \varphi}}
\newcommand{\psiminus}{{\overline \psi}}
\newcommand{\psiplusG}{\psiplus_{\scriptscriptstyle\textnormal{G}}}
\newcommand{\psiminusG}{\psiminus{}_{\scriptscriptstyle\textnormal{G}}}
\newcommand{\psinot}{\psi_{\circ}}
\newcommand{\Abf}{\mathbf{A}}
\newcommand{\SscrADMfromF}{{\mathscr{S}}_{\!\!\scriptscriptstyle\textnormal{ADM}}^{\ \ \ \textnormal{F}}}
\newcommand{\SscrFfromADM}{{\mathscr{S}}_{\!\textnormal{F}}^{\ \scriptscriptstyle\textnormal{ADM}}}
\newcommand{\preprint}[1]{%
  \newcommand{\ps@preprintnumber}{\ps@plain
    \renewcommand{\@oddhead}{\hfill\begin{picture}(0,0)\put(0,-40){\llap{#1}}\end{picture}}%
    \let\@evenhead\@oddhead}%
  \thispagestyle{preprintnumber}%
}
\begin{document}

\title{Cyclic spacetimes through singularity scattering maps. 
\\
Plane-symmetric gravitational collisions
\footnotetext{$^1$ 
Philippe Meyer Institute, Physics Department, \'Ecole Normale Sup\'erieure, PSL Research University, 24 rue Lhomond, F-75231 Paris Cedex 05, France.  Email: {\tt bruno@le-floch.fr}.
\\
$^2$ Laboratoire Jacques-Louis Lions \& Centre National de la Recherche Scientifique, 
Sorbonne Universit\'e,  
4 Place Jussieu, 75252 Paris Cedex, France. Email: {\tt contact@philippelefloch.org}. 
\\
$^3$ CERN, Theory Department, CH-1211 Geneva 23, Switzerland. Email: {\tt Gabriele.Veneziano@cern.ch.} 
\\
$^4$ Coll\`ege de France, 11 Place M. Berthelot, 75005 Paris, France. 
}} 
\author{Bruno Le Floch$^1$, Philippe G. LeFloch$^2$, and Gabriele Veneziano$^{3,4}$  
}
\date{}

\maketitle

\begin{abstract}  
We study the plane-symmetric collision of two gravitational waves and describe the global spacetime geometry generated by this collision. To this end, we formulate the characteristic initial value problem for the Einstein equations, when Goursat data describing the incoming waves are prescribed on two null hypersurfaces. We then construct a global solution representing a cyclic spacetime based on junction conditions associated with a prescribed singularity scattering map (a notion recently introduced by the authors). 
From a mathematical analysis standpoint, this amounts to a detailed analysis of the Goursat and Fuchsian initial value problems associated with singular hyperbolic equations, when junction conditions at interfaces are prescribed. 
In our construction, we introduce a partition into monotonicity diamonds (that is, suitable spacetime domains) and we 
construct the solution by concatenating domains across interfaces of timelike, null, or spacelike type. 
\end{abstract}

\preprint{CERN-TH-2021-094}
 
 {\small 
\setcounter{tocdepth}{2} 
\tableofcontents 

}

\

\section{Introduction}

\paragraph{Plane gravitational waves.}

We study the gravitational collision problem which we solve {\sl beyond singularities} in the class of plane-symmetric spacetimes, in a sense that we defined recently in \cite{LLV-2,LLV-1a}. Recall that the global geometry of {\sl plane waves} (without collisions)  was analyzed in Penrose \cite{Penrose0}, who emphasized their relevance as an idealization of more general physical phenomena. He stressed that a challenge arises with the study of plane waves that is not encountered with, for instance, asymptotically flat spacetimes. Namely, in a plane wave spacetime, no spacelike hypersurface exists on which one could pose the initial value problem globally, so that such a plane wave cannot be embedded in a globally hyperbolic spacetime.  The hypersurface necessarily contains trapped surfaces and, in fact, {\sl geometric singularities,} as we explain later in this paper. 
Following the pioneering work by Penrose \cite{Penrose0}, Flores and S\'anchez \cite{FloresSanchez:2003,FloresSanchez:2008} extensively studied this class of plane-wave spacetimes 
and established several properties of their causal domains and conjugate points. 

Despite the presence of singularities, we prove here that in the generalized setup of cyclic spacetimes proposed in \cite{LLV-2,LLV-1a} we can still prescribe initial data on such a hypersurface and solve globally.
Note that while trapped surface are unavoidable only in plane symmetry, they are also a typical phenomenon arising with the Einstein equations even when no symmetry restriction is made~\cite{Christodoulou-book}.
Before beginning the description of our results, let us emphasize that this paper can be read without a priori knowledge of the Einstein equations since the required notions will be defined in this text when needed. 
Only for additional material, the reader will be able to refer to \cite{LLV-1a}, while all 
technical contributions of this paper involves singular solutions to partial differential equations of hyperbolic type 
and, therefore, 
  are accessible to a reader interested in learning the mathematical analysis techniques proposed here. 
 
\paragraph{The collision problem as a characteristic initial value problem.} 

We wish to describe two plane waves propagating from past infinity in opposite directions, initially separated by a flat spacetime region, that come together and interact.
We formulate this collision problem as seeking the Cauchy development of an initial data set consisting of two null hypersurfaces with a two-dimensional intersection and endowed with suitable data sets. 
We are thus given two three-dimensional manifolds $\Ncalplus_0$ and $\Ncalminus_0$ endowed with degenerate Riemannian metrics of signature $(0, +, +)$ and identified along their boundary
(a two-dimensional Riemannian manifold denoted by~$\Pcal_0$), as well as data of incoming gravitational radiation on these hypersurfaces (as specified below).
Our plane-symmetry assumption allows symmetry orbits such as~$\Pcal_0$ to be two-planes~$\RR^2$ or two-tori~$\Tbb^2$; we choose for definiteness the latter, with a rectangular torus\footnote{Our results can be stated in both cases, and choosing torus orbits of symmetry allows one for the energy content of the spacetime to be finite, rather than only its local energy density.  Likewise, our choice of a torus lets us talk about the area of symmetry orbits, rather than the local area element.}. 

We construct geometrically, in Section~\ref{section----6} below,
a set of null coordinates $(\uplus,\uminus,x,y)$ adapted to our problem,
where $x,y\in[0,1)$ are periodic coordinates on the $\Tbb^2$ symmetry orbits, while $\uplus,\uminus\in\RR$ are null coordinates on the quotient of spacetime by the $\Tbb^2$~action.
We express in these coordinates the metric and Einstein equations in terms of geometric quantities.
The metric is
\[
g = - 2 e^{2\omega} \, d\uplus d\uminus + e^{2a+2\psi} dx^2+e^{2a-2\psi}dy^2
\]
so that our problem has four scalar variables, functions of $\uplus,\uminus$: the conformal factor~$\omega$, the area~$e^{2a}$ and modular parameter $ie^{2\psi}$ of torus symmetry orbits, and the scalar field~$\phi$.
We then pose the initial value problem as follows. 
\bei 

\item Initial data can be prescribed at past infinity within the regions we have denoted by $\Mleftfar$ and $\Mrightfar$ in Figure~\ref{Fig:1.1}, describing plane gravitational waves that collide in the region $\Mup=\{\uplus>0,\uminus>0\}$.
Equivalently, we prescribe data along the null hypersurfaces $\Ncalplus_0=\{\uplus>0,\uminus=0\}$ and $\Ncalminus_0=\{\uplus=0,\uminus>0\}$ and we solve globally in the region~$\Mup$.
Without loss of generality we set $\omega=0$ along these hypersurfaces by redefinitions of $\uplus$ and of~$\uminus$; this also provides us with a geometric definition of preferred null coordinates $\uplus,\uminus$, as affine parameters of null geodesics (in the two initial waves) orthogonal to the symmetry orbits.

\item
The initial data for the area function~$e^{2a}$, on the other hand, can be recovered from the data for $\psi,\phi$ 
by using the Einstein equations.
Altogether, the incoming radiation data boils down to the values of $\psi,\phi$ along the two null hypersurfaces, and solving the Einstein-scalar field equations consists of finding the unknown functions $\omega,a,\psi,\phi$ in the region~$\Mup$.

\item It turns out that the area function $A=e^{2a}$ (in its initial data) must vanish at least once on~$\Ncalplus_0$ and once on~$\Ncalminus_0$: this is the infinite-focussing phenomenon found by Penrose to be unavoidable in plane-symmetric gravitational waves.
One of the Einstein evolution equations is a wave equation $A_{\uplus\uminus}=0$.
Starting from initial data where the area function~$A$ is non-negative, this yields a solution without a definite sign, and the area $|A|=e^{2a}$ of symmetry orbits generically vanishes along a collection of singularity hypersurfaces as depicted in Figure~\ref{Fig:1.1}.

\eei 

Moreover, we emphasize that describing the spacetime geometry after the first singularity is necessary if one wants the physical theory to be complete. Indeed,
physically, one may ask  what will be the geometry experienced by observers that enter the system (defined by the two incoming colliding waves) at much later times than those characterizing the formation of the singularity. They should be determined by the incoming energy flux at later times,
but  they also depend on the data specified on the ``other side'' of the above mentioned singular boundary.

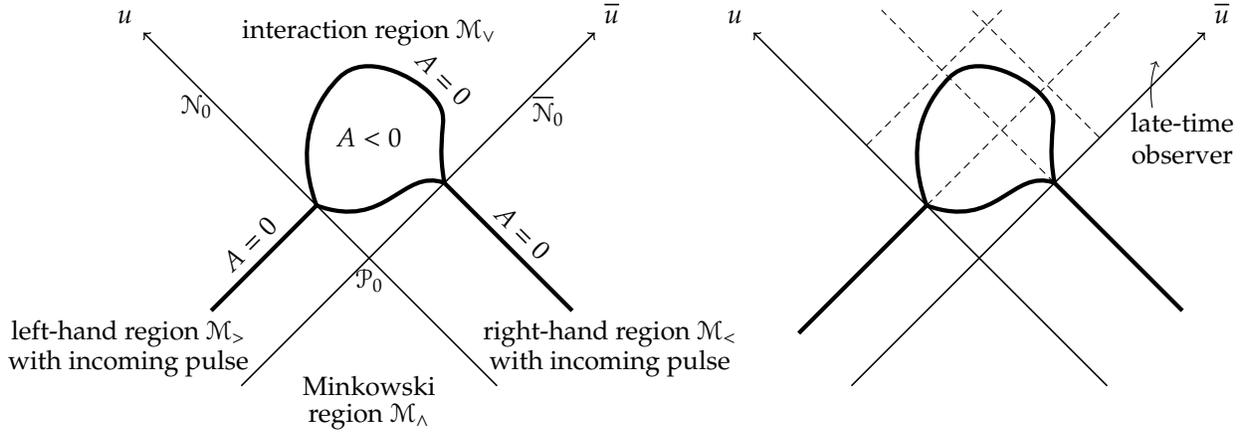
\begin{figure}
  \centering
  \centerline{%
  \begin{tikzpicture}
    \draw[semithick,->] (1.7,-1.7) -- (-3,3) node [pos=.85,below] {$\Ncalplus_0$} node [above left] {$\uplus$};
    \draw[semithick,->] (-1.7,-1.7) -- (3,3) node [pos=.85,below] {$\,\,\,\Ncalminus_0$} node [above right] {$\uminus$};
    \draw[ultra thick] (-.7,.7)
    .. controls +(-20:1) and +(160:.6) .. (1,1)
    .. controls +(100:.8) and +(-45:.5) .. (.8,2.2)
    .. controls +(135:.3) and +(45:.5) .. (-.4,2.4) node [pos=0,above,sloped] {$A=0$}
    .. controls +(-135:.3) and +(110:1) .. (-.7,.7);
    \node at (0,1.6) {$A<0$};
    \draw[ultra thick] (-2.1,-.7) -- (-.7,.7) node [midway,above,sloped] {$A=0$};
    \draw[ultra thick] (2.7,-.7) -- (1,1) node [midway,above,sloped] {$A=0$};
    \node at (0,-.3) {$\Pcal_0$};
    \node at (0,-1.7) {Minkowski};
    \node at (0,-2.1) {region $\Mdown$};
    \node at (-3.2,-1) {left-hand region $\Mleftfar$};
    \node at (-3.2,-1.4) {with incoming pulse};
    \node at (3.2,-1) {right-hand region $\Mrightfar$};
    \node at (3.2,-1.4) {with incoming pulse};
    \node at (0,3) {interaction region $\Mup$};
  \end{tikzpicture}\hspace{-1em}%
  \begin{tikzpicture}
    \draw[semithick,->] (1.7,-1.7) -- (-3,3) node [above left] {$\uplus$};
    \draw[semithick,->] (-1.7,-1.7) -- (3,3) node [above right] {$\uminus$};
    \draw[ultra thick] (-.7,.7)
    .. controls +(-20:1) and +(160:.6) .. (1,1)
    .. controls +(100:.8) and +(-45:.5) .. (.8,2.2)
    .. controls +(135:.3) and +(45:.5) .. (-.4,2.4)
    .. controls +(-135:.3) and +(110:1) .. (-.7,.7);
    \draw[ultra thick] (-2.4,-1) -- (-.7,.7);
    \draw[ultra thick] (2.7,-.7) -- (1,1);
    \draw[densely dashed] (-.7,.7) -- +(2.6,2.6);
    \draw[densely dashed] (-1.5,1.5) -- +(1.8,1.8);
    \draw[densely dashed] (1,1) -- +(-2.3,2.3);
    \draw[densely dashed] (1.6,1.6) -- +(-1.7,1.7);
    \node at (0,-2.1) {\phantom{region $\Mdown$}};
    \draw [->] (2.3,2) arc (190:160:1.2);
    \node[below] at (2.7,2.1) {\begin{tabular}{c}late-time\\observer\end{tabular}};
  \end{tikzpicture}%
  }
  \caption{\label{Fig:1.1}{\bf Left:} \sl
  Example of colliding plane-symmetric gravitational waves.  Each point of the diagram is a plane-symmetry orbit, and 45$^{\circ}$ lines are null rays.  The data can be prescribed either as incoming gravitational pulses in the regions $\Mleftfar$ and~$\Mrightfar$ or simply on the wave-fronts $\Ncalplus_0$ and~$\Ncalminus_0$.  We depict by thick lines the singularity hypersurfaces along which the area function~$A$ vanishes; in the region $\Mup=\{\uplus>0,\uminus>0\}$ these are generically also curvature singularities.  In general this locus $A=0$ can partition spacetime into any number of spacetime domains, as depicted in Figure~\ref{fig:1.2} in the main text.
  {\bf Right:} Decomposition of the same spacetime (along the dashed lines) into monotonicity diamonds in which $A$ depends monotonically on $\uplus$ and on~$\uminus$.
  In a complete theory, an observer (depicted by the curved arrow) entering the interaction region~$\Mup$ at late times must see some geometry that is fully determined from the initial data; we achieve this through our use of singularity scattering maps.}
\end{figure}

\paragraph{A global construction of cyclic spacetimes.}

The curvature of the spacetime generated by such a collision blows up (for generic data) in finite time along future timelike directions, and understanding the global structure of such spacetimes is a challenging problem.
We establish that gravitational collisions generate a cyclic spacetime in which finitely (respectively, infinitely) many contracting/expanding phases successively take place if the incoming pulses have finite duration or sufficient decay (resp.\@, are not decaying).
Such a spacetime consists of a collection of spacetime domains and, more precisely, (anisotropic, cosmological) {\sl monotonicity diamonds} as we call them,
containing at most one spacelike singularity hypersurface connecting a Big Crunch region to a Big Bang region, or else one timelike hypersurface of singularity forming an interface between two regions. 
We refer to Section~\ref{section----6} for terminology on the plane-symmetric collision, and we now state our main result
in a simplified form, while postponing the full statement to Theorem~\ref{theorem-plane} below. 

\begin{theorem}[Cyclic spacetimes generated by the collision of plane waves] 
\label{theo:first}
Fix a causal momentum-preserving ultralocal scattering map\footnote{The relevant notions were introduced first in \cite{LLV-1a} and are defined in Section~\ref{section---8}, below.}. 
Given two generic plane-symmetric gravitational waves propagating in opposite directions and colliding along a two-plane,
the characteristic initial value problem associated with the initial data induced on the two wave fronts admits a global Cauchy development $(\Mcal, g, \phi)$ in a class of spacetimes with singularity hypersurfaces.
This spacetime consists of a concatenation of anisotropic cosmological spacetime domains separated by singularity hypersurfaces of spacelike or timelike type.
The Einstein field equations are satisfied away from the singularity hypersurfaces,
and the junction conditions associated 
with the given scattering map hold across all singularity hypersurfaces.
The curvature of $(\Mcal, g)$ generically blows up as one approaches the singularity hypersurfaces. 
\end{theorem}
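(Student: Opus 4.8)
The plan is to decompose the putative spacetime into a countable (possibly finite) collection of elementary domains, to solve the Einstein--scalar field equations on each domain by a combination of Goursat and Fuchsian analysis, and to propagate the solution from one domain to the next either by matching characteristic traces or, across a singularity hypersurface, by applying the prescribed scattering map. I begin by fixing the geometric gauge and the adapted null coordinates $(\uplus,\uminus)$ constructed in Section~\ref{section----6}, in which $g = -2e^{2\omega}d\uplus d\uminus + e^{2a+2\psi}dx^2 + e^{2a-2\psi}dy^2$ and the field equations split into three pieces: the linear wave equation $A_{\uplus\uminus}=0$ for the area $A = e^{2a}$, whose solution on any domain where it holds takes the form $A = \mathcal F(\uplus) + \mathcal G(\uminus)$; a coupled system of wave equations for the matter fields $(\psi,\phi)$ whose first-order coefficients degenerate like $A_{\uplus}/A$ and $A_{\uminus}/A$ and are therefore Fuchsian wherever $A\to 0$; and first-order transport equations (the Raychaudhuri/constraint equations) reconstructing $\omega$ along null rays from the energy fluxes of $(\psi,\phi)$. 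For generic incoming pulses the relevant profiles $\mathcal F,\mathcal G$ have only nondegenerate critical points, so the null rays through these critical points partition the region into monotonicity diamonds, in each of which $A$ is strictly monotone in $\uplus$ and in $\uminus$; within such a diamond the locus $\{A=0\}$ is either empty or a single smooth monotone curve, which is spacelike (a Big Crunch--Big Bang interface) when $\mathcal F'$ and $\mathcal G'$ carry the same sign and timelike when they carry opposite signs, consistently with the causal types claimed. The construction is recursive: the scattering map endows the domain on the far side of each singularity with its own area profile, which may again vanish, and the interaction region $\Mup$ need not decompose into finitely many pieces a priori.

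Next I would solve the system one diamond at a time, following an ordering of the diamonds compatible with causal precedence (a staircase emanating from the corner $\Pcal_0$). For a diamond not met by $\{A=0\}$, its two past null edges carry characteristic data for $(\psi,\phi)$ and the derived quantities $(A,\omega)$, inherited either from the initial hypersurfaces $\Ncalplus_0\cup\Ncalminus_0$ or from the already-solved neighboring diamonds, and one invokes well-posedness of the (mildly singular) Goursat problem in the region where $A\neq 0$, together with the Fuchsian theory to read off the asymptotic expansion --- the singularity data --- of $(\psi,\phi,\omega)$ as a null edge runs into $\{A=0\}$. For a diamond that $\{A=0\}$ genuinely crosses, one solves the characteristic problem up to the singularity, extracts the singularity data on the approaching side, and transports it across the interface by the fixed causal momentum-preserving ultralocal scattering map of Section~\ref{section---8}; the resulting data serves as the initial data of a Fuchsian initial value problem posed \emph{from} the singularity, whose solvability in the appropriate polyhomogeneous/weighted class produces the fields on the far side. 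Iterating this step over the whole partially ordered set of diamonds, and checking that characteristic traces agree along shared null edges --- in particular at the corner points where several diamonds and possibly interfaces of both causal types meet --- assembles the global development $(\Mcal,g,\phi)$: by construction the Einstein equations hold in the interior of every diamond and the junction conditions attached to the scattering map hold across every singularity hypersurface. This establishes the simplified statement; the refinements for Theorem~\ref{theorem-plane} come from keeping track of the geometry of the diamonds throughout the recursion.

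Two points finish the argument. The dichotomy between finitely and infinitely many contracting/expanding phases follows by showing that incoming pulses of finite duration or sufficient decay force $\mathcal F,\mathcal G$ (and their counterparts in later domains) to have finitely many critical points, hence finitely many diamonds and cycles, whereas non-decaying pulses produce infinitely many; the same inductive step applies in either case, and in the infinite case one additionally checks that the diamonds exhaust $\Mcal$ and that the recursion carries uniform control of the relevant norms. The generic curvature blow-up is read off from the Fuchsian asymptotics: a direct computation of a curvature invariant near $\{A=0\}$ shows its leading term to be a negative power of $A$ times a coefficient built algebraically from the singularity data, which is nonvanishing for generic data. I expect the main obstacle to be precisely the Fuchsian gluing step: one must show that the scattering map sends admissible asymptotic data to admissible asymptotic data, so that the reinitialized Fuchsian problem lies in the class for which existence and uniqueness hold; one must propagate enough regularity through (possibly infinitely many) diamonds; and one must treat the corner points where a spacelike interface meets a timelike one and where the monotonicity structure degenerates, which is where the genericity hypothesis and a careful local normal form are essential.
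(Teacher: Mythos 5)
There is a genuine gap in your treatment of the timelike singularity hypersurfaces, which is exactly where the ``causal'' hypothesis on the scattering map must enter. Your scheme for a diamond crossed by $\{A=0\}$ --- solve the characteristic problem up to the singularity, extract the singularity data on the approaching side, push it through the scattering map, and re-solve a Fuchsian problem on the far side --- is the paper's argument for a \emph{spacelike} interface, where the whole singularity lies in the future of the data. It cannot work when the interface is timelike: there the past boundary of the diamond lies on \emph{both} sides of $\{A=0\}$, neither side can be solved independently (the outgoing characteristic data along the future null boundaries is unknown), and the junction condition is not a one-way transfer but a constraint coupling the two simultaneous evolutions. The paper resolves this by writing the Abel representation of $\psi,\phi$ on each side in terms of both the prescribed initial data and the unknown final data, expanding at $\rplus+\rminus=0$, and observing that the scattering relation determines the unknowns from the initial data precisely when the induced map $F$ on the leading Fuchsian data $(\phi_\sharp,\psi_\sharp)$ is a bijection of $\RR^2$ --- this is the content of Definition~\ref{def:causal}, and it is a nontrivial restriction: the momentum-preserving ``continuity'' map $f=0$ satisfies everything else in your argument yet makes the timelike problem ill-posed (it forces an identification of the two incoming data sets). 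In your proposal the causality hypothesis is never actually used, which is the sign that this step is missing.

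A secondary, less serious point: your recursion in which ``the scattering map endows the domain on the far side of each singularity with its own area profile'' is not how the construction closes. For momentum-preserving maps (normalized to $c=1$) the wave equation $A_{\uplus\uminus}=0$ holds across the singularities, so $A=\Fplus(\uplus)^2+\Fminus(\uminus)^2-1$ is known globally and a priori from the incoming data alone; the singular locus and the entire diamond decomposition are therefore fixed once and for all before any of $\psi,\phi,\omega$ is solved for. This is not merely a convenience: it is what makes the coordinates $(\rplus,\rminus)$ and the location of the timelike interfaces available in advance, which your two-sided matching (once added) would need.
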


Importantly, in our construction we need to cross singularity hypersurfaces, for which we rely on our junction conditions.
This theorem provides us with a large class of cyclic spacetimes governed by the Einstein equations, for which we can arbitrarily prescribe radiation data at past infinity. In such a Universe, starting from a flat background in which initially non-interacting plane-symmetric waves propagate, successive periods of contraction and expansion take place in an oscillating pattern.  There are finitely (respectively infinitely) many cycles when the incoming waves have finite (resp.\@ infinite) duration. 

In our context, we could allow the metric to have only weak regularity with locally finite energy, so that away from the singularity hypersurfaces Einstein's field equations must be understood in the distributional sense. Namely, the collision problem can be naturally posed within a class of incoming radiation data (and spacetimes) that need not be regular. The relevant mathematical techniques to construct such weakly regular spacetimes were introduced in \cite{PLFMardare}--\cite{PLFStewart} and are developed in the series of papers \cite{LeFlochLeFloch-1}--\cite{LeFlochLeFloch-4}.

\paragraph{Causality of singularity scattering maps.}

The junction conditions were introduced in~\cite{LLV-1a} as relations between suitable rescalings of the scalar field~$\phi$, spatial metric~$g$ and extrinsic curvature~$K$ of a Gaussian foliation on both sides of the singularity.
Denoting by~$s$ the proper time (or distance, for a timelike singularity) along geodesics normal to the singularity hypersurface located at $s=0$, the singularity data is
\be\label{singularity-data-set}
\aligned
(\gpoi, \Kpoi, \phi_{0+}, \phi_{1+})
& \coloneqq \lim_{\substack{s \to 0\\s > 0}} \Big(|s|^{2 s K} g,\; -s K,\; s \del_s \phi,\; \phi - s \log |s|  \del_s \phi \Big)(s), 
\\
(\gmoi, \Kmoi, \phi_{0-}, \phi_{1-}) 
& \coloneqq \lim_{\substack{s \to 0\\s < 0}} \Big(|s|^{2 s K} g,\; -s K,\; s \del_s \phi,\; \phi - s \log |s|  \del_s \phi \Big)(s).
\endaligned
\ee
Junction conditions of the form $(\gpoi, \Kpoi, \phi_{0+}, \phi_{1+}) = \Sbf(\gmoi, \Kmoi, \phi_{0-}, \phi_{1-})$ at every point on the singularity, for some map~$\Sbf$ dubbed a singularity scattering map.
An important result in \cite{LLV-1a} is a classification of such pointwise (or ultralocal) relations between singularity data sets that are compatible with Einstein constraint equations.
Our global existence theory in plane symmetry treats both sides of timelike singularity hypersurfaces on an equal footing, so that either singularity data 
 can be written as a function of the other.  This requires the scattering map, denoted by~$\Sbf$ and describing the junction,
 to be {\sl invertible.}

For definiteness, in this paper we focus on the class of momentum-preserving maps ($K_+=K_-$), classified in~\cite{LLV-1a}.
They are best expressed in terms of a parametrization of the rescaled intrinsic curvatures~$K_{\pm}$, namely ($\delta$~is the identity matrix)
\be
K_{\pm} = \frac{1}{3}\delta+\frac{2}{3}r_{\pm}\cos\Theta_{\pm} , \qquad
r_{\pm} = r(\phi_{0\pm}) = \sqrt{1-12\pi\phi_{0\pm}^2} , \qquad
\Theta_{\pm} = \diag\biggl(\theta_{\pm}, \theta_{\pm} + \frac{2\pi}{3} , \theta_{\pm} + \frac{4\pi}{3}\biggr) ,
\ee
in a suitable basis orthonormal with respect to the rescaled metric~$g_{\pm}$.
A momentum-preserving scattering map then gives the junction condition
\bel{junction-for-planes}
\aligned
\gpoi & = c^2 \exp\Big(
16\pi\xi\cos\Theta_-
- 16\pi \del_{\thetamoi} \bigl( \xi + \frac{\phi_{0-}}{r(\phi_{0-})} f \bigr) \sin\Theta_-
\Big) \, \gmoi , \qquad
&&\Kpoi = \Kmoi , \qquad
\\
\phi_{0+} & = a \phi_{0-} , \qquad
&&\phi_{1+} = a f(\thetamoi,\phi_{0-})+\phi_{1-} ,
\endaligned
\ee
which depends on an unimportant sign $a=\pm$ and unimportant constant $c>0$, as well as an essentially arbitrary function $f=f(\thetamoi,\phi_{0-})$, from which we construct the auxiliary function
$\xi = \xi(\thetamoi,\phi_{0-}) \coloneqq - \int_{-1/\sqrt{12\pi}}^{\phi_{0-}} r(y)^{-1}y\del_y f(\thetamoi,y) dy$.
The expression of~$g_+$ involves a matrix exponential that rescales~$g_-$ differently along different eigenvectors of~$K_-$.
The constant~$c$ can be normalized away by changing units on one side of the singularity.

In Section~\ref{section---8}, we shall translate these junction conditions from the Gaussian foliation used in~\cite{LLV-1a} to the double null foliation used in plane symmetry.
When solving the evolution problem in the presence of a timelike singularity hypersurface, we learn that causality imposes a further invertibility condition on~$f$, stated as Definition~\ref{def:causal}.
This condition precludes for instance taking $f=0$, but allows taking $f(\thetamoi,\phi_{0-})=b\phi_{0-}$ for any constant $b\neq 0$.

\paragraph{Relevance of the global collision problem.} 

With the observation of gravitational waves under way and given the extensive development of numerical relativity, the collision problem provides
a simplified yet physically interesting set-up, even within the class of plane-symmetric waves. 
We make here global predictions about the propagation and interaction of gravitational waves,
and investigate the effect of an arbitrarily large number of colliding gravitational waves. This picture 
may be relevant in cosmology for the study of the large-scale structure of the Universe. Insights obtained in the present paper are also relevant in order to extend our conclusions beyond the case of plane-symmetric collisions.  

The colliding gravitational wave problem has a long history in the physics literature. Examples were first constructed by Penrose~\cite{Penrose0,Penrose} and Khan and Penrose \cite{KhanPenrose}. 
In particular, recall that LeFloch and Stewart \cite{PLFStewart} established the nonlinear stability of the Khan-Penrose solution within the class of weakly regular spacetimes. The present paper will establish that the spacetimes in \cite{PLFStewart} make sense beyond the ``first'' singularity hypersurface and we will describe how the extension should be made.  In particular, our construction applies to the Khan-Penrose spacetime and extends it to a cyclic spacetime.

The plane collision problem provides a simplified description of the actual collision of two high energy beams. To further investigate pre-Big Bang scenarios, and following earlier work by Eardley and Giddings~\cite{EardleyGiddings}, Kohlprath and Veneziano~\cite{KohlprathVeneziano} analyzed the high-energy collision of two beams of massless particles, represented by two axi-symmetric colliding waves, and established that marginally trapped surfaces arise after the collision.
In the past ten years, there has been a renewed interest in the gravitational collision problem, motivated by high-energy physics and in connection with the pre-Big Bang scenario in string cosmology, proposed in~\cite{BV,BDV,FKV,GasperiniVeneziano2,VenezianoSFD}. 

\paragraph{Outline of this paper.}

The resolution of the gravitational collision problem is tackled as follows.
The interaction of plane gravitational waves unavoidably produces singularities, and one must face the question of continuing the spacetime beyond singularities in order to determine the global geometry generated by the collision.
Our definition and study of singularity scattering maps in \cite{LLV-1a}
is essential for this purpose.
Conversely, applying our general junction conditions to the plane-symmetric setting gives a more concrete handle on our scattering maps.
We begin with a few observations.
\bei

\item {\bf The first collision region.}
Relying on (null) coordinates that are suitably adapted to the plane symmetry, we will state and analyze the essential part of the Einstein-matter field equations and, in fact, exhibit an explicit formula (the so-called Abel representation formula) for the essential metric and matter fields, denoted by 
$\psi$ and $\phi$ below. The remaining metric coefficients, that is the area $a$ and the conformal factor $\omega$,
 are obtained by solving a differential equation along the null directions.  This closed-form formula allows us to establish a well-posedness result for the characteristic initial value problem within a ``first'' region of the interaction.

\item{\bf Behavior near the singularity hypersurface.}
Next, by performing a suitable expansion on the ``first'' singularity hypersurface, we will analyze the behavior of the solution and, in particular, observe that the spacetime curvature generically blows up as one approaches the future boundary of this first collision region.

\item {\bf Crossing the first singularity hypersurface.} 
We will then cross the singularity and start the construction of the global spacetime structure. 
Recall that the standard junction condition introduced by Israel does not apply to our problem, and we must rely here on the 
general junction conditions~\eqref{junction-for-planes},
based on singularity scattering maps. 
A special case could be to impose suitable ``continuity conditions'' across singularity hypersurfaces, but we prefer to keep 
our setting sufficiently general in order to accommodate a variety of physical models.

\item {\bf The global cyclic spacetime geometry.}
Based on the chosen singularity scattering map we can determine the global spacetime geometry, and the constructed spacetime can be interpreted as a physically meaningful cyclic Universe.

\eei

In Section~\ref{section----6} we begin with the formulation of the global collision problem. In Section~\ref{section---7} the geometry of such spacetimes is studied, while the construction is initiated in Section~\ref{section---8} (field equations and junction conditions) and fully described in Section~\ref{section---9} (actual construction based on a prescribed scattering map). 

\


\section{Formulation of the global plane collision problem}
\label{section----6}

\subsection{Geometry of the colliding spacetime} 
\label{section--31}

\paragraph{The double foliation by null hypersurfaces.}

From now on, we assume that the collision involves plane-symmetric gravitational waves that propagate in opposite directions and we solve the problem globally.
The interaction region denoted by $\Mup$ is defined as the future of the two-plane $\Pcal_0$ of intersection between the two null hypersurfaces.
We are going to construct a foliation of this spacetime domain
\[
\Mup = \bigcup_{\uplus >0} \Ncalminus_{\uplus}
= \bigcup_{\uminus >0} \Ncalplus_{\uminus}
\]
by two families of null hypersurfaces
$
\Ncalminus_\uplus = \bigl\{ (\uplus, \uminus) \bigm| \uminus >0 \bigr\} 
$ and $
\Ncalplus_\uminus = \bigl\{ (\uplus, \uminus) \bigm| \uplus >0 \bigr\},
$
along which $\uplus$ and $\uminus$ are constants, respectively.  In particular, the initial hypersurfaces on which incoming wave data are prescribed are $\Ncalminus_0 = \big\{ \uplus = 0; \, \, \uminus > 0 \big\}$ and $\Ncalplus_0 = \big\{\uplus > 0; \, \, \uminus = 0 \big\}$, and they intersect along $\Pcal_0 = \big\{\uplus = \uminus = 0 \big\}$.
The fact that null coordinates $(\uplus,\uminus)$ can be globally defined despite the presence of singularities
relies on the fact that light rays meaningfully traverse singularities, as we will observe.

We introduce coordinates $(x,y)$ in the $2$-torus $\Tbb^2$ (or $\RR^2$) on the orbits of plane symmetry and supplement them with two null coordinates $(\uplus, \uminus)$ geometrically defined as follows.

\bei 

\item First of all, it is convenient to introduce the quotient manifold $\Qcal\coloneqq \Mcal/\Tbb^2$ of the spacetime by the two-dimensional translation group of the plane.
  Then, at the intersection $\Pcal_0$ of the two incoming fronts, we choose two null vectors $l_{\Pcal_0},n_{\Pcal_0}$ in the tangent spaces to $\Ncalplus_0, \Ncalminus_0$, respectively, and orthogonal to the group orbits, normalized so that 
\[
g(l_{\Pcal_0},n_{\Pcal_0}) = -1.
\]
Each timelike surface orthogonal to the group orbits represents the quotient manifold $\Qcal$. 

\item We then extend the vectors $l_{\Pcal_0}, n_{\Pcal_0}$ toward the future as geodesic fields $\lplus, \nminus$, whose integral curves in $\Qcal$ are $\Ncalplus_0/\Tbb^2, \Ncalminus_0/\Tbb^2$, respectively. We define $\uplus$ and $\uminus$ on these curves to be the affine parameters of $\lplus$ and $\nminus$, normalized so that
$
\uplus = \uminus = 0$ at $\Pcal_0$. 
By definition, the quotient $\Ncalplus_0/\Tbb^2$ (for instance) is identified with a geodesic $\xplus^\alpha= \xplus^\alpha(\uplus)$ satisfying the equation $\nabla_{\dot \xplus} \dot \xplus = 0$.

\item Next, we define the hypersurfaces $\Ncalplus_\uminus$, $\Ncalminus_\uplus$ for every $\uplus,\uminus$ by requiring that $\Ncalplus_\uminus/\Tbb^2$ and $\Ncalminus_\uplus/\Tbb^2$ are right-moving and left-moving null curves in the quotient manifold~$\Qcal$, respectively.  We assign coordinates $(\uplus,\uminus)$ to the intersection point of $\Ncalplus_\uminus/\Tbb^2$ and $\Ncalminus_\uplus/\Tbb^2$ when it exists and is unique.  We show later that in a maximal development of the initial data all values $(\uplus,\uminus)$ correspond to a point in~$\Qcal$ (namely $(\Ncalplus_\uminus\cap\Ncalminus_\uplus)/\Tbb^2$ is a point) and conversely that this coordinate patch covers all of~$\Qcal$ and, therefore, gives coordinates on the whole spacetime~$\Mcal$.
In particular, we show that $\Ncalplus_\uminus$ and $\Ncalminus_\uplus$ extend naturally beyond singularities.
Finally, we introduce the following vector fields in~$\Mcal$: 
\[
l \coloneqq {\del \over \del \uplus}, \qquad 
n \coloneqq {\del \over \del \uminus}.
\]
\eei
This completes the geometric construction of the coordinates $(\uplus, \uminus, x, y)$ and the associated frame $\big(l,n, {\del \over \del x}, {\del \over \del y}\big)$. We emphasize the freedom to scale the original vectors $l_{\Pcal_0} \mapsto \lambda l_{\Pcal_0}$ and $n_{\Pcal_0} \mapsto n_{\Pcal_0}/\lambda$ by an arbitrary factor $\lambda>0$, hence to scale the coordinates $\uplus\mapsto\lambda^{-1}\uplus$ and $\uminus\mapsto\lambda\uminus$ globally.  This is further discussed in Remark~\ref{rem:invariance}.

\begin{figure}[t]\centering
\begin{tikzpicture}
\fill[opacity=.5,color=black!20!white] (0,0) -- (-3.4,3.4)
.. controls +(30:2) and +(150:2) .. (3.4,3.4) -- cycle;
\filldraw[opacity=.5,color=black!20!white] (-4,-4) -- (0,0) -- (-1.2,1.2) -- (-5.2,-2.8) -- cycle;
\filldraw[opacity=.5,color=black!20!white] (4,-4) -- (0,0) -- (1.2,1.2) -- (5.2,-2.8) -- cycle;
\node[below] at (0.,-0.2) {$\Pcal_{0}$};
\node[below] at (0,3.25) {$\Mup$};
\node[below] at (0,2.75) {(interaction region)};
\node[below] at (-2.85,2.75) {$\Ncalplus_0$};
\node[below] at (2.85,2.75) {$\Ncalminus_0$};
\node[below] at (-2.6,-2.75) {$\Ncalminus^{(0)}_0$};
\node[below] at (2.55,-2.75) {$\Ncalplus^{(0)}_0$};
\draw[->] (-4,-4) -- (3.5,3.5);
\node[right] at (3.5,3.5) {$\uminus$};
\draw[->] (4,-4) -- (-3.5,3.5);
\node[left] at (-3.5,3.5) {$\uplus$};
\draw[loosely dashed] (-1,1) -- (-5,-3);
\draw[loosely dashed] (-0.8,0.8) -- (- 4.8,-3.2);
\draw[loosely dashed] (-0.6,0.6) -- (- 4.6,-3.4);
\draw[loosely dashed] (-0.4,0.4) -- (- 4.4,-3.6);
\draw[loosely dashed] (-0.2,0.2) -- (- 4.2,-3.8);
\node at (-5.9,-3.4) {$\uplus \in  [\uplus_\star^-, \uplus_\star^+]$};
\node at (-5.9,-3.9) {$\Mplus$ (incoming pulse)};
\node at (-5,0) {$\Mleftfar$ (left-hand region)};
\draw[loosely dashed] (1,1) -- (5,-3);
\draw[loosely dashed] (0.8,0.8) -- (4.8,-3.2);
\draw[loosely dashed] (0.6,0.6) -- (4.6,-3.4);
\draw[loosely dashed] (0.4,0.4) -- (4.4,-3.6);
\draw[loosely dashed] (0.2,0.2) -- (4.2,-3.8);
\node at (5.9,-3.4) {$\uminus \in  [\uminus_\star^-, \uminus_\star^+]$};
\node at (5.9,-3.9) {$\Mminus$ (incoming pulse)};
\node at (5,0) {$\Mrightfar$ (right-hand region)};
\node[below] at (0,-2.75) {$\Mdown$};
\node[below] at (0,-3.25) {(Minkowski region)};
\end{tikzpicture}
\caption{\label{fig:fourregions}{\bf Collision of two plane waves of finite duration.}  Spacetime is exactly flat away from the shaded regions.}
\end{figure}
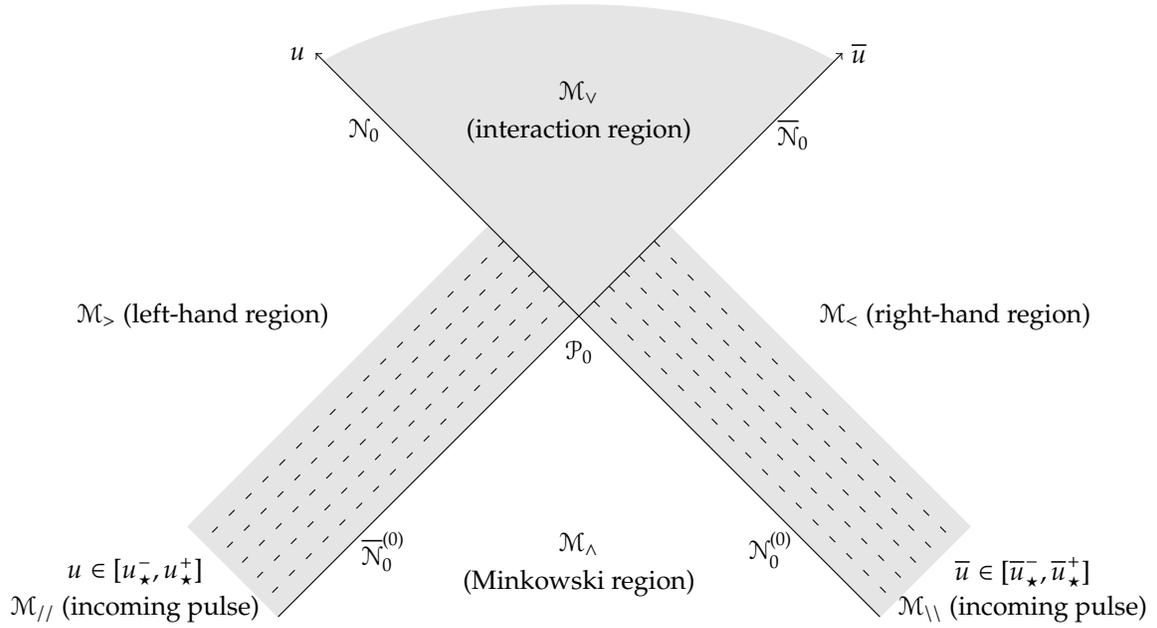

\paragraph{Decomposition of the metric.} 

Three fundamental geometric notions are now introduced. 

\bei 

\item {\bf Conformal quotient factor.}
In the coordinates $(\uplus, \uminus)$ that we just introduced, the quotient metric $g_\Qcal$ induced on $\Qcal$ can be written in the form  
\[
g_\Qcal  = -2 \Omega\, d\uplus d\uminus,
\]
where the conformal factor $\Omega = e^{2\omega}$ depends upon the null variables $(\uplus,\uminus)$, only. In our construction, 
generically, $\Omega$ tends to zero or blows up when singularity hypersurfaces in $\Mcal$ are approached.

\item {\bf Area function.} 
The (signed) area of the surface of $\Tbb^2$ symmetry denoted here by $A = \pm e^{2a}$ (when $A \neq 0$) 
also depends upon $(\uplus,\uminus)$ only and may vanish at geometric singularities. 
As we will show later, the evolution of the signed area function in each null direction is given by the {\bf Raychaudhuri equation} (cf.~Section~\ref{sec:81}), that is, 
\bel{equa-Rayc}
a_{\uplus\uplus}  + (a_\uplus)^2 - 2 \omega_\uplus \, a_\uplus = - \Eplus, 
\qquad
a_{\uminus\uminus} + (a_\uminus)^2 - 2 \omega_\uminus \, a_\uminus = -  \Eminus,  
\ee
in which $\Eplus, \Eminus \geq 0$ denote the (gravitational and matter) {\bf energy fluxes} in the null directions $\uplus,\uminus$, respectively. The expression of $\Eplus,\Eminus$ in terms of the matter field~$\phi$ and the metric coefficient $\psi$ (defined in the next paragraph) is given in \eqref{eq:219} below. (See also  Section~\ref{sec:81}.)  

\item {\bf Modular parameter.} One more geometric coefficient is required in order to fully describe the spacetime geometry, that is, the so-called {\bf modular parameter} $\psi$ associated with the two Killing fields spanning $\Tbb^2$. Interestingly, this coefficient satisfies the same wave equation as the matter field~$\phi$:
\[
\Box_g \psi = 0. 
\]

\eei 

\paragraph{The four spacetime domains.}

We decompose the spacetime in four main regions, that is, 
$
\Mleftfar \cup \Mdown \cup \Mup \cup \Mrightfar$
(cf.~Figure~\ref{fig:fourregions}), 
which together with the corresponding boundaries determines a partition of $\Mcal$. 
\bei
\item {\bf Past domain.}
A region containing past infinity is flat and unperturbed by the gravitational radiation, i.e.
\[
\Mdown   = \big\{ \uplus < 0; \quad \uminus < 0 \big\}   \qquad \mbox{ (flat region),}
\]
and its boundary components are the two null hypersurfaces $\Ncalplus^{(0)}_0=\{\uplus<0,\uminus=0\}$ and $\Ncalminus^{(0)}_0=\{\uplus=0,\uminus<0\}$ defining the incoming wave fronts, which themselves share a two-dimensional boundary $\Pcal_0=\{\uplus=0,\uminus=0\}$.
 
\item {\bf Two incoming wave domains.}
The regions 
\[
\aligned
\Mleftfar & = \big\{ 0 < \uplus; \quad \uminus < 0 \big\}    && \qquad \mbox{(left-hand wave domain),}
\\
\Mrightfar & = \big\{ \uplus < 0; \quad 0 < \uminus \big\} && \qquad \mbox{(right-hand wave domain),}
\endaligned
\]
contain the incoming gravitational waves, whose geometry will be described in Section~\ref{sec:63}. 

\item {\bf Interaction domain.}
Finally, the region after the collision is denoted by
\[
  \Mup = \big\{ \uplus >0; \quad \uminus >0 \big\}  \qquad \mbox{ (interaction domain),}
\]
and its boundary components are the two null hypersurfaces $\Ncalplus_0=\{\uplus>0,\uminus=0\}$ and $\Ncalminus_0=\{\uplus=0,\uminus>0\}$, with common boundary~$\Pcal_0$ along which the waves begin interacting.
We will pose the gravitational collision problem by prescribing data on these two hypersurfaces.
In our construction below, we will need to decompose $\Mup$ further, by introducing singularity hypersurfaces of spacelike, null, or timelike type.

\eei

For the interpretation of the problem as the collision of gravitational pulses
it is convenient to introduce further subsets of $\Mleftfar$ and $\Mrightfar$
defined by
\[
\aligned
\Mplus & = \big\{ 
\uplus_\star^- < \uplus < \uplus_\star^+; \quad \uminus < 0 \big\}   && \qquad \mbox{ (left-incoming pulse),}
\\
    \Mminus & = \big\{ \uplus < 0; \quad 
    \uminus_\star^- < \uminus < \uminus_\star^+ \big\}  && \qquad \mbox{ (right-incoming pulse),}
\endaligned
\]
which are thus supported in some given intervals $(\uplus_\star^-,\uplus_\star^+) \subset (0, +\infty)$ and $(\uminus_\star^-, \uminus_\star^+)  \subset (0, +\infty)$ of finite or infinite duration. 
The parameters $\uplus_\star^+ - \uplus_\star^-$ and $\uminus_\star^+ - \uminus_\star^-$ represent (up to some normalization) the duration of the two incoming pulses.
Two regimes may be of particular interest:
the {\bf impulsive limit} $\uplus_\star^\pm, \uminus_\star^\pm \to 0$ and
the {\bf eternal collision} corresponding to $\uplus_\star^-= \uminus_\star^- = 0$ and $\uplus_\star^+= \uminus_\star^+ = + \infty$.  

\subsection{Definitions: plane waves and gravitational pulses}
\label{sec:63}

\paragraph{Raychaudhuri equation for plane waves.}

To specify the initial data for our collision problem we begin with the description of a single right-moving plane wave. The metric coefficients for a plane wave can be chosen to depend on a single null variable, say $\uplus$ in our notation, so that from the first equation in \eqref{equa-Rayc} satisfied for the function $a=a(\uplus)$, we have  
\bel{eq:bequa}
a_{\uplus\uplus}  + (a_\uplus)^2 
=   
2 \omega_\uplus \, a_\uplus - E,  
\ee
while all remaining Einstein equations are trivially satisfied for solutions depending on $\uplus$ only (as can  be checked from  in \eqref{equa-EinEq} below). 

We now regard \eqref{eq:bequa} as an {\sl equation for the metric coefficient $a$} and we solve it as follows. 
\bei 

\item We note that our geometric construction of the coordinates sets $\Omega=e^{2\omega}=1$ in the initial data set, hence in the whole plane wave.
  Alternatively, had we started from some other coordinate system $(\vplus,\vminus)$ we could have taken advantage that $\omega$ depends upon $\vplus$, only, and changed $\vplus$ into $\uplus(\vplus)$ with $\uplus'(\vplus) = \Omega(\vplus)$ (which is assumed to be positive). Then, $2 \Omega d\vplus d\vminus = 2 d\uplus d\vminus$, and without loss of generality we can thus assume that $\Omega= e^{2\omega} = 1$.
  In either approaches, the range of the variable $\uplus$ may be {\sl bounded}, in which case geodesics with fixed $x,y$ are incomplete as their proper time is proportional to~$\uplus$.
  To avoid this situation we therefore assume an {\sl unbounded} range for the variable~$\uplus$.

\item It should be pointed out that geodesics with non-constant $x,y$ reach $\uminus\to+\infty$ and infinite $x$ or~$y$ in finite proper time (or affine parameter), thus the region described by coordinates $(\uplus,\uminus)$ does not cover the whole single-wave spacetime.  However, in the full problem of colliding plane gravitational waves these geodesics simply enter the interaction region $\uminus>0$ in finite affine parameter and a deeper analysis, done later, is necessary to determine geodesic completeness.

\item Consequently, for a plane wave, the Raychaudhuri equation is the Riccati differential equation in the variable $\uplus$: 
\bel{eq:Ray}
a_{\uplus\uplus}  + (a_\uplus)^2 = -E \leq 0.
\ee 

\eei

\paragraph{The notion of gravitational pulse.}

Since \eqref{eq:Ray} is a Riccati equation, the coefficient $a$ typically blows up in a finite time (in the parameter $\uplus$).
This is readily seen by introducing the new coefficient $F$, which we also call the {\bf area function}, defined by 
\be
F^2 = A = e^{2a} \quad \text{ (when $A \neq 0$),} 
\ee
which, in view of \eqref{eq:Ray}, satisfies the (now linear) equation. 
\bel{equa:Flinear}
F_{\uplus\uplus}  = -E \, F. 
\ee
In principle, two initial data should be specified, say at $\uplus = 0$. However, by rescaling the coordinates $(x,y)$ if necessary, without loss of generality we can arrange that $F(0) = 1$. Thus, prescribing some data $F_0'$ we impose the initial conditions 
\bel{eq:ic}
F(0) = 1, \qquad F_\uplus(0) = F_0', 
\ee 
and the solution $F = F(\uplus)$ to \eqref{equa:Flinear} is then uniquely and globally defined.
The incoming radiation energy $E=E(\uplus)$ is a prescribed data of the problem (see next section) and, for the collision problem, it is natural to introduce the notion of a {\bf plane gravitational pulse}\footnote{Sometimes simply referred to as a gravitational wave in what follows}, defined as follows: 

\bei 

\item {\bf Trivial past.} 
We assume that, for all sufficiently large negative values $\uplus$ and, in fact, for all $\uplus<0$, 
the spacetime is flat and empty, so that
\[
F(\uplus) = 1, \qquad E(\uplus) =0, \qquad \text{for $\uplus < 0$}.
\]
Observe that such a trivial past imposes the initial data $\Fplus_0'=0$, as otherwise $R_{\uplus\uplus}$ would suffer a Dirac-type singularity, as seen in the first equation in~\eqref{equa:Ric1} below.

\item {\bf Finite pulse or half-infinite pulse.} Two cases of interest will arise whether the gravitational wave is compactly supported on an interval included in $[0, +\infty)$ or else is of infinite duration. 
\eei

\paragraph{Several kinds of pulses.}

Specifically, in a final section of this paper it will be useful to have the notation $[\uplus_\star^-, \uplus_\star^+]$ for the support of the pulse, with 
$0 \leq \uplus_\star^- \leq \uplus_\star^+ \leq + \infty$, and to write 
\[
E(\uplus) = 0 \qquad \text{for $\uplus < \uplus_\star^-$ or $\uplus > \uplus_\star^+$}.
\]
We can distinguish between several regimes: (1) A {\bf finite pulse} if $\uplus_\star^+$ is finite and then the geometry eventually approaches Minkowski geometry. (2) A {\bf short pulse} if $\uplus_\star^+ - \uplus_\star^-$ is small and, especially, the limit $\uplus_\star^+ - \uplus_\star^- \to 0$ is relevant. 
(3) An {\bf infinite pulse} if $\uplus_\star^+$ is infinite, and in this regime the geometry evolves forever.

\subsection{Prescribing the two incoming radiation data} 
\label{sec:64}

\paragraph{Incoming energy radiation.} 

We now return to the global collision problem and we are in a position to complete the description by introducing now two gravitational pulses, that is, a right-moving plane gravitational pulse defined for $\uplus \geq 0$ and a left-moving pulse defined for $0\leq \uminus$ ---the latter being
defined by replacing $\uplus$ by $\uminus$ throughout the construction in Section~\ref{sec:63}.
We are thus given two {\sl incoming energy functions}, denoted by $\Eplus, \Eminus$ and representing the energy fluxes of two pulses moving in opposite directions. 
These fluxes are determined
via the identities
\bel{eq:219} 
\Eplus = \psiplus_\uplus^2 + 4 \pi \phiplus_\uplus^2 ,
\qquad 
\Eminus = \psiminus_\uminus^2 + 4 \pi \phiminus_\uminus^2,
\ee
in which the functions $(\psiplus, \phiplus) = (\psiplus, \phiplus)(\uplus)$ and $(\psiminus, \phiminus) = (\psiminus, \phiminus)(\uminus)$ are prescribed data for the characteristic initial value problem posed on the null hypersurfaces $\Ncalplus_0$ and $\Ncalminus_0$, respectively.

\paragraph{Area functions of the incoming waves.} 

From these data, we determine the corresponding two area functions $\Fplus = \Fplus(\uplus)$ and $\Fminus = \Fminus(\uminus)$ by solving the Einstein equations \eqref{eq-a1-baru} and \eqref{eq-a1-ubar} restricted to the two initial hypersurfaces, i.e.
\bel{equa:Ffunctions}
\Fplus_{\uplus \uplus}
 = -  \Eplus \, \Fplus
\, \text{ in } \Mleftfar =  \big\{ 0 < \uplus; \, \uminus < 0 \big\};
\quad \qquad
\Fminus_{\uminus\uminus}
 =- \Eminus \, \Fminus
\, \text{ in } \Mrightfar = \big\{ \uplus < 0; \, 0 < \uminus \big\}.
\ee
The conditions \eqref{equa:Ffunctions} are second-order differential equations:
\bei 

\item The past of the two incoming waves is assumed to be trivial, that is,  
\[
\Fplus(\uplus) = 1 \, \text{ for all }  \uplus < 0; 
\qquad
\Fminus(\uminus) = 1 \, \text{ for all }   \uminus < 0.
\]

\item We supplement the equations with initial conditions of the form \eqref{eq:ic} for each wave with $\Fplus_0'=0$ and $\Fminus_0'=0$ due to the waves having a trivial past, i.e.
\bel{init-cond}
\Fplus(0) = \Fminus(0) = 1,
\qquad  
\Fplus_\uplus(0) = 0;
\qquad
\Fminus_\uminus(0) = 0.
\ee
\eei

\paragraph{Geometry of the past of the interaction domain.}

By construction, the areal coefficient $A$ is identically $1$ in the domain $\Mdown$ and we can also choose the metric coefficients $\Omega, \psi$ and matter field~$\phi$ within this domain to ensure that the spacetime is empty and flat: 
\[
\Omega=A=1,\quad \psi=\phi=0
\quad \text{ in the region } \Mdown.
\]

\paragraph{Initial data for the metric and matter.} 

In both domains $\Mleftfar$ and $\Mrightfar$, the geometry and matter content consist of plane waves determined by the prescribed data 
\[
\aligned
&
\psiplus= \psiplus(\uplus), \qquad \phiplus= \phiplus(\uplus), \qquad 
&& \text{ on the hypersurface } \Ncalplus_0, 
\\
&
\psiminus= \psiminus(\uminus), \qquad \phiminus= \phiminus(\uminus), \qquad 
&& \text{ on the hypersurface } \Ncalminus_0, 
\endaligned
\]
which also determines \eqref{eq:219}. 
Hence, the unknowns $\omega, a,\psi, \phi$ for the system \eqref{equa-EinEq} below are prescribed in the two incoming wave domains 
as 
\be
\aligned
& \Omega= 1, \quad A = \Fplus(\uplus)^2, \quad \psi = \psiplus(\uplus),
\quad \phi= \phiplus(\uplus)
\qquad \mbox{ in the domain } \Mrightfar,
\\
& \Omega= 1, \quad A = \Fminus(\uminus)^2, \quad \psi = \psiminus(\uminus),
\quad \phi= \phiminus(\uminus) 
\qquad \mbox{ in the domain } \Mleftfar. 
\endaligned
\ee
This completes the description of the global collision problem.

\subsection{Analysis of the Raychaudhuri equation}
\label{sec:Ray-analysis}

Before turning in Section~\ref{section---7} to the geometry and matter content in the interaction spacetime domain~$\Mup$, we summarize here the properties of the solutions to the differential equations~\eqref{equa:Ffunctions} satisfied by the area functions $\Fplus, \Fminus$.

\begin{proposition}[Properties of the global area functions]
 \label{prop:AreaFuncSm}
  Given sufficiently regular incoming radiation data prescribing energy fluxes $\Eplus, \Eminus$,
  there exist unique globally defined regular functions $\Fplus, \Fminus$ that satisfy the differential equations~\eqref{equa:Ffunctions} and the initial conditions~\eqref{init-cond}.
  These area functions $\Fplus$ and~$\Fminus$ have the following properties.
\bei 

\item {\bf Zeros.}
  They vanish at locally finitely many points, denoted in increasing order by
\[
  \uplus_0^{(i)}\in (\uplus_\star^-,+\infty) \text{ \ for } 0\leq i<\nplus \quad \text{and}\quad
  \uminus_0^{(i)}\in(\uminus_\star^-,+\infty) \text{ \ for } 0\leq i<\nminus,
\]
  respectively, where the numbers of zeros $\nplus,\nminus\in\ZZ_{\geq 0}\cup\{\infty\}$ can be infinite.
  One only has $\nplus=0$ or $\nminus=0$ (no zeros) if the corresponding energy flux $\Eplus$ or $\Eminus$ is identically vanishing.

\item {\bf Sign.}  They change sign at each zero.  For $\Fplus$, this means $(-1)^j\Fplus>0$ on
  each interval $(\uplus_0^{(j-1)},\uplus_0^{(j)})$, with the extended notations $\uplus_0^{(-1)}\coloneqq-\infty$ and $\uplus_0^{(\nplus)}\coloneqq+\infty$.

\item {\bf Convexity.} In each of these intervals $(\uplus_0^{(j-1)},\uplus_0^{(j)})$ their derivative is monotonic: non-increasing on intervals where the function is positive, and non-decreasing on the others.

\item {\bf Vacuum.}
  Finally, $\Fplus$, resp.~$\Fminus$, is an affine function in each interval on which one has vanishing energy flux~$\Eplus$, resp.~$\Eminus$.  In particular, it is constant equal to~$1$ in the initial
  interval $(-\infty,\uplus_\star^-]$, resp.\ $(-\infty,\uminus_\star^-]$.
  Generically this is the only interval with constant~$\Fplus$, resp.~$\Fminus$.
\eei
\end{proposition}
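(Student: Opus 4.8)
The plan is to treat \eqref{equa:Ffunctions} purely as a linear second-order ODE in one variable, writing $F,E,u$ for either of $(\Fplus,\Eplus,\uplus)$ or $(\Fminus,\Eminus,\uminus)$, so that the equation reads $F'' = -E\,F$ with $E\ge 0$ (the nonnegativity coming from \eqref{eq:219}), with data $F(0)=1$, $F'(0)=0$ and $E\equiv 0$ on $(-\infty,0)$. First I would dispatch existence, uniqueness and regularity: the integral reformulation $F(u)=1-\int_0^u (u-t)\,E(t)\,F(t)\,dt$ is a linear Volterra equation, so a contraction/continuation argument gives a unique solution on all of $\RR$, and differentiating it twice shows $F$ gains two orders of regularity over $E$, hence is as regular as the prescribed radiation data demands. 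On $(-\infty,0)$ the vanishing of $E$ forces $F''=0$ together with $F(0)=1,F'(0)=0$, so $F\equiv 1$ there, consistent with the prescription; the same computation on $(0,\uplus_\star^-]$ (resp.\ $(0,\uminus_\star^-]$) extends this to $F\equiv 1$ on the whole initial interval, which is the first half of the \textbf{Vacuum} claim.

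The substantive point is the existence of a zero whenever $E\not\equiv 0$ (Penrose focusing). On any interval where $F>0$ one has $F''=-EF\le 0$, so $F'$ is non-increasing there and \emph{strictly} decreasing across any subinterval on which $E>0$; since $F'(0)=0$ and $E$ is positive somewhere, either $F$ has already vanished or, once past a chunk of positive $E$, we reach a parameter $u_1$ with $F(u_1)>0$ and $F'(u_1)=-c<0$. A continuation argument on $u^\star\coloneqq\sup\{u: F>0 \text{ on }[u_1,u]\}$ then uses $F'\le -c$ on $[u_1,u^\star)$ to get the linear bound $F(u)\le F(u_1)-c(u-u_1)$, forcing $u^\star<\infty$ and $F(u^\star)=0$. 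This gives $\nplus=0$ (resp.\ $\nminus=0$) only when the corresponding flux vanishes identically. At any zero $u_0$ one cannot also have $F'(u_0)=0$: the first-order linear system for $(F,F')$ would then yield $(F,F')\equiv 0$, contradicting $(F,F')(0)=(1,0)$. Hence every zero is \emph{simple}, so the zeros are isolated and there are finitely many in each compact interval — this produces the increasing enumeration $\uplus_0^{(i)}$, $\uminus_0^{(i)}$, with at most countably many zeros, and genuinely infinitely many (escaping to $+\infty$) when $E$ fails to decay, by a Sturm-type comparison with a constant-coefficient oscillator.

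The remaining items are then formal. Between consecutive zeros $F$ keeps a constant sign, and at each simple zero it changes sign; combined with $F\equiv 1>0$ on $(-\infty,\uplus_0^{(0)})$ (resp.\ $(-\infty,\uminus_0^{(0)})$), induction gives $(-1)^j F>0$ on $(\uplus_0^{(j-1)},\uplus_0^{(j)})$, which is the \textbf{Sign} statement. The \textbf{Convexity} statement is immediate from $F''=-EF$: where $F>0$ this is $\le 0$, so $F'$ is non-increasing; where $F<0$ it is $\ge 0$, so $F'$ is non-decreasing. On any interval where $E=0$ we have $F''=0$, so $F$ is affine, finishing \textbf{Vacuum}; and since, by the focusing analysis, $F'$ is strictly monotone across each pulse, having $F'=0$ again on a later vacuum interval requires a non-generic cancellation between pulses, which is the genericity remark. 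The step I expect to demand the most care is precisely the focusing argument — making the ``$F'$ stays bounded away from zero while $F>0$'' estimate rigorous via the continuation argument and thereby ruling out a positive concave solution with eventually negative slope surviving for all $u$; everything after that is standard ODE bookkeeping from the sign of $F''=-EF$ and uniqueness.
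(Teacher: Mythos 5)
Your proposal is correct and follows essentially the same route as the paper: global existence via the Volterra/Picard iteration (which the paper writes out as an explicit convergent series), double zeros ruled out by ODE uniqueness (giving simple, hence locally finite, zeros and the sign alternation), convexity read off from the sign of $F''=-EF$, and the focusing argument showing that the absence of zeros forces $E\equiv 0$. The only cosmetic difference is that the paper compresses the focusing step into ``a positive concave function must be constant, hence $E\equiv 0$,'' whereas you make the same concavity mechanism quantitative via the linear-decay estimate $F(u)\le F(u_1)-c\,(u-u_1)$.
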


\begin{proof}
  The global existence of~$\Fplus$ (and likewise~$\Fminus$) can be seen from the (otherwise useless) explicit formula
  \[
  \begin{pmatrix} \Fplus(\uplus) \\ \Fplus_\uplus(\uplus) \end{pmatrix}
  = \left( \sum_{n\geq 0} \int_{0\leq\uplus_n\leq\dots\leq\uplus_1\leq\uplus} \prod_{i=1}^n
  \begin{pmatrix} 0 & 1 \\ -\Eplus(\uplus_i) & 0 \end{pmatrix} \, d^nu \right)
  \begin{pmatrix} \Fplus(0) \\ \Fplus_\uplus(0) \end{pmatrix} , \qquad
  \uplus > 0 .
  \]
  All components of the product of matrices are bounded by $\prod_{i=1}^n(\Eplus(\uplus_i)+1)$ hence the $n$-th term is bounded by $(\uplus+\lVert \Eplus\rVert_{L^1})^n/n!$, with a $1/n!$ factor coming from the domain of integration.  This ensures that the series converges for all~$\uplus$.  Checking it is a solution is straightforward.
  We now prove the properties in turn, concentrating on~$\Fplus$ for definiteness.

  {\sl Zeros.} If the set of zeros of~$\Fplus$ has an accumulation point, then $\Fplus=\Fplus_\uplus=0$ at that point.  Solving the second order differential equation of~$\Fplus$ starting from this point (and toward the past) gives $\Fplus=0$ identically, which contradicts the initial data $\Fplus(\uplus)=1$ for $\uplus<0$.
  If $\Fplus$~has no zero, then $\Fplus(\uplus)>0$ for all~$\uplus$ by continuity, so $\Fplus_{\uplus\uplus}=-\Eplus\Fplus\leq 0$ because $\Eplus\geq 0$.  Then $\Fplus$ is a positive concave function, hence a constant, so $\Fplus_{\uplus\uplus}=-\Eplus\Fplus=0$, which requires $\Eplus=0$ identically.

  {\sl Sign.} The derivative $\Fplus_\uplus$ may not vanish at a zero of~$\Fplus$ otherwise the solution~$\Fplus$ would vanish identically (by the same argument as point~1).  Thus $\Fplus$~changes sign at each zero.

  {\sl Convexity.} The second derivative $\Fplus_{\uplus\uplus}=-\Eplus\Fplus$ is zero or has a sign opposite to that of~$\Fplus$, because $\Eplus\geq 0$.

  {\sl Vacuum.} When the energy flux~$\Eplus$ vanishes the area function~$\Fplus$ is a solution of $\Fplus_{\uplus\uplus}=0$, namely is affine.  This occurs in particular on the intervals $(-\infty,\uplus_\star^-]$ and $[\uplus_\star^+, + \infty)$ (when non-empty), and for the first one the initial conditions imply that $\Fplus=1$.
  Conversely, having a constant~$\Fplus$ on any other interval requires $\Eplus=0$ on that interval (vacuum), together with a non-generic fine-tuning of the value of~$\Fplus_\uplus$ at the start of the interval.
\end{proof}


\section{Global spacetime geometry: singularities, islands, and diamonds} 
\label{section---7}

\subsection{The partition in spacetime islands} 
\label{sec:singularities}

\paragraph{The global area function.}

Throughout our analysis, we consider (piecewise) regular spacetimes.
The area function $A$ obeys a wave equation $A_{\uplus\uminus}=0$ on each domain of regularity,
and, later on, we select junction conditions that ensure this wave equation holds everywhere including at singularity hypersurfaces.
The solution~$A$ is thus explicitly given as the sum of a function of $\uplus$ and a function of~$\uminus$ .
From the characteristic initial data we obtain 
\bel{equa:Bexpli}
A(\uplus, \uminus) = \Fplus(\uplus)^2 + \Fminus(\uminus)^2 - 1, 
\ee
since it must coincide with $\Fplus(\uplus)^2$ or with $\Fminus(\uminus)^2$ along each
initial hypersurface.  In particular, $A$~vanishes along this hypersurface at the zeroes of the functions 
$\Fplus$ and $\Fminus$. We rely on properties of these functions determined in Section~\ref{sec:Ray-analysis} to partition the spacetime domain~$\Mup$.

\paragraph{Singularity locus.}

The area~$A$ of symmetry orbits
arises as a coefficient in the principal part of the system of reduced Einstein equations~\eqref{equa-EinEq} below,
hence the behavior of their solutions can be expected to be singular when $A$~vanishes.
We define the {\bf singular locus} of a plane-symmetric colliding spacetime as the set 
\bel{Lscr}
\Lscr 
\coloneqq \big\{ \uplus, \uminus \in [0, +\infty) \, \big/ \, A(\uplus,\uminus) = 0 \big\}
\subset \Cl(\Mup), 
\ee
where $\Cl(\Mup)$ denotes the closure, that is, the union of $\Mup$ with its boundary $\Ncalplus_0\cup\Pcal_0\cup\Ncalminus_0$. 
As we show below, the set~$\Lscr$ consists of the union of locally finitely many open sets (generically none), and of
locally finitely many curves in $(\uplus, \uminus)$ which correspond to
singularity hypersurfaces within $\Mcal$ and across which $A$~changes sign. 
We emphasize that while the coefficient $A$ was defined to be non-negative on the initial hypersurfaces, it {\sl changes sign} in the future of these hypersurfaces. Geometrically, $A$~represents (a signed version of) the {\sl area density of the surfaces of plane symmetry,} which degenerates along~$\Lscr$. We will show that, at least for ``generic'' initial data, the singularity hypersurfaces are genuine curvature singularities.

\paragraph{A first partition of interest.}

Partitioning the spacetime domain $\Mup$ along the singular locus~$\Lscr$ gives a natural decomposition into spacetime domains in which $A$ keeps a constant sign or vanishes identically (which is a non-generic situation). 
We refer to such domains as {\bf spacetime islands.}
These domains are limited by singularity hypersurfaces that may continuously change type from spacelike to null and to timelike.
Their structure is rather intricate, so for practical purposes a less natural partition defined below is more helpful.

\paragraph{The local coordinate functions.} 

From the initial area functions $\Fplus, \Fplus$, we define 
\be
\rplus(\uplus) = 2 \Fplus(\uplus)^2 - 1, \qquad \rminus(\uminus) = 2 \Fminus(\uminus)^2 - 1,
\ee
which we refer to as the {\bf local coordinates}. Observe that they satisfy $\rplus(0) = \rminus(0) = A(0,0) = 1$ on the initial plane $\Pcal_0$.
Using \eqref{equa:Bexpli}, the coefficient $A$ takes the particularly simple form
\be
A(\uplus, \uminus)  
= \frac12 \big( \rplus (\uplus)+ \rminus (\uminus) \big). 
\ee

The function $\rplus$ can only be used as a local coordinate in intervals where it is
monotonic.  By Proposition~\ref{prop:AreaFuncSm}, its derivative $\rplus'$ changes sign
locally finitely many times, only, so we can decompose~$\RR$ into intervals
$\Iplus_j=(\uplus_j,\uplus_{j+1})$, $0\leq j<\mplus$ (with $\uplus_0=-\infty$) in which
$\sgn(\rplus')\in\{+1,-1,0\}$ remains the same throughout the interval.  Denote
$\epsplus_j$ this sign.  The number $\mplus$ of intervals may be finite or infinite.  By
Proposition~\ref{prop:AreaFuncSm}, $\rplus$~can only remain constant on an interval (so
$\epsplus_j=0$) if that value is a local maximum, namely if $\epsplus_{j-1}=+1$ and $\epsplus_{j+1}=-1$
(with the convention that $\epsplus_{-1}=+1$).  Another
consequence of Proposition~\ref{prop:AreaFuncSm} is that $\rplus=-1$ at its local minima.
Thus, in an interval $\Iplus_j$ with $\rplus'>0$, resp.\ $\rplus'<0$, the range of values
of~$\rplus$ is $(-1,\rplus(\uplus_j))$ resp.\ $(-1,\rplus(\uplus_{j+1}))$.
The same comments and properties hold for $\rminus$ and we define $\Iminus_j$ and
$\epsminus_j$, $0\leq j<\mminus$ in the same way according to the sign of~$\rminus'$.
These observations are depicted in Figure~\ref{fig:rmonotonicity}.

\begin{figure}[t]\centering
\begin{tikzpicture}
    \draw[->,thick] (-1,0) -- (8,0) node [xshift=1ex] {$\uplus$};
    \draw[->,thick] (0,-1.2) -- (0,2) node [yshift=1ex]{$\rplus$};
    \draw[dotted] (-.2,-1) node [left=-.5ex] {\scriptsize $\rplus=-1$} -- (7,-1);
    \draw (-1,1) -- (0,1) node [above left=-.5ex] {\scriptsize $\rplus=1$}
    -- (0.8,1)
    .. controls +(0:.4) and +(180:.4) .. (1.8,-1)
    .. controls +(0:.4) and +(180:.4) .. (3.1,-.2)
    -- (4.3,-.2)
    .. controls +(0:.4) and +(180:.4) .. (5.1,-1)
    .. controls +(0:.4) and +(180:.4) .. (5.6,1.4)
    .. controls +(0:.4) and +(180:.4) .. (6.3,-1)
    parabola (7.6,2)
    ;
    \foreach\i/\x in {1/.8, 2/1.8, 3/3.1, 4/4.3, 5/5.1, 6/5.6, 7/6.3} {
      \draw[dashed] (\x,-2.15) -- (\x,.05);
      \draw[dashed] (\x,.35) -- (\x,1.7);
      \node at (\x,.2) {\scriptsize $\uplus_{\i}$};
    }
    \foreach \i/\x/\e in {0/0/0, 1/1.3/-1, 2/2.45/+1, 3/3.7/0, 4/4.7/-1, 5/5.35/+1, 6/5.95/-1, 7/6.9/+1} {
      \node at (\x,-1.5) {$\Iplus_{\i}$};
      \node at (\x,-1.9) {$\e$};
    }
    \node at (-.5,-1.9) {\vphantom{0}\smash{$\epsplus_j\colon$}};
  \end{tikzpicture}
  \caption{\label{fig:rmonotonicity}{\bf Local coordinate~$\rplus$ for a particular example of plane wave.}
    The intervals $\Iplus_j$ and signs $\epsplus_j$ describe monotonicity properties of~$\rplus$.
    In intervals where $\rplus$ is constant the initial data must be $\Eplus=0$, namely
    there is no incoming wave in these regions of~$\Mleftfar$, but the converse is not true.
    As depicted, $\rplus$ is
    always equal to~$-1$ at its local minima, but local maxima can take any value.
    If the pulse has a finite duration, then in the final interval either $\rplus$ is
    constant or it grows quadratically.}
\end{figure}
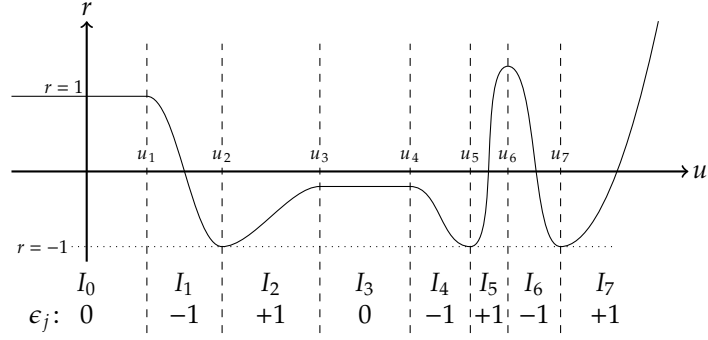

\subsection{The partition in monotonicity diamonds} 
\label{sec:diamonds}

\paragraph{A second partition of interest.}

We partition spacetime $\Mup$ into a locally finite collection of non-intersecting cells
$\Delta_{ij}=\{\uplus\in\Iplus_i,\uminus\in\Iminus_j\}$ such that each derivative
$\rplus'(\uplus)$ and $\rminus'(\uminus)$ has a constant sign $\pm 1$ or~$0$ in~$\Delta_{ij}$,
\[
\Mup = \bigcup_{i, j \geq 1} \Delta_{ij}.
\]
We refer to such a cell $\Delta_{ij}$ as a {\bf monotonicity diamond} because the area function $A$ depends monotonically on~$\uplus$ and on~$\uminus$.
As we will see, the function $A=(\rplus(\uplus) + \rminus(\uminus))/2$ may change sign within one diamond.
Note that the index $i$ (respectively~$j$) has a finite range if $\rplus$ (or~$\rminus$) is eventually monotonic.  This occurs if the incoming radiation is compactly supported or decays fast enough.
In that case some of the diamonds $\Delta_{ij}$ extend to infinity in (at least) one null direction.

\begin{figure}[t]\centering
  \begin{tikzpicture}[scale=.8]
    \begin{scope}[shift={(-2.2,0)}]
      \node(F) at (0,1.8) {\scriptsize future};
      \node(P) at (0,.3) {\scriptsize past};
      \draw[->](P)--(F);
      \begin{scope}[shift={(0,-2.1)}]
        \node(F) at (0,1.6) {\scriptsize future};
        \node(P) at (0,.3) {\scriptsize past};
        \draw[->](P)--(F);
      \end{scope}
    \end{scope}
    \begin{scope}[shift={(0,0)}]
      \draw[->] (0,0) -- (-1.3,1.3) node [below=0ex] {\scriptsize $\uplus$};
      \draw[->] (0,0) -- (1,1) node [below=0ex] {\scriptsize $\uminus$};
      \draw (-1.2,1.2) -- (-.3,2.1) -- (.9,.9);
      \draw (-.8,.8) .. controls +(45:.3) and +(160:.3) .. (-.2,.65)
      .. controls +(-20:.3) and +(135:.4) .. (.7,.7);
      \begin{scope}[shift={(0,-1.3)}]
        \draw[densely dashed,->] (.4,-.4) -- (-.65,.65) node [above=-.5ex] {\scriptsize $\rplus$};
        \draw[densely dashed,->] (-.4,-.4) -- (.65,.65) node [above=-.5ex] {\scriptsize $\rminus$};
        \draw (0,-.7) -- (-.9,.2) -- (0,1.1) -- (.9,.2) -- cycle;
        \draw (-.7,0) -- (.7,0);
      \end{scope}
    \end{scope}
    \begin{scope}[shift={(2.4,0)}]
      \draw[->] (0,0) -- (-1.3,1.3);
      \draw[->] (0,0) -- (1,1);
      \draw (-1.2,1.2) -- (-.3,2.1) -- (.9,.9);
      \draw (-1.2,1.2) .. controls +(0:.6) and +(-160:.4) .. (0,.8)
      .. controls +(20:.3) and +(135:.4) .. (.7,.7);
      \begin{scope}[shift={(0,-1.3)}]
        \draw[densely dashed,->] (-.4,.4) -- (.45,-.45) node [below=-.5ex] {\scriptsize $\rplus$};
        \draw[densely dashed,->] (.6,.6) -- (-.45,-.45) node [below=-.5ex] {\scriptsize $\rminus$};
        \draw (0,-.7) -- (-.7,0) -- (.2,.9) -- (.9,.2) -- cycle;
        \draw (-.7,0) -- (.7,0);
      \end{scope}
    \end{scope}
    \begin{scope}[shift={(4.8,0)}]
      \draw[->] (0,0) -- (-1.3,1.3);
      \draw[->] (0,0) -- (1,1);
      \draw (-1.2,1.2) -- (-.3,2.1) -- (.9,.9);
      \draw (.7,.7) .. controls +(135:1.5) and +(-45:1.3) .. (-1.1,1.3);
      \begin{scope}[shift={(0,-1.3)}]
        \draw[densely dashed,->] (.4,-.4) -- (-.3,.3) node [above=-.5ex] {\scriptsize $\rplus$};
        \draw[densely dashed,->] (-.4,-.4) -- (.65,.65) node [above=-.5ex] {\scriptsize $\rminus$};
        \draw (0,-.7) -- (-.5,-.2) -- (.4,.7) -- (.9,.2) -- cycle;
        \draw (-.3,0) -- (.7,0);
      \end{scope}
    \end{scope}
    \begin{scope}[shift={(7.8,0)}]
      \draw[->] (0,0) -- (-1.3,1.3);
      \draw[->] (0,0) -- (1,1);
      \draw (-1.2,1.2) -- (-.3,2.1) -- (.9,.9);
      \draw (.2,.2) .. controls +(135:1.5) and +(-45:1.3) .. (-.4,2);
      \begin{scope}[shift={(0,-1.3)}]
        \draw[densely dashed,->] (-.4,.4) -- (.45,-.45) node [below=-.5ex] {\scriptsize $\rplus$};
        \draw[densely dashed,->] (-.6,-.6) -- (.7,.7) node [above=-.5ex] {\scriptsize $\rminus$};
        \draw (-.2,-.9) -- (-.9,-.2) -- (.2,.9) -- (.9,.2) -- cycle;
        \draw (0,-.7) -- (0,.7);
      \end{scope}
    \end{scope}
    \begin{scope}[shift={(10.2,0)}]
      \draw[->] (0,0) -- (-1.3,1.3);
      \draw[->] (0,0) -- (1,1);
      \draw (-1.2,1.2) -- (-.3,2.1) -- (.9,.9);
      \draw (0,0) .. controls +(60:1.3) and +(-115:1) .. (-.3,2.1);
      \begin{scope}[shift={(0,-1.3)}]
        \draw[densely dashed,->] (.5,-.5) -- (-.5,.5) node [above=-.5ex] {\scriptsize $\rplus$};
        \draw[densely dashed,->] (.4,.4) -- (-.6,-.6) node [below=-.5ex] {\scriptsize $\rminus$};
        \draw (0,-.9) -- (-.8,-.1) -- (0,.7) -- (.8,-.1) -- cycle;
        \draw (0,-.9) -- (0,.7);
      \end{scope}
    \end{scope}
    \begin{scope}[shift={(12.6,0)}]
      \draw[->] (0,0) -- (-1.3,1.3);
      \draw[->] (0,0) -- (1,1);
      \draw (-1.2,1.2) -- (-.3,2.1) -- (.9,.9);
      \draw (-1,1) .. controls +(45:.2) and +(-45:.1) .. (-.9,1.5);
      \begin{scope}[shift={(0,-1.3)}]
        \draw[densely dashed,->] (-.3,.3) -- (.45,-.45) node [below=-.5ex] {\scriptsize $\rplus$};
        \draw[densely dashed,->] (-.2,-.2) -- (.7,.7) node [above=-.5ex] {\scriptsize $\rminus$};
        \draw (.2,-.5) -- (-.4,.1) -- (.3,.8) -- (.9,.2) -- cycle;
        \draw (0,-.3) -- (0,.5);
      \end{scope}
    \end{scope}
    \begin{scope}[shift={(15.6,0)}]
      \draw[->] (0,0) -- (-1.3,1.3);
      \draw[->] (0,0) -- (1,1);
      \draw (-1.2,1.2) -- (-.3,2.1) -- (.9,.9);
      \draw (-.7,.7) -- (.2,1.6);
      \begin{scope}[shift={(0,-1.3)}]
        \draw[->] (.5,-.5) -- (-.5,.5) node [above=-.5ex] {\scriptsize $\rplus$};
        \draw (-.05,-.05) -- (.05,.05);
      \end{scope}
    \end{scope}
  \end{tikzpicture}
  \caption{\label{fig:single-diamonds}{\bf Singular hypersurfaces
      $\rplus+\rminus=0$ in monotonicity diamonds.}  Three examples with spacelike hypersurfaces that end in
    three different ways on the west side; three with timelike hypersurfaces; one with null
    hypersurface.  The first row is in coordinates $\uplus,\uminus$; the second in local
    coordinates $\rplus,\rminus$ where the dashed lines are the axes $\rplus=0$ and
    $\rminus=0$.  We illustrate various possible monotonicities for $\rplus, \rminus$ by
    orienting their axes appropriately.  All diagrams are oriented such that the top is in
    the future of the bottom.  The last diagram of the second row is degenerate: in this
    diamond, $\rminus$ is a constant and the singularity is simply at $\rplus=-\rminus$.}
\end{figure}
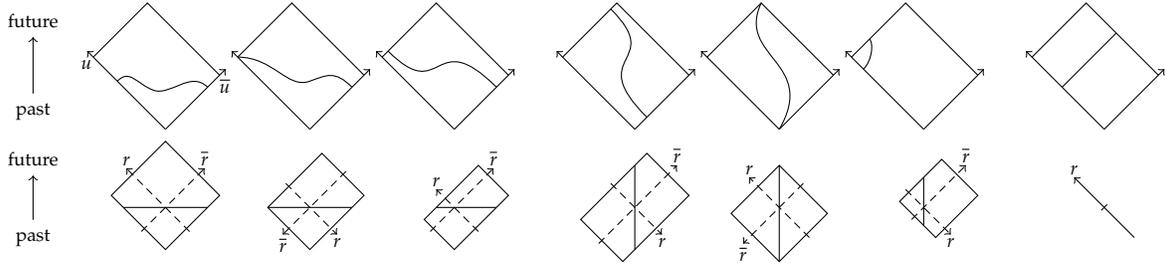

\paragraph{Four types of diamonds.} 

We classify monotonicity diamonds in terms of the relative monotonicity properties of the coordinates $\rplus, \rminus$.  These control the timelike, spacelike, or null nature of the gradient
\[
\nabla A = (A_{\uplus},A_{\uminus}) = \frac{1}{2} \bigl(  \rplus'(\uplus), \rminus'(\uminus) \bigr) ,
\]
whose norm squared is $g(\nabla A,\nabla A) = - \frac{1}{2} \Omega \rplus'(\uplus) \rminus'(\uminus)$.
This in turn determines the spacelike, timelike, or null nature of level sets of~$A$, hence of the singularity hypersurfaces $A=0$.  Different behaviors may arise at the intersection of such a hypersurface and the boundary of a diamond, as we now describe.  Several examples of diamonds are illustrated in Figure~\ref{fig:single-diamonds}, and a global picture of a spacetime partitioned into diamonds is given in Figure~\ref{fig:1.2}.

\bei

\item {\bf Diamond with timelike~$\nabla A$.} When the functions $\rplus$ and $\rminus$ have the same monotonicity in $\Delta$, the singular locus $\rplus + \rminus = 0$ in $\Delta$ is empty or is a spacelike hypersurface.  In the second case, the closure of this hypersurface in $\Cl(\Delta)$ is null on the boundary, unless it reaches a ``corner'' of the diamond in which case the closure can also be spacelike at the boundary.

\item {\bf Diamond with spacelike~$\nabla A$.} When the functions $\rplus$ and $\rminus$ have opposite monotonicities in $\Delta$, the singular locus in $\Delta$ is empty or is a timelike hypersurface whose closure is null on the boundary, unless it reaches a ``corner'' of the diamond in which case the closure can also be timelike at the boundary.

\item {\bf Diamond with null~$\nabla A$.} When one of the functions $\rplus$ and $\rminus$ is constant in $\Delta$ while the other one is monotonic, then the singular locus is empty or is a null hypersurface whose closure is null at the boundary.
This case only occurs for fine-tuned data: for generic incoming data, including cases with vacuum on some intervals, the functions~$\rplus,\rminus$ are never constant on an interval after the two plane waves begin interacting.

\item {\bf Diamond with constant~$A$.} When both functions $\rplus$ and $\rminus$ are constant in $\Delta$, either $\rplus+\rminus\neq 0$ and there is no singular locus, or $\rplus + \rminus = 0$ and the area function $A$ vanishes identically. In the latter case, the metric is fully degenerate within such a monotonicity diamond, and the value of the coefficients $\psi, \omega, \phi$ is (physically and mathematically) irrelevant\footnote{It might be possible to still define the values of the functions $\psi,\omega, \phi$ in such a diamond, by solving suitably chosen evolution equations but we will not pursue this here.}. 
As in the case of null~$\nabla A$ this phenomenon is excluded for generic incoming data, including cases with vacuum regions.

\eei

In our existence Theorem~\ref{theorem-plane}, we exclude the last two types of diamonds by assuming that neither $\rplus$ nor $\rminus$ are constant on intervals (besides the initial interval).  We also exclude the non-generic situation in which $\rplus+\rminus=0$ exactly at a corner of a monotonicity diamond, which can only occur if local maxima of $\rplus$ and~$\rminus$ are exactly opposite (local minima are always $-1$ hence cannot be opposite).  Without this latter restriction the singularity locus $\{A=0\}$ would include isolated values of $(\uplus,\uminus)$, which would not fit in our notion of cyclic spacetime.

\begin{figure}[t]\centering
  \includegraphics[height=7cm,trim=0 29 0 29]{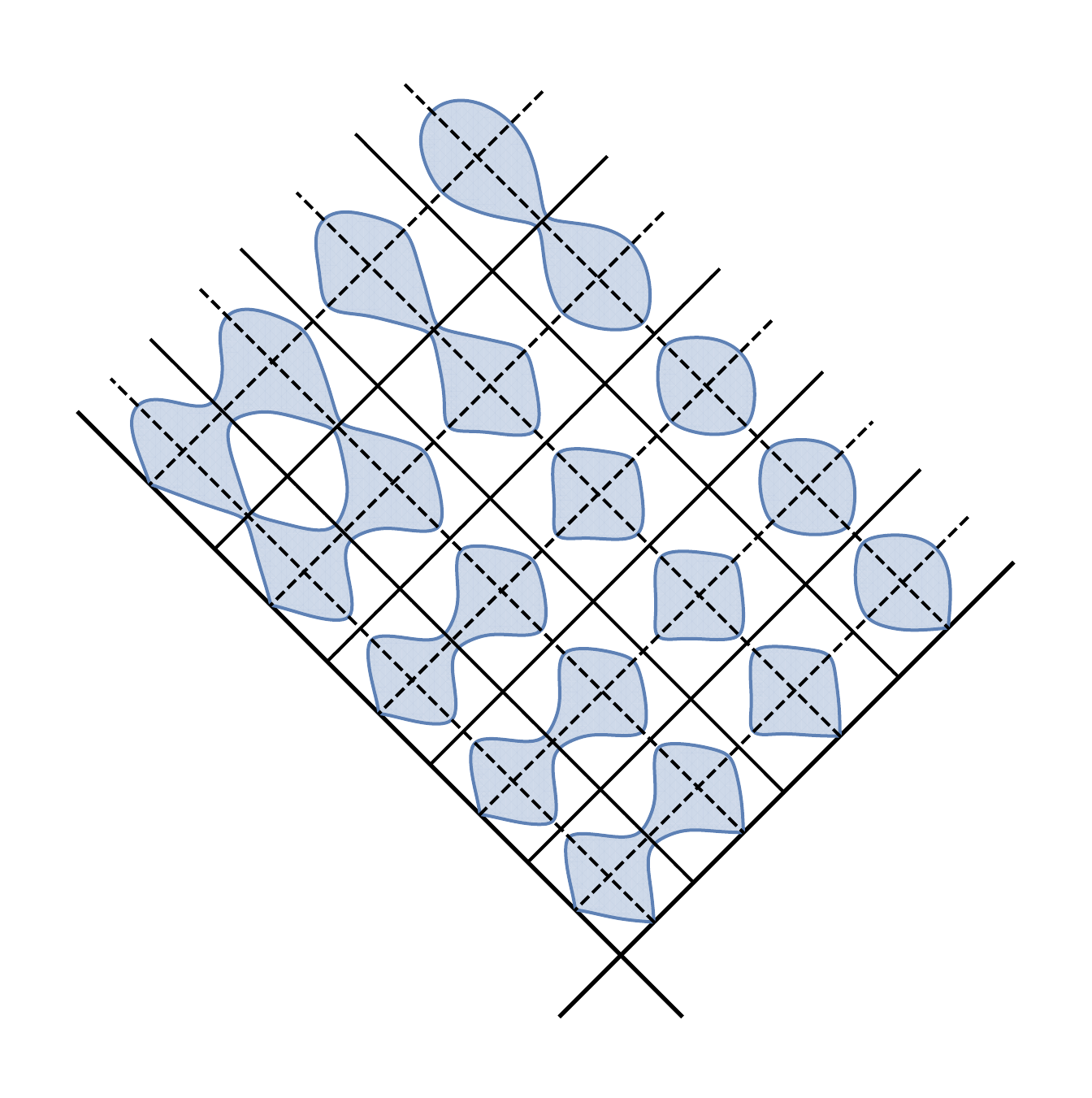}

  \caption{\label{fig:1.2}
    {\bf Global geometry of a colliding spacetime.} 
    The dashed black lines are loci where $\Fplus(\uplus)=0$ or $\Fminus(\uminus)=0$ and
    the solid black lines are loci where $\Fplus'(\uplus)=0$ or $\Fminus'(\uminus)=0$.
    For generic initial data these are null hypersurfaces at constant $\uplus$ or constant $\uminus$,
    and they decompose spacetime into diamond-shaped regions.
    Each such diamond is cut through by at most one spacelike or timelike singularity hypersurface, generically with null tangents at both ends (except along $\Ncalplus_0$ and~$\Ncalminus_0$ where tangents are not null).
    These singularity hypersurfaces (curved blue lines) interpolate from spacelike to timelike parts, and are generically smooth away from the axes $\uplus=0$ and $\uminus=0$.
    For generic initial data with compact support or sufficient decay at $\uplus\to+\infty$ and at $\uminus\to+\infty$, there are finitely many islands.}
\end{figure}

\subsection{Evolution equations in diamonds}

\paragraph{The essential evolution equation.}

According to the Einstein equations and the Bianchi identities, the matter field $\phi$ satisfies the wave equation 
$
\Box_g \phi = 0, 
$
which we can express (cf.~the equation \eqref{cphi-eq-scalarfield}) within each diamond, as 
\bel{phi-waveeq}
\phi_{\uplus\uminus} 
= - {1 \over 2A} \Big( A_\uplus \, \phi_\uminus + A_\uminus \, \phi_\uplus \Big)
= - {1 \over 2(\rplus + \rminus)} \Big( \rplus' \, \phi_\uminus + \rminus' \, \phi_\uplus \Big), 
\ee
in which $\rplus' = {d \over d\uplus} \rplus(\uplus)$, etc. Within a diamond, let us parametrize the hypersurface as $(\uplus(w), \uminus(w), x, y)$ in which $x,y \in \RR^2$ and the parameter $w$ varies in an interval. By definition, we have $\rplus(\uplus(w)) + \rminus(\uminus(w)) = 0$ and, therefore, by differentiation,  
$
\uplus'(w)\,\rplus'(\uplus(w)) + \uminus'(w) \, \rminus'(\uminus(w)) = 0. 
$
If at least one of the two components $\uplus'(w)$ and $\uminus'(w)$ is not vanishing, the tangent vector 
 to the curve is $(\uplus'(w), \uminus'(w))$ and we can take the normal vector (which needs not be unit)
\[
n(w) \coloneqq - \nabla A(w) = -(A_{\uplus},A_{\uminus}) = \frac{-1}{2} \bigl(  \rplus'(\uplus(w)), \rminus'(\uminus(w)) \bigr) ,
\]
oriented toward negative values of~$A$.
This defines our choice of orientation of the singularity hypersurface (relative to the orientation of the spacetime) for the purpose of defining a 
cyclic spacetime \cite{LLV-1a}.

\paragraph{Three types of evolution equations.}

Let us express the wave equation~\eqref{phi-waveeq} in each type of diamonds distinguished above.

\bei 

\item {\bf Diamond with timelike~$\nabla A$.} When $\rplus'$ and $\rminus'$ have the same sign $\epsilon=\pm 1$, it is natural to introduce the following time and space variables, where the sign is chosen to make $t$ increase toward the future:
\[
t = \epsilon (\rplus + \rminus)/2,
\qquad
z = \epsilon (- \rplus + \rminus)/2.
\]
The wave equation for the matter field becomes
$
\phi_{tt} + {1 \over t} \phi_t - \phi_{zz}  = 0 .
$
The singularity at $t=0$ is located in the future of the initial data: one should advance toward the singularity hypersurface then apply a suitable jump condition.

\item {\bf Diamond with spacelike~$\nabla A$.}
When $\rplus'$ and $\rminus'$ have opposite signs $-\epsilon$ and $\epsilon$, it is more natural to set
\[
t = \epsilon (- \rplus + \rminus)/2,
\qquad
z = \epsilon ( \rplus + \rminus)/2,
\]
and the wave equation for the matter field becomes
$
\phi_{tt} - \phi_{zz} - {1 \over z} \phi_z = 0 .
$
Here, $z=0$ corresponds to the singularity and it is necessary to solve simultaneously for $z<0$ and $z>0$ and impose a jump condition at the interface $z=0$. 

\item {\bf Diamond with null or vanishing~$\nabla A$.} When one or both of $\rplus'$ and $\rminus'$ vanishes identically we can no longer use $(\rplus,\rminus)$ as local coordinates. Instead, we rewrite the equation as
\[
\bigl((\rplus+\rminus)^{1/2} \phi\bigr)_{\uplus\uminus} = {- \rplus'\rminus' \over 4 (\rplus+\rminus)^{3/2}} \phi = 0.
\]
This wave equation for $(\rplus+\rminus)^{1/2} \phi$ is easily solved starting from initial values of this function on the past boundary of the diamond, which consists of two light-like hypersurfaces.

In fact, consider an open interval $\Iplus_j$ constructed above on which {\sl $\rplus$~is a constant}.
Since $\rplus=2\Fplus^2-1$ and $\Fplus$ is given by the initial data $\Eplus$ through $\Fplus''=-\Eplus\Fplus$ (with $\Fplus\neq 0$ whenever $\Fplus'=0$), we deduce that $\Eplus=\psiplus_\uplus^2+\frac{1}{2}\phiplus_\uplus^2$ vanishes on~$\Iplus_j$, namely there is no incoming wave initially in this interval.  The wave equation propagates this initial data $((\rplus+\rminus)^{1/2}\phi)_\uplus=0$ from $\uminus=0$ to all~$\uminus$ and we conclude that $\phi_\uplus(\uplus,\uminus)=0$ for all $\uplus\in\Iplus_j$ and all $\uminus\in\RR$.
For the same reasons $\psi_\uplus$ also vanishes.
The evolution equation~\eqref{eq-a2} of~$\omega$ reads $\omega_{\uplus\uminus}=0$ in this region, and, combining with our normalization $\omega=0$ in the initial data, we learn that $\omega_{\uplus}=0$.
Altogether, $\phi_\uplus=\psi_\uplus=\omega_\uplus=A_\uplus=0$ for $\uplus\in\Iplus_j$ so that the metric in this region is that of a single right-moving plane-symmetric gravitational wave.
Likewise, on intervals $\Iminus_j$ with $\rminus'(\uminus)=0$, the metric is that of a left-moving plane wave.

\eei


\section{Formulation in coordinates and junction conditions}
\label{section---8}

\subsection{Einstein's field equations in plane-symmetry}
\label{sec:81}

\paragraph{Metric in global null coordinates.}

Before we discuss the actual construction within each monotonicity diamond, we need to write down the Einstein equations in coordinates as well as the notion of scattering maps, and provide a couple of technical observations.  
In the coordinates $(\uplus, \uminus, x,y)$ introduced in Section~\ref{section--31} with $(\uplus, \uminus)  \in \RR^2$ and $(x,y) \in \Tbb^2$ (or $\RR^2$), the metric $g$ takes the form  
\bel{equa:metricg}
g  = -2 \Omega \, d\uplus d\uminus + X \, dx^2 + Y \, dy^2,
\ee
where the coefficients $\Omega,X,Y$ depend upon the null variables $(\uplus,\uminus)$, only.
We will allow for the coefficients $\Omega,X,Y$ to vanish or blow up along certain (spacelike, null, or timelike) hypersurfaces.
However, for the sake of clarity in the presentation, we normalize their sign to be positive away from these singularity hypersurfaces, 
that is, $\Omega, X, Y>0$ (except at singularity hypersurfaces).
Clearly, the metric \eqref{equa:metricg} has Lorentzian signature $(-,+,+,+)$ and can also be written in terms of time and space variables $\tau \coloneqq (\uplus + \uminus)/2$ and $\zeta \coloneqq (- \uplus + \uminus)/2$ as
\be
g  = 2 \Omega \, ( - d\tau^2 + d\zeta^2) + X \, dx^2 + Y \, dy^2. 
\ee

Einstein's field equations for the metric \eqref{equa:metricg} are most conveniently expressed in terms of the metric coefficients $\omega$, $a$, $\psi$ defined by
\be
\Omega = e^{2\omega}, \qquad X = e^{2(a+\psi)}, \qquad  Y = e^{2(a-\psi)}
\ee
away from singularity hypersurfaces, and such that $\omega$, $a$, $\psi$ may blow up to $\pm \infty$ at these singularity hypersurfaces in order to accommodate coefficients $\Omega,A,X,Y$ that may vanish or blow up. 
 Furthermore, from $a$ we define an auxiliary metric coefficient denoted by $A \in \RR$ and satisfying
\be
A^2 = e^{4a} = X Y \quad \text{(when $A \neq 0$).} 
\ee
Here, $\omega$ and $a$ are precisely the conformal coefficient and area function defined earlier in Section~\ref{section--31}.
For the class of singularity scattering maps we use to traverse singularities, the coefficient~$A$ will turn out to be a smooth function on spacetime, provided one chooses the sign of~$A$ to change across each singularity hypersurface. In the following, we will use the metric and matter variables $\omega, a,\psi, \phi$ or $\omega, A,\psi, \phi$ depending on convenience.

\paragraph{Ricci curvature in global null coordinates.}

In order to express the field equations for a sufficiently regular metric (i.e.\ away from any singularity), we compute the components of the Ricci tensor of the metric \eqref{equa:metricg} as follows: 
\bel{equa:Ric1}
\aligned
R_{\uplus\uplus} 
& = - 2 \, (\psi_\uplus)^2 + {2 \over A} \, \omega_\uplus A_\uplus + {1 \over 2 A^2} (A_\uplus)^2 - {1 \over A} A_{\uplus \uplus},
\qquad 
& R_{xx} 
& = e^{-2\omega +2\psi} \big( A_\uminus \psi_\uplus + \psi_\uminus A_\uplus + 2 \, A \psi_{\uplus \uminus} + A_{\uplus \uminus} \big),
\\
R_{\uplus \uminus} 
& = -2 \psi_\uplus \psi_\uminus - 2\omega_{\uplus \uminus} + {1 \over 2 A^2} A_\uminus A_\uplus - {1 \over A} A_{\uminus \uplus},
\qquad
& R_{yy} 
& = - e^{-2\omega-2\psi} \big( A_\uminus \psi_\uplus + \psi_\uminus A_\uplus + 2 \, A \psi_{\uplus \uminus} - A_{\uplus \uminus} \big), 
\\
R_{\uminus \uminus} 
& = - 2 (\psi_\uminus)^2 + {2 \over A} \omega_\uminus A_\uminus + {1 \over 2 A^2} (A_\uminus)^2 - {1 \over A} A_{\uminus \uminus}, 
\endaligned
\ee
while by virtue of the Einstein equations
we find 
\[
R_{\uplus\uplus} = 8 \pi \, (\phi_\uplus)^2,
\qquad
R_{\uminus\uminus} =  8 \pi \, (\phi_\uminus )^2,
\qquad
R_{\uplus\uminus} =  8 \pi \, \phi_\uplus \,  \phi_\uminus, 
\qquad
R_{xx} = R_{yy} = 0. 
\]
Therefore, the metric and matter variables $\omega, A,\psi, \phi$ satisfy the following set of equations: 
\[ 
\aligned
{2 A_\uplus  \over A} \, \omega_\uplus
& =  8 \pi \,  (\phi_\uplus)^2 +  2 \, (\psi_\uplus)^2 - {(A_\uplus)^2 \over 2 A^2}  + {1 \over A} A_{\uplus \uplus},
\qquad
& {2 A_\uminus \over A} \omega_\uminus
=  8 \pi \,  (\phi_\uminus)^2  + 2 (\psi_\uminus)^2 - {(A_\uminus)^2 \over 2 A^2} + {1 \over A} A_{\uminus \uminus},
\\
2\omega_{\uplus \uminus}
& = -2 \psi_\uplus \psi_\uminus -  8 \pi \,  \phi_\uplus \,  \phi_\uminus + {1 \over 2 A^2} A_\uminus A_\uplus - {1 \over A} A_{\uminus \uplus},
\\
0 &
= e^{-2\omega +2\psi} \bigl( A_\uminus \psi_\uplus + \psi_\uminus A_\uplus + 2 \, A \psi_{\uplus \uminus} + A_{\uplus \uminus} \bigr),
\qquad
& 0  
=  - e^{-2\omega-2\psi} \bigl( A_\uminus \psi_\uplus + \psi_\uminus A_\uplus + 2 \, A \psi_{\uplus \uminus} - A_{\uplus \uminus} \bigr).
\endaligned
\]
By linearly combining the last two equations, we deduce that $A_{\uplus\uminus} = 0$, so that $\psi$ satisfies
$
\psi_{\uplus\uminus} + {1 \over 2A} \big( A_\uplus \psi_\uminus + A_\uminus \psi_\uplus \big) = 0.
$
Finally,
the Bianchi identity provides us with a wave equation also for the scalar field, namely 
$
\phi_{\uplus\uminus} + {1 \over 2A} \big( A_\uplus \phi_\uminus + A_\uminus \phi_\uplus \big) = 0. 
$

\paragraph{Evolution system for the metric and matter field.}

After re-ordering and using the notation $\omega,a,\psi, \phi$, we conclude that the field equations  
are equivalent to a coupled system of second-order partial differential equations of semi-linear type,
\begin{subequations}
\label{equa-EinEq}
\begin{align}
\label{eqxi}
(e^{2a})_{\uplus\uminus} &=0,
\\
\label{cphi-eq-deux}
\psi_{\uplus\uminus}  &= - a_\uplus \, \psi_\uminus - a_\uminus \, \psi_\uplus,
\\
\label{cphi-eq-scalarfield}
\phi_{\uplus\uminus}  &= - a_\uplus \, \phi_\uminus - a_\uminus \, \phi_\uplus,
\\
\label{eq-a1-baru}
2 \omega_\uplus \, a_\uplus
&= a_{\uplus\uplus}  + (a_\uplus)^2 +  (\psi_\uplus)^2 + 4 \pi (\phi_\uplus)^2,
\\
\label{eq-a1-ubar}
2 \omega_\uminus \, a_\uminus  &= a_{\uminus\uminus} + (a_\uminus)^2 + (\psi_\uminus)^2 + 4 \pi (\phi_\uminus)^2,
\\
\label{eq-a2}
\omega_{\uplus\uminus}
&= - \psi_\uplus \, \psi_\uminus - 4 \pi \phi_\uplus\,\phi_\uminus + a_\uplus \, a_\uminus.
\end{align}
\end{subequations}
These equations are valid {\sl away from singularities} and enjoy the following properties. 

\bei 

\item  The evolution equation \eqref{eqxi} satisfied by $e^{2a}$ is the standard wave equation and is
solved explicitly from the incoming radiation data (see \eqref{equa:Bexpli}). This coefficient is the area (or area density in the non-compact case) of the symmetry orbits parametrized by $(x,y)$, and we refer in the following to $A$, or equivalently $a$, as the {\bf areal coefficient.} 

\item Then, it is convenient to refer to \eqref{cphi-eq-deux}--\eqref{cphi-eq-scalarfield} as the {\bf essential Einstein equations}, since they are decoupled from the remaining equations and can also be solved independently. 
They are linear wave equations (in curved space) and have {\sl singular coefficients} that blow up when $A$ approaches zero.

\item The two equations \eqref{eq-a1-baru} and \eqref{eq-a1-ubar} are differential equations along the characteristic directions, and can be solved for~$\omega$ once $\psi$ and $\phi$ are known. 

\item Finally, the remaining equation \eqref{eq-a2} is a direct consequence of the other equations.

\eei 

While the essential equations appear to be linear in nature, they do depend on the areal coefficient~$A$ which approaches zero and changes sign on certain hypersurfaces, as we studied in Section~\ref{section---7}.
The dependence on~$A$ is {\sl truly nonlinear,} and the collision problem shows that the Einstein equations in plane symmetry do allow for curvature singularities along these singularity hypersurfaces.
In addition, junction conditions across singularity hypersurfaces may be nonlinear and may couple $\phi$ and~$\psi$, as we discuss momentarily.

\begin{remark}[Invariance under null coordinate transformation]
\label{rem:invariance}
In the expression \eqref{equa:metricg} of the metric, each null coordinate can be rescaled by a composition with an arbitrary function while keeping the general form of the metric.
The field equations are also invariant under such a change of coordinates. More precisely, setting $\uplus = f(\uplus)$ and $\uminus = g(\uminus)$, we should take into account that the metric coefficient $\omega$ transforms non-trivially, as follows: $e^{2\omega} \mapsto e^{2 \widetilde{\omega}} = e^{2\omega} \frac{df}{ d\uplus} \frac{dg}{ d\uminus}$ while
$A_{\uplus\uplus} - 2 \omega_\uplus A_\uplus = e^{2\omega} (e^{-2\omega} A_{\uplus})_{\uplus}$.
Such a rescaling could be used to adapt the null coordinates to the problem under consideration; however our geometric construction of the coordinates already selects a preferred choice (up to linear rescaling) by requiring that ${\del\over\del\uplus}$ and ${\del\over\del\uminus}$ be geodesic fields on the $\uminus=0$ and $\uplus=0$ initial hypersurfaces, respectively.
\end{remark}

\subsection{Fuchsian expansions near singularities}
\label{sec:82}

\paragraph{From Fuchsian data to singularity scattering data.}
Scattering maps were defined and classified in~\cite{LLV-1a} by working in the ADM formalism in a Gaussian foliation, that is, a foliation by normal proper time or distance.
Our aim is to reformulate them in plane symmetry, in the null coordinate system $(\rplus,\rminus)$ that we used to describe one diamond of the colliding gravitational wave spacetime.
As a first step, we begin by expanding our variables $\omega,a,\phi,\psi$ on either side of the singularity $\rplus+\rminus=0$ in terms of functions $\phi_\sharp,\phi_\flat,\psi_\sharp,\psi_\flat,\upsilon_\sharp,\upsilon_\flat$ of a coordinate $w=-\rplus+\rminus$ along the singularity.
Then we set up a Gaussian foliation and determine the singularity scattering data $(g,K,\phi_0,\phi_1)=\SscrADMfromF(\phi_\sharp,\phi_\flat,\psi_\sharp,\psi_\flat,\upsilon_\sharp,\upsilon_\flat)$ describing the same asymptotic spacetime in the ADM formalism; the map is given in~\eqref{sharpflat-to-01}.
We then construct, in~\eqref{01-to-sharpflat} below the inverse which is a map $\SscrFfromADM$ that reconstructs Fuchsian data from plane-symmetric singularity scattering data.
In the next section we are going to compose these maps together and explicitly rewrite singularity scattering maps as relating Fuchsian data on both sides of the singularity hypersurface.

\paragraph{Fuchsian data.}

In coordinates $(\rplus,\rminus)$, the metric is
\be
g^{(4)} = 4\epsilon e^{2\upsilon}d\rplus d\rminus+e^{2(a+\psi)}dx^2+e^{2(a-\psi)}dy^2 , \qquad
\upsilon\coloneqq \omega-\frac{1}{2}\log|2\rplus'\rminus'| ,
\ee
where the sign $\epsilon=\sgn(-\rplus'\rminus')$ depends on the diamond.
We will need the $\rplus+\rminus\to 0^{\pm}$ expansion
derived later for~$\phi$, and its analogue for~$\psi$, along the singularity hypersurface where $A=(\rplus+\rminus)/2$ vanishes:
\bse\label{asympt}
\be
\phi = \phi_\sharp^\pm(-\rplus+\rminus) \log |\rplus+\rminus| + \phi_\flat^\pm(-\rplus+\rminus) + o(1) , \qquad
\psi = \psi_\sharp^\pm(-\rplus+\rminus) \log |\rplus+\rminus| + \psi_\flat^\pm(-\rplus+\rminus) + o(1) .
\ee
As long as it is clear from context, we shall suppress from the notation the subscript $\pm$ pertaining to the side of the singularity, and the dependence of coefficients on $(-\rplus+\rminus)$.

The evolution equation giving~$\omega_\rplus$ (or equivalently~$\upsilon_\rplus$) can be written compactly and expanded as
\[
\upsilon_\rplus
= (\rplus+\rminus) \bigl(4\pi\phi_\rplus^2+\psi_\rplus^2\bigr) - {1\over 4(\rplus+\rminus)}
= \biggl(\biggl(4\pi\phi_\sharp^2+\psi_\sharp^2-\frac{1}{4} \biggr)\log|\rplus+\rminus|\biggr)_\rplus
  - 8\pi\phi_\flat'\phi_\sharp - 2\psi_\flat'\psi_\sharp
  + o(1) .
\]
Combining this with the analogous equation for $\upsilon_\rminus$ leads to the expansion
\bel{alphasharpflat}
\gathered
\upsilon = \upsilon_\sharp(-\rplus+\rminus) \log |\rplus+\rminus| + \upsilon_\flat(-\rplus+\rminus) + o(1) , \\
\upsilon_\sharp = 4\pi\phi_\sharp^2+\psi_\sharp^2-\frac{1}{4} , \qquad
\upsilon_\flat' = 8\pi\phi_\flat'\phi_\sharp + 2\psi_\flat'\psi_\sharp ,
\endgathered
\ee
where we only provide an expression for the derivative of $\upsilon_\flat$ with respect to its argument, that is, the coordinate $-\rplus+\rminus$ parallel to the singularity. This only determines $\upsilon_\flat$ up to a constant, which can be computed from the initial data but which we will not need for now. Finally, we have 
\be
a=\frac{1}{2}\log|A|=a_\sharp\log|\rplus+\rminus|+a_\flat, \qquad a_\sharp=\frac{1}{2}, \qquad a_\flat=-\frac{1}{2}\log 2.
\ee
\ese

\begin{remark}\label{rem:ups-useless}
  Since \eqref{alphasharpflat} only determines $\upsilon_\flat$ up to a constant, the functions $(\phi_\sharp,\phi_\flat,\psi_\sharp,\psi_\flat)$ are not quite enough to specify Fuchsian data.  For this reason we shall work with extended data $(\phi_\sharp,\phi_\flat,\psi_\sharp,\psi_\flat,\upsilon_\sharp,\upsilon_\flat)$, constrained by~\eqref{alphasharpflat}.  Below, we identify these two relations as Einstein's Hamiltonian and momentum constraints.
\end{remark}

\paragraph{Gaussian coordinates.}

We now relate the coordinates $(\rplus,\rminus,x,y)$ to a Gaussian coordinate system $(s,w,x,y)$ with the same $(x,y)$.  On the singularity we set $w=-\rplus+\rminus$.  Then we extend these coordinates away from the singularity by keeping $(w,x,y)$ constant along geodesics normal to the hypersurface, and denoting by~$s$ the proper time or distance away from the singularity.
We choose $\sgn s=-\sgn(\rplus+\rminus)=-\sgn(A)$ so that in the first diamond $s$ has the same orientation as physical time.
Given the plane symmetry, these geodesics have a constant value of $x,y$ but variable values of $\rplus,\rminus$, and the latter are solutions of the differential equations
\[
\rplus_{ss} = -2\upsilon_\rplus \rplus_s^2 , \qquad
\rminus_{ss} = -2\upsilon_\rminus \rminus_s^2 ,
\]
where $s$-derivatives are taken at constant~$w$.  We find the change of coordinates
\bse\label{coordinates-Gauss-null}
\be
-\rplus+\rminus = w + o\bigl(|s|^{1/(1+\upsilon_\sharp(w))}\bigr) , \qquad
\rplus+\rminus = - (\sgn s)\bigl((1+\upsilon_\sharp(w))e^{-\upsilon_\flat(w)}|s|\bigr)^{1/(1+\upsilon_\sharp(w))} + o\bigl(|s|^{1/(1+\upsilon_\sharp(w))}\bigr),
\ee
and the inverse change of coordinates
\be
w=-\rplus+\rminus+o\bigl(|\rplus+\rminus|\bigr), \qquad
s= \frac{-\sgn(\rplus+\rminus)\,e^{\upsilon_\flat(-\rplus+\rminus)}}{1+\upsilon_\sharp(-\rplus+\rminus)}|\rplus+\rminus|^{1+\upsilon_\sharp(-\rplus+\rminus)} + o\bigl(|\rplus+\rminus|^{1+\upsilon_\sharp(-\rplus+\rminus)} \bigr),
\ee
\ese
where we used the proper time/distance as the affine parameter~$s$ of the geodesic.
Then we compute $4e^{2\upsilon}d\rplus d\rminus$ in terms of the coordinates $s,w$.
First, $\upsilon = \upsilon_\sharp\log|\rplus+\rminus|+\upsilon_\flat+o(1) = \frac{1}{1+\upsilon_\sharp}\bigl(\upsilon_\flat+\upsilon_\sharp\log\bigl((1+\upsilon_\sharp)|s|\bigr)\bigr)+o(1)$, from which one checks $\del_s(\rplus+\rminus)=e^{-\upsilon}(1+o(1))$, $\del_s(-\rplus+\rminus)=e^{-\upsilon}o(1)$, $\del_w(\rplus+\rminus)=O(|s|\log|s|)=o(1)$, and $\del_w(-\rplus+\rminus)=1+o(1)$.
This yields
\[
\aligned
4e^{2\upsilon}d\rplus d\rminus
& = e^{2\upsilon}\bigl( d(\rplus+\rminus)^2 - d(-\rplus+\rminus)^2\bigr)
= e^{2\upsilon}\bigl( e^{-\upsilon}(1+o(1))ds+o(1)dw\bigr)^2
- e^{2\upsilon}\bigl( e^{-\upsilon}o(1)ds+(1+o(1))dw\bigr)^2 \\
& = ds^2-e^{2\upsilon}dw^2+o(1)ds^2+o(e^\upsilon)dwds+o(e^{2\upsilon})dw^2 .
\endaligned
\]
Altogether, the metric is asymptotically $g^{(4)} \simeq \pm ds^2 + g_*(s)$ as $s\to 0$ (on one particular side of the singularity), with
\bel{gst-from-Fuchs}
g_*(s) = \mp \bigl((1+\upsilon_\sharp)e^{\upsilon_\flat/\upsilon_\sharp}|s|\bigr)^{2\upsilon_\sharp/(1+\upsilon_\sharp)} dw^2
+ \frac{1}{2}e^{2\psi_\flat}\bigl((1+\upsilon_\sharp)e^{-\upsilon_\flat}|s|\bigr)^{(1+ 2\psi_\sharp)/(1+\upsilon_\sharp)} dx^2
+ \frac{1}{2}e^{-2\psi_\flat}\bigl((1+\upsilon_\sharp)e^{-\upsilon_\flat}|s|\bigr)^{(1-2\psi_\sharp)/(1+\upsilon_\sharp)} dy^2 .
\ee

\paragraph{Singularity scattering data from Fuchsian data.}

Starting from a Fuchsian data set $(\phi_\sharp,\phi_\flat,\psi_\sharp,\psi_\flat,\upsilon_\sharp,\upsilon_\flat)$, which we recall obeys~\eqref{alphasharpflat}, namely $\upsilon_\sharp=4\pi\phi_\sharp^2+\psi_\sharp^2-1/4$ and $\upsilon_\flat'=8\pi\phi_\flat'\phi_\sharp+2\psi_\flat'\psi_\sharp$, we have thus obtained the corresponding singularity data set $(g,K,\phi_0,\phi_1)$, defined as the limit~\eqref{singularity-data-set}:
\bel{sharpflat-to-01}
\aligned
g & = g_*\bigl|_{s=1} , \qquad &
K & = \diag\biggl(\frac{\upsilon_\sharp}{1+\upsilon_\sharp},\frac{1/2 + \psi_\sharp}{1+\upsilon_\sharp},\frac{1/2 - \psi_\sharp}{1+\upsilon_\sharp}\biggr) \quad \text{in the basis $\del_w,\del_x,\del_y$,}\\
\phi_0 & = \frac{\phi_\sharp}{1+\upsilon_\sharp} , &
\phi_1 & = \frac{\phi_\sharp}{1+\upsilon_\sharp} \bigl(\log(1+\upsilon_\sharp)-\upsilon_\flat\bigr) + \phi_\flat ,
\endaligned
\ee
with $g_*$ given above in~\eqref{gst-from-Fuchs}.
Observe that the denominators are non-zero by construction since $\upsilon_\sharp\geq -1/4$.
The Kasner exponents (eigenvalues of~$K$) sum to $\Tr K=1$, as they should, and one also checks that the Hamiltonian constraint is satisfied as well since
$1 - \Tr K^2 - 8\pi\phi_0^2 = 2(\upsilon_\sharp+1/4-4\pi\phi_\sharp^2-\psi_\sharp^2) / (1+\upsilon_\sharp)^2 = 0$.
Strikingly, the momentum constraint $\nabla_a K^a_b=8\pi\phi_0\del_b\phi_1$ is also obeyed, as we work out in the proof of Lemma~\ref{lem:FfromADM}, later on.  This relies on numerous cancellations based on~\eqref{alphasharpflat} and on the precise form of the metric.
As a side comment we note the surprisingly simple relation $2(1-k_1)\abs{g}^{1/2}=1$,
which holds in the $w,x,y$ coordinates defined above along the singularity.
We summarize our observations in a lemma.

\begin{lemma}[Fuchsian-to-ADM map on the singularity]
\label{lem:ADMfromF}
  Consider a plane-symmetric solution of the Einstein-scalar field system with the expansion~\eqref{asympt} in the canonical null coordinates $(\rplus,\rminus)$ on one side of the singularity $\rplus+\rminus=0$, in terms of functions $(\phi_\sharp,\phi_\flat,\psi_\sharp,\psi_\flat,\upsilon_\sharp,\upsilon_\flat)$ obeying~\eqref{alphasharpflat}.
  In Gaussian coordinates, the solution has the asymptotic form $\phi\simeq\phist$, $g^{(4)}\simeq\epsilon ds^2+\gst$ with $\epsilon=\sgn(-\rplus'\rminus')$ and the asymptotic profile $(\gst,\phist)$ defined by
  \bel{equa:time-asymptotic-profile}
  \gst(s) = \abs{s}^{2\Kmoi}\gmoi , \qquad \Kst(s) = \frac{-1}{s} \Kmoi , \qquad
  \phist(s) = \phi_{0-}\log|s| + \phi_{1-} ,
\ee
 in which the singularity data set is $(g,K,\phi_0,\phi_1)=\SscrADMfromF(\phi_\sharp,\phi_\flat,\psi_\sharp,\psi_\flat,\upsilon_\sharp,\upsilon_\flat)$ with explicit expressions given in~\eqref{sharpflat-to-01}.
  In the coordinates $w,x,y$ defined above, $2(1-k_1)\abs{g}^{1/2}=1$ where $k_1$~is the eigenvalue of~$K$ transverse to symmetry orbits.
\end{lemma}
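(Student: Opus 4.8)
The lemma repackages the computations carried out just above its statement, so the plan is to assemble them into the asserted form and then check the matching. First I would observe that, given the Fuchsian expansion~\eqref{asympt} for $\phi$ and~$\psi$ and the areal expansion $a=\tfrac12\log\abs{\rplus+\rminus}-\tfrac12\log 2$, the evolution equation for~$\omega$ (equivalently~$\upsilon$) in~\eqref{equa-EinEq} forces the expansion~\eqref{alphasharpflat}, with $\upsilon_\sharp$ and $\upsilon_\flat'$ determined by the two relations displayed there --- relations which are exactly the hypotheses we are entitled to assume. Then I would integrate the normal-geodesic equations $\rplus_{ss}=-2\upsilon_\rplus\rplus_s^2$, $\rminus_{ss}=-2\upsilon_\rminus\rminus_s^2$ at fixed~$(x,y)$: writing $u=\rplus+\rminus$ and $w=-\rplus+\rminus$, one finds that $w$ is constant to leading order while $\del_s u=e^{-\upsilon}(1+o(1))=\abs{u}^{-\upsilon_\sharp}e^{-\upsilon_\flat}(1+o(1))$, whose integration yields $\abs{u}^{1+\upsilon_\sharp}\sim(1+\upsilon_\sharp)e^{-\upsilon_\flat}\abs{s}$ and hence the change of coordinates~\eqref{coordinates-Gauss-null}, together with the derivative orders recorded there.

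Next I would substitute~\eqref{coordinates-Gauss-null} into $g^{(4)}=4\epsilon e^{2\upsilon}d\rplus d\rminus+e^{2(a+\psi)}dx^2+e^{2(a-\psi)}dy^2$ and carry the error terms through. The $d\rplus d\rminus$ block becomes $\epsilon\,ds^2-e^{2\upsilon}dw^2$ up to genuinely subleading pieces once $e^{2\upsilon}$ is rewritten in terms of~$s$ via $\upsilon=\tfrac{1}{1+\upsilon_\sharp}\bigl(\upsilon_\flat+\upsilon_\sharp\log((1+\upsilon_\sharp)\abs{s})\bigr)+o(1)$, and the $dx^2,dy^2$ blocks are $\tfrac12 e^{\pm 2\psi_\flat}\abs{u}^{1\pm2\psi_\sharp}(1+o(1))$, into which substituting $\abs{u}\sim((1+\upsilon_\sharp)e^{-\upsilon_\flat}\abs{s})^{1/(1+\upsilon_\sharp)}$ produces precisely the coefficients of~\eqref{gst-from-Fuchs}. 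This yields $g^{(4)}\simeq\epsilon\,ds^2+\gst$ and $\phi\simeq\phist$ of the Kasner/log-linear form~\eqref{equa:time-asymptotic-profile}, from which one reads off $\gmoi=\gst|_{s=1}$ together with the diagonal Kasner exponents $k_1=\upsilon_\sharp/(1+\upsilon_\sharp)$, $k_2=(1/2+\psi_\sharp)/(1+\upsilon_\sharp)$, $k_3=(1/2-\psi_\sharp)/(1+\upsilon_\sharp)$ in the basis $\del_w,\del_x,\del_y$.

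It then remains to match with the defining limits~\eqref{singularity-data-set}. The profile relations $\Kst(s)=-s^{-1}\Kmoi$, $\gst(s)=\abs{s}^{2\Kmoi}\gmoi$ in~\eqref{equa:time-asymptotic-profile} identify $\Kmoi=\diag(k_1,k_2,k_3)$, and $\Tr\Kmoi=(\upsilon_\sharp+1)/(1+\upsilon_\sharp)=1$ as it should; for~$\phi$, substituting $\log\abs{u}=\tfrac{1}{1+\upsilon_\sharp}\bigl(\log((1+\upsilon_\sharp)\abs{s})-\upsilon_\flat\bigr)$ into $\phi=\phi_\sharp\log\abs{u}+\phi_\flat+o(1)$ gives $s\del_s\phi\to\phi_\sharp/(1+\upsilon_\sharp)=\phi_{0-}$ and $\phi-s\log\abs{s}\,\del_s\phi\to\tfrac{\phi_\sharp}{1+\upsilon_\sharp}(\log(1+\upsilon_\sharp)-\upsilon_\flat)+\phi_\flat=\phi_{1-}$, so that the singularity data set is exactly~\eqref{sharpflat-to-01}. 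For the side identity, evaluating~\eqref{gst-from-Fuchs} at $s=1$ and multiplying the three diagonal coefficients, the exponents of $(1+\upsilon_\sharp)$ add up to $\tfrac{2\upsilon_\sharp}{1+\upsilon_\sharp}+\tfrac{1+2\psi_\sharp}{1+\upsilon_\sharp}+\tfrac{1-2\psi_\sharp}{1+\upsilon_\sharp}=2$ while the exponential prefactors involving $\psi_\flat$ and $\upsilon_\flat$ cancel pairwise, so $\abs{g}=\tfrac14(1+\upsilon_\sharp)^2$ and $\abs{g}^{1/2}=\tfrac12(1+\upsilon_\sharp)$; since $1-k_1=1/(1+\upsilon_\sharp)$, this is precisely $2(1-k_1)\abs{g}^{1/2}=1$. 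The Hamiltonian constraint $1-\Tr\Kmoi^2-8\pi\phi_{0-}^2=0$ then follows in one line from $\upsilon_\sharp=4\pi\phi_\sharp^2+\psi_\sharp^2-1/4$; the momentum constraint is not needed here, being verified instead in the proof of Lemma~\ref{lem:FfromADM}.

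The only delicate point is the error-term bookkeeping: one must check that the $o(1)$ corrections in~\eqref{coordinates-Gauss-null} do not disturb the leading behaviour of the metric components, of~$\phi$, or --- most sensitively --- of $\del_s g_{ab}$, which enters the extrinsic curvature and therefore reacts to any correction decaying more slowly than the leading $\abs{s}^{2k}$. This is essentially already handled by the order estimates $\del_s(\rplus+\rminus)=e^{-\upsilon}(1+o(1))$, $\del_s(-\rplus+\rminus)=e^{-\upsilon}o(1)$, $\del_w(\rplus+\rminus)=o(1)$, $\del_w(-\rplus+\rminus)=1+o(1)$ recorded above, so no new estimate is required; beyond that, one only needs to keep the sign conventions $\epsilon=\sgn(-\rplus'\rminus')$, the $\mp$ in~\eqref{gst-from-Fuchs} and the orientation $\sgn s=-\sgn A$ mutually consistent throughout.
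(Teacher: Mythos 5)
Your proposal is correct and follows essentially the same route as the paper, which establishes this lemma precisely by the computations you assemble: the expansion~\eqref{alphasharpflat} of~$\upsilon$, integration of the normal geodesics to obtain~\eqref{coordinates-Gauss-null}, substitution into the metric to get~\eqref{gst-from-Fuchs} and hence~\eqref{sharpflat-to-01}, the checks of $\Tr K=1$, the Hamiltonian constraint and $2(1-k_1)\abs{g}^{1/2}=1$, with the momentum constraint deferred to the proof of Lemma~\ref{lem:FfromADM}. The only nit is that the null block is $\epsilon\bigl(ds^2-e^{2\upsilon}dw^2\bigr)$, i.e.\ the sign~$\epsilon$ also multiplies the $dw^2$ term (this is the $\mp$ in~\eqref{gst-from-Fuchs}), but you flag exactly this sign bookkeeping yourself, so nothing is missing.
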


\paragraph{Plane-symmetric singularity data sets.}

We now construct the inverse map, denoted by~$\SscrFfromADM$, which can of course only be defined on plane-symmetric singularity data sets.
In a coordinate system $(w,x,y)$ adapted to the symmetry, a plane-symmetric singularity data set only depends on the first coordinate and is such that $g$~and $K=\diag(k_1,k_2,k_3)$ are diagonal in the basis corresponding to $w,x,y$.
We restrict ourselves to data sets for which the symmetry orbits are spacelike, namely $g_{22},g_{33}>0$, because the scattering maps of interest to us preserve these signs.
We now show that generic plane-symmetric singularity data sets with $g_{22},g_{33}>0$, specifically those that never take the value $k_1=1$ (hence $k_2=k_3=\phi_0=0$), must take the form~\eqref{sharpflat-to-01} up to reparametrization of~$w$.

Our first step is to note that changing the coordinate~$w$ to any monotonic function $\widetilde{w}=\widetilde{w}(w)$ rescales~$g_{11}$ by $(\del_w\widetilde{w})^2$ and leaves all other components unaffected.
We gauge fix this freedom by enforcing a property obeyed by the metric in~\eqref{sharpflat-to-01}:
\bel{detg-plane}
2(1-k_1)\abs{g}^{1/2} = 1 .
\ee
This is only possible provided there are no points with $k_1=1$ (and $k_2=k_3=\phi_0=0$), or equivalently points where $K$ is orthogonal to the plane-symmetry orbits.
We are then ready to state the following lemma.

\begin{lemma}[ADM-to-Fuchsian map on the singularity]
  \label{lem:FfromADM}
  Consider the set of plane-symmetric singularity data sets $(g,K,\phi_0,\phi_1)$ such that $K$ is nowhere orthogonal to the symmetry orbits, and expressed in coordinates such that~\eqref{detg-plane} is obeyed.
  The map $\SscrADMfromF$ of Lemma~\ref{lem:ADMfromF} takes values in this set, and has an inverse $\SscrFfromADM\colon(g,K,\phi_0,\phi_1)\to(\phi_\sharp,\phi_\flat,\psi_\sharp,\psi_\flat,\upsilon_\sharp,\upsilon_\flat)$ explicitly given as
\bel{01-to-sharpflat}
\aligned
\phi_\sharp & = \frac{\phi_0}{1-k_1} , \qquad
& \phi_\flat & = \phi_1 - \frac{\phi_0}{1-k_1} \log(4g_{22}g_{33})^{1/2} ,
\\
\psi_\sharp & = \frac{k_2-k_3}{2(1-k_1)} , \qquad
& \psi_\flat & = \frac{k_3\log(2g_{22})^{1/2}-k_2\log(2g_{33})^{1/2}}{1-k_1} ,
\\
\upsilon_\sharp & = \frac{k_1}{1-k_1} , \qquad
& \upsilon_\flat & = - \frac{1}{1-k_1}\log(4g_{22}g_{33})^{1/2} - \log(1-k_1) .
\endaligned
\ee
\end{lemma}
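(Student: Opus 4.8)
The plan is to verify the three assertions in turn: that $\SscrADMfromF$ of Lemma~\ref{lem:ADMfromF} lands inside the stated set of plane-symmetric singularity data sets, that solving~\eqref{sharpflat-to-01} produces the explicit formulas~\eqref{01-to-sharpflat}, and that these define a genuine two-sided inverse whose image again consists of admissible Fuchsian data obeying~\eqref{alphasharpflat}. For the first assertion I would start with the structural facts. Reading off $K$ from~\eqref{sharpflat-to-01}, the eigenvalue transverse to the symmetry orbits is $k_1=\upsilon_\sharp/(1+\upsilon_\sharp)$; the Hamiltonian relation in~\eqref{alphasharpflat} forces $\upsilon_\sharp=4\pi\phi_\sharp^2+\psi_\sharp^2-\tfrac{1}{4}\geq-\tfrac{1}{4}$ to be finite, so $1+\upsilon_\sharp>0$, hence $k_1<1$ and $K$ is nowhere orthogonal to the orbits, while $g_{22},g_{33}>0$ from~\eqref{gst-from-Fuchs}. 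The gauge condition~\eqref{detg-plane} is exactly the relation $2(1-k_1)\abs{g}^{1/2}=1$ announced just before the statement: evaluating~\eqref{gst-from-Fuchs} at $s=1$ gives $g_{22}g_{33}=\tfrac{1}{4}\bigl((1+\upsilon_\sharp)e^{-\upsilon_\flat}\bigr)^{2(1-k_1)}$ and $\abs{g_{11}}=(1+\upsilon_\sharp)^{2k_1}e^{2(1-k_1)\upsilon_\flat}$, whose product is $\tfrac{1}{4}(1+\upsilon_\sharp)^2$, so $\abs{g}^{1/2}=\tfrac{1}{2}(1+\upsilon_\sharp)=1/(2(1-k_1))$.

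It then remains to check that the output is a constrained singularity data set. The identity $\Tr K=1$ and the Hamiltonian constraint $1-\Tr K^2-8\pi\phi_0^2=2\bigl(\upsilon_\sharp+\tfrac{1}{4}-4\pi\phi_\sharp^2-\psi_\sharp^2\bigr)/(1+\upsilon_\sharp)^2=0$ are immediate from~\eqref{sharpflat-to-01} and~\eqref{alphasharpflat}, as already noted in the text. The substantive point --- and the step I expect to be the main obstacle --- is the momentum constraint $\nabla_a K^a_b=8\pi\phi_0\,\del_b\phi_1$. Since $g,K,\phi_0,\phi_1$ all depend on $w$ only, its sole nontrivial component is $b=w$, and for the diagonal metric and diagonal $K$ of~\eqref{sharpflat-to-01} a short Christoffel computation (the $(\log g_{11})'$ pieces cancel) reduces it to $k_1'+\tfrac{1}{2}(k_1-k_2)(\log g_{22})'+\tfrac{1}{2}(k_1-k_3)(\log g_{33})'=8\pi\phi_0\phi_1'$. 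I would set $\lambda\coloneqq 1+\upsilon_\sharp=1/(1-k_1)$, rewrite $k_1,k_2,k_3,\phi_0,\phi_1,g_{22},g_{33}$ from~\eqref{sharpflat-to-01}--\eqref{gst-from-Fuchs} in terms of $\lambda,\phi_\sharp,\phi_\flat,\psi_\sharp,\psi_\flat,\upsilon_\flat$, differentiate in $w$ using $\lambda'=\upsilon_\sharp'=8\pi\phi_\sharp\phi_\sharp'+2\psi_\sharp\psi_\sharp'$ (the derivative of the Hamiltonian relation), and collect terms. The many $\lambda$-dependent contributions cancel thanks to $\Tr K=1$ and the Hamiltonian relation, and what survives is precisely the second equation in~\eqref{alphasharpflat}, namely $\upsilon_\flat'=8\pi\phi_\flat'\phi_\sharp+2\psi_\flat'\psi_\sharp$. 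The computation is identical on either side $\pm$ of the singularity.

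For the explicit inverse, the system~\eqref{sharpflat-to-01} is triangular and I would solve it in order: $k_1\mapsto\upsilon_\sharp=k_1/(1-k_1)$ (so $1+\upsilon_\sharp=1/(1-k_1)$), then $\psi_\sharp=(k_2-k_3)/(2(1-k_1))$ and $\phi_\sharp=\phi_0/(1-k_1)$, then recover $\upsilon_\flat$ and $\psi_\flat$ from the product and ratio of the $dx^2$ and $dy^2$ coefficients of $g$, and finally $\phi_\flat$ from $\phi_1=\phi_0\bigl(\log(1+\upsilon_\sharp)-\upsilon_\flat\bigr)+\phi_\flat$; this reproduces~\eqref{01-to-sharpflat}, and both compositions are identities by construction. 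It then suffices to confirm that~\eqref{01-to-sharpflat} sends an admissible plane-symmetric singularity data set to data obeying~\eqref{alphasharpflat}: the $\upsilon_\sharp$ relation is the ADM Hamiltonian constraint together with $\Tr K=1$ rewritten, and the $\upsilon_\flat'$ relation is the ADM momentum constraint rewritten, which is exactly the identity established in the previous paragraph read backwards --- legitimate since the algebraic change of variables there is invertible. This establishes that $\SscrFfromADM$ given by~\eqref{01-to-sharpflat} is the inverse of $\SscrADMfromF$ on the stated set.
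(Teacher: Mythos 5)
Your proposal is correct and follows essentially the same route as the paper: verify that the formulas invert \eqref{sharpflat-to-01} (with $g_{11}$ recovered via the gauge condition~\eqref{detg-plane}), and identify the two relations in~\eqref{alphasharpflat} with the Hamiltonian and momentum constraints, the latter reducing in plane symmetry to exactly the identity $k_1' + (k_1-k_2)(\log|g_{22}|^{1/2})' + (k_1-k_3)(\log|g_{33}|^{1/2})' = 8\pi\phi_0\phi_1'$ that you write, whose cancellations indeed rest on $\Tr K=1$ and the Hamiltonian constraint as you assert. The only notable difference is cosmetic: the paper derives the $\upsilon_\sharp$ relation (and the genericity of the parametrization away from $k_1=1$) via a stereographic projection of the constraint sphere rather than your direct algebra, and it carries out the momentum-constraint cancellation explicitly where you sketch it.
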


\begin{proof}
We readily plug these formulas~\eqref{01-to-sharpflat} into~\eqref{sharpflat-to-01} and simplify them using only $\Tr K=1$.
The singularity data set $(g,K,\phi_0,\phi_1)$ is then exactly reproduced, except for~$g_{11}$, for which one must additionally use~\eqref{detg-plane}.
This does not conclude the proof, as there remains to establish that for any plane-symmetric data set $(g,K,\phi_0,\phi_1)$ the parameters $(\phi_\sharp,\phi_\flat,\psi_\sharp,\psi_\flat,\upsilon_\sharp,\upsilon_\flat)$ given in~\eqref{01-to-sharpflat} are valid Fuchsian data in the sense that $\upsilon_\sharp$ and~$\upsilon_\flat$ obey the relations~\eqref{alphasharpflat}.

We first explain $\upsilon_\sharp=-1/4+4\pi\phi_\sharp^2+\psi_\sharp^2$.  We show that the data sets~\eqref{sharpflat-to-01} admit the most general values of $(\phi_0,k_1,k_2,k_3)$ except for the value $(0,1,0,0)$, which is only obtained in the infinite $\phi_\sharp$ or~$\psi_\sharp$ limit.
The Hamiltonian constraint $8\pi\phi_0^2+k_1^2+k_2^2+k_3^2=1$ defines a $3$-sphere and we consider the stereographic projection with respect to its pole at $k_1=1$, and the inverse projection for which we introduce a notation~$\upsilon_\sharp$:
\[
(\phi_\sharp,\beta,\gamma) = \frac{(\phi_0,k_2,k_3)}{1-k_1} \in\RR^3, \qquad
(\phi_0,k_1,k_2,k_3) = \frac{(\phi_\sharp,\upsilon_\sharp,\beta,\gamma)}{1+\upsilon} , \qquad
1+\upsilon_\sharp = \frac{1}{2}(1+8\pi\phi_\sharp^2+\beta^2+\gamma^2) = \frac{1}{1-k_1}.
\]
The trace condition $\Tr K=1$ translates to $\upsilon_\sharp+\beta+\gamma=1+\upsilon_\sharp$, whose solutions we parametrize as $\beta=1/2+\psi_\sharp$, $\gamma=1/2-\psi_\sharp$.
The resulting parametrization of $(\phi_0,K)$ coincides with the one in \eqref{sharpflat-to-01} and~\eqref{01-to-sharpflat}, and the relation between $\upsilon_\sharp$ and $\phi_\sharp,\beta,\gamma$ reproduces $\upsilon_\sharp=-1/4+4\pi\phi_\sharp^2+\psi_\sharp^2$ given in~\eqref{alphasharpflat}.

To check that $\upsilon_\flat'=8\pi\phi_\sharp\phi_\flat'+2\psi_\sharp \psi_\flat'$ we use the momentum constraint, which in plane symmetry with diagonal $g$ and~$K$ reduces to the following (primes denote $\del_1=\del_w$):
\[
0 = k_1' + (k_1-k_2) \bigl(\log|g_{22}|^{1/2}\bigr)' + (k_1-k_3) \bigl(\log|g_{33}|^{1/2}\bigr)' - 8\pi\phi_0 \phi_1' .
\]
Using~\eqref{sharpflat-to-01} (as we discussed above, these relations hold), we convert to $\phi_\sharp,\phi_\flat,\psi_\sharp,\psi_\flat,\upsilon_\sharp,\upsilon_\flat$ progressively, converting $\phi_0,k_1,k_2,k_3$ only in a second step to keep expressions manageable.  This yields
\begin{align*}
  0 & = k_1' + (k_1-k_2)\, \bigl(\psi_\flat+k_2(\log(1+\upsilon_\sharp)-\upsilon_\flat)\bigr)' \\
  & \phantom{{}= k_1'} + (k_1-k_3)\, \bigl(-\psi_\flat+k_3(\log(1+\upsilon_\sharp)-\upsilon_\flat)\bigr)' - 8\pi\phi_0\bigl(\phi_\flat+\phi_0(\log(1+\upsilon_\sharp)-\upsilon_\flat)\bigr)'
  \\
    & = \frac{\upsilon_\sharp'}{(1+\upsilon_\sharp)^2} - \frac{1}{1+\upsilon_\sharp} \bigl(\log(1+\upsilon_\sharp)-\upsilon_\flat\bigr)' - \frac{2\psi_\sharp \psi_\flat'+8\pi\phi_\sharp\phi_\flat'}{1+\upsilon_\sharp}
      = \frac{\upsilon_\flat'-2\psi_\sharp \psi_\flat'-8\pi\phi_\sharp\phi_\flat'}{1+\upsilon_\sharp} ,
\end{align*}
where we collected separately the terms involving $\log(1+\upsilon_\sharp)-\upsilon_\flat$ and its derivative: the former has coefficient $(k_1-k_2)k_2'+(k_1-k_3)k_3'-8\pi\phi_0\phi_0'=0$ while the latter has coefficient $(k_1-k_2)k_2+(k_1-k_3)k_3-8\pi\phi_0^2=k_1-1=-1/(1+\upsilon_\sharp)$.
We learn that the momentum constraint is equivalent to $\upsilon_\flat'=8\pi\phi_\sharp\phi_\flat'+2\psi_\sharp \psi_\flat'$, precisely as we wanted.

We conclude that the parametrization~\eqref{sharpflat-to-01} gives the most general plane symmetric singularity data set with $g_{22},g_{33}>0$ and $k_1\neq 1$ everywhere, up to a suitable reparametrization of~$w$.
\end{proof}

\subsection{Scattering maps in plane symmetry}
\label{sec:83}

\paragraph{Scattering maps for Fuchsian data.}

The singularity scattering maps $\Sbf\colon(\gmoi,\Kmoi,\phi_{0-},\phi_{1-})\mapsto(\gpoi,\Kpoi,\phi_{0+},\phi_{1+})$ studied in~\cite{LLV-1a}
can finally be translated into maps relating Fuchsian data on the two sides:
\be
\SscrFfromADM\circ\Sbf\circ\SscrADMfromF\colon
(\phi_\sharp^-,\phi_\flat^-,\psi_\sharp^-,\psi_\flat^-,\upsilon_\sharp^-,\upsilon_\flat^-)\mapsto(\phi_\sharp^+,\phi_\flat^+,\psi_\sharp^+,\psi_\flat^+,\upsilon_\sharp^+,\upsilon_\flat^+) .
\ee
In general, the scattering map~$\Sbf$ does not preserve the condition~\eqref{detg-plane} that we used to gauge-fix reparametrizations of the coordinate~$w$ along the singularity.
We only defined~$\SscrFfromADM$ when this condition is obeyed, so the composition $\SscrFfromADM\circ\Sbf\circ\SscrADMfromF$ implicitly includes a coordinate change $w_-\mapsto w_+(w_-)$.
The change of coordinates is characterized by
\bel{wpoi-wmoi}
\frac{dw_+}{dw_-} = 2(1-k_{1+})\abs{g_{+,\Sbf}}^{1/2} = \frac{2(1-k_{1+})\abs{g_{+,\Sbf}}^{1/2}}{2(1-k_{1-})\abs{\gmoi}^{1/2}},
\ee
where we used $2(1-k_{1-})\abs{\gmoi}^{1/2}=1$ to write an expression that is easier to evaluate for concrete scattering maps.
Here, $g_{+,\Sbf}$~denotes the metric obtained by applying~$\Sbf$ and before performing the change of coordinates.
In the new coordinate~$w_+$, the metric~$\gpoi$ obeys by construction $2(1-k_{1+})\abs{\gpoi}^{1/2}=1$.

Throughout this paper we concentrate on so-called ultralocal scattering maps~$\Sbf$, as introduced in~\cite{LLV-1a}, namely junction conditions such that data $(\gpoi,\Kpoi,\phi_{0+},\phi_{1+})$ at a point depends on data $(\gmoi,\Kmoi,\phi_{0-},\phi_{1-})$ at the same point along the singularity, and not on any spatial derivatives.
The resulting $\SscrFfromADM\circ\Sbf\circ\SscrADMfromF$ are then ultralocal in the same sense provided one takes into account the change of variables: $(\phi_\sharp^+,\phi_\flat^+,\psi_\sharp^+,\psi_\flat^+,\upsilon_\sharp^+,\upsilon_\flat^+)(w_+(w_-))$ only depends on the value of $(\phi_\sharp^-,\phi_\flat^-,\psi_\sharp^-,\psi_\flat^-,\upsilon_\sharp^-,\upsilon_\flat^-)(w_-)$ and not on its derivatives.
Explicit expressions (in terms of Fuchsian data) for the general maps $\Siso$ and~$\Sani$ classified in 
Theorem 5.4 of \cite{LLV-1a}
are easy to write but unwieldy and unenlightening, so we refrain from writing them in general.

\paragraph{Three characterizations of momentum-preserving maps.}

Since our aim is merely to illustrate the use of scattering maps to construct a spacetime globally beyond singularities, rather than being fully general, we concentrate on the class of {\sl momentum-preserving ultralocal maps}.
Among ultralocal maps this class can be characterized in three ways.
\bei
\item {\bf Maps such that $\Kpoi=\Kmoi$.}  We took this to be the definition of momentum-preservation, and we classified such maps in Theorem 5.4 of \cite{LLV-1a}.
In the notations there, they are $\Sani_{\Phi,c,\eta}$ with $\eta=+1$ and $\Phi=a(\phi_0,\phi_1+f(\theta,\phi_0))$ with $a=\pm 1$ for some suitably regular function $f=f(\theta,\phi_0)$.
We give them explicitly in~\eqref{junction-for-planes}.

\item {\bf Maps that are shift-covariant, quiescence-preserving, and whose inverse also is.}
A scattering map is \emph{shift-covariant} if it respects the symmetry of the wave equation under constant shifts of~$\phi$, in the sense that such a constant shift on one side is mapped to a constant shift on the other side.
By Theorem~5.4 of~\cite{LLV-1a}, 
shift-covariant maps have $\phi_{0+}/\rpoi=\eta a^{-1}\phi_{0-}/\rmoi$ and $\phi_{1+}=a(\phi_{1-}+f)$ for some sign~$\eta$, some non-zero $a\in\RR$ and function $f=f(\thetamoi,\phi_{0-})$.
A scattering map is \emph{quiescence-preserving} if it maps a positive-definite~$K_-$ to a positive-definite~$K_+$ (this is a natural condition when taking into account gradient instabilities).
In the present case, this imposes $|a|\leq 1$ and $\eta=+1$.
The inverse of such a scattering map has the inverse value $a\to 1/a$, and imposing that this inverse is quiescence-preserving requires $|1/a|\leq 1$, hence $a=\pm 1$, which leads to $\phi_{0+}=\phi_{0-}$ so $\rpoi=\rmoi$ so $(K_+-\delta/3)=\eta(\rpoi/\rmoi)(K_--\delta/3)=(K_--\delta/3)$, which is the statement of momentum-preservation.
The converse is easy to check.

\item {\bf Invertible maps such that $dw_+/dw_-$ is constant.}
Among the maps $\Siso$ and $\Sani$ listed in Theorem~5.4 of~\cite{LLV-1a}, invertibility rules out~$\Siso$.
For~$\Sani$, we rely on~(5.12g) in \cite{LLV-1a}, that is, 
\bel{Sani-gpoi}
\gpoi = c^2 \biggl(\frac{\rmoi}{\rpoi}\biggr)^{2/3}
\exp\biggl(16\pi\epsilon\xi\cos\Theta_-
-16\pi\epsilon\bigl( \del_{\thetamoi} \xi + \frac{\phi_{0+}}{\rpoi} \del_{\thetamoi} \phi_{1+} \bigr)
\sin\Theta_-\biggr) \gmoi ,
\ee
where $\epsilon=\pm 1$ and $c>0$, while $\phi_{0+}$ and $\phi_{1+}$ are functions of $(\thetamoi,\phi_{0-},\phi_{1-})$ specified by the singularity scattering map.
We compute the change of coordinates~\eqref{wpoi-wmoi} by working out
that the volume factor scales as $\abs{g_{+,\Sbf}}^{1/2}/\abs{\gmoi}^{1/2}=c^3\rmoi/\rpoi$, then writing Kasner exponents $k_{1\pm}=\frac{1}{3}+\frac{2}{3}r_{\pm}\cos\theta_{\pm}$ with $\cos\theta_+=\epsilon\cos\theta_-$.  We thus want the following to be some constant~$C$:
\bel{wpoi-wmoi-Sani}
\frac{dw_+}{dw_-} = c^3 \frac{1/\rpoi-\epsilon\cos\theta_-}{1/\rmoi-\cos\theta_-} = C .
\ee
For each value of~$\theta_-$ this expresses $\rpoi$ as a function of~$\rmoi$.
Invertibility of the scattering map requires this function to be a bijection from $[0,1]$ to itself.
Since $\rmoi=0$ gives $\rpoi=0$ the bijection must be non-decreasing and map $\rmoi=1$ to $\rpoi=1$.
We learn $c^3(1-\epsilon\cos\theta_-) = C(1-\cos\theta_-)$ for all~$\theta_-$, hence $C=c^3$ and $\epsilon=+1$.
Plugging this back into~\eqref{wpoi-wmoi-Sani} yields $\rpoi=\rmoi$ hence $\Kpoicirc=\epsilon(\rpoi/\rmoi)\Kmoicirc=\Kmoicirc$, which is the statement of momentum-preservation.
The converse is easy to check.
\eei

The third characterization of momentum-preserving maps may seem like an ad hoc condition, so let us justify why it is natural.
We impose invertibility, which is a very sensible requirement in the timelike case: it should be possible to express the junction condition both as a scattering map $\Sbf\colon(\gmoi,\Kmoi,\phi_{0-},\phi_{1-})\mapsto(\gpoi,\Kpoi,\phi_{0+},\phi_{1+})$ and as a map~$\Sbf^{-1}$ in the other direction.
In contrast, in the spacelike case, microscopic physics might be not be time-reversible and may involve dissipation phenomena, so that the scattering maps would not need to be invertible if we only had spacelike singularity hypersurfaces.
The condition that $dw_+/dw_-$ is constant, namely that $w_+$ depends linearly on~$w_-$, is motivated by the following observation.

\paragraph{Wave-equation for the areal coefficient.}

The Einstein equation~\eqref{eqxi} states that the area coefficient $A=\pm e^{2a}$ obeys a wave equation
\bel{Awave-everywhere}
A_{\uplus\uminus}=0 ,
\ee
valid away from singularity hypersurfaces $A=0$.
We now explain that for momentum-preserving scattering maps with $c=1$ (see below for $c\neq 1$), the wave equation is obeyed everywhere.
On each regularity domain the wave equation implies that $2A=\rplus(\uplus)+\rminus(\uminus)$ for some functions $\rplus,\rminus$, and these functions may be discontinuous at singularity hypersurfaces.
Our only task is thus to show that they are continuous.

For momentum-preserving scattering maps with $c=1$, the same coordinate $w=w_+=w_-$ can be used on both sides of the singularity.
Converting from the Gaussian coordinates adapted to the ADM formalism to null-coordinates adapted to the global evolution problem through~\eqref{coordinates-Gauss-null}, we learn that $-\rplus+\rminus$ is continuous across the singularity.
Since $\rplus+\rminus\to 0$ at the singularity on both sides of the singularity, we learn that $\pm\rplus+\rminus\to 0$ are both continuous at the singularity, hence $\rplus$ and~$\rminus$ also are.
Altogether, there are globally defined functions $\rplus,\rminus$ such that $2A=\rplus(\uplus)+\rminus(\uminus)$, namely the wave equation~\eqref{Awave-everywhere} holds everywhere, as we announced and used in Section~\ref{section---7} when decomposing spacetime into regularity domains $\{A>0\}$ and $\{A<0\}$ separated by singularity hypersurfaces $\{A=0\}$.

The constant~$c$ in the maps~\eqref{junction-for-planes} of interest to us scales the metric by $c^2$ on one side of the singularity.
Thanks to the fact that (in our collision problem) space-time is naturally bi-partitioned along singularity hypersurfaces according to the sign of~$A$, as depicted for instance in Figure~\ref{fig:1.2}, we can rescale the metric by $c^{-2}$ in all regions with $A<0$ to eliminate the parameter from scattering maps.
This reduces all momentum-preserving ultralocal scattering maps to the case $c=1$.
Likewise, one can normalize $a=+1$ (by flipping the sign of~$\phi$ in regions where $A<0$, if $a=-1$).

\paragraph{Momentum-preserving maps for Fuchsian data.}

As we just discussed, momentum-preserving ultralocal scattering maps can be normalized to set $a=+1$ and $c=1$.
We work out
\bse\label{plane-scattering-choice}
\be
\aligned
\phi_\sharp^+ & = \phi_\sharp^- , &
\psi_\sharp^+ & = \psi_\sharp^- , &
\upsilon_\sharp^+ & = \upsilon_\sharp^- , \\
\phi_\flat^+ & = \phi_\flat^- + \phi_\sharp^- \beta_1 + f , \quad &
\psi_\flat^+ & = \psi_\flat^- + \psi_\sharp^- \beta_1 + \beta_2, \quad &
\upsilon_\flat^+ & = \upsilon_\flat^- + (1 + \upsilon_\sharp^-) \beta_1 ,
\endaligned
\ee
where $\xi,\beta_1,\beta_2$ are functions of $(\thetamoi,\phi_{0-})$, or equivalently $(\phi_\sharp^-,\upsilon_\sharp^-)$, given by
\be
\aligned
\beta_1 & = -2\log c + 8\pi\cos\thetamoi\,\xi - 8\pi\sin\thetamoi\,\del_{\thetamoi}\bigl(\frac{\phi_{0-}}{r(\phi_{0-})} f + \xi\bigr) , 
\qquad
&& \xi = - \int_{-1/\sqrt{12\pi}}^{\phi_{0-}} \del_y f(\thetamoi,y)\,\frac{ydy}{r(y)}, 
\\
\beta_2 & = -4\pi\sqrt{3}\bigl(\sin\thetamoi\,\xi + \cos\thetamoi\,\del_{\thetamoi}\bigl(\frac{\phi_{0-}}{r(\phi_{0-})} f + \xi\bigr)\bigr) ,
\endaligned
\ee
and we recall for completeness how $\phi_{0-},\theta_-$ are related to Fuchsian data:
\be
\phi_{0-} = \frac{\phi_\sharp^-}{1+\upsilon_\sharp^-} , \qquad
r_- = r(\phi_{0-}) = \sqrt{1-12\pi\phi_{0-}^2} , \qquad
\cos\theta_- = \frac{\upsilon_\sharp^--1/2}{r_-(1+\upsilon_\sharp^-)} , \qquad
\sin\theta_- = \frac{-\sqrt{3}\,\psi_\sharp^-}{r_-(1+\upsilon_\sharp^-)} .
\ee
\ese
The junction condition has a triangular structure, in which $(\phi_\sharp,\psi_\sharp,\upsilon_\sharp)$ variables are the same on both sides, and the subleading variables $(\phi_\flat,\psi_\flat,\upsilon_\flat)$ simply jump by a nonlinear function of the leading ones.
In addition, the $\upsilon$ variables {\sl do not appear} in the expressions for $\phi$ and~$\psi$ variables, so we learn that the restricted data $(\phi_\sharp^+,\phi_\flat^+,\psi_\sharp^+,\psi_\flat^+)$ can be determined in terms of the corresponding restricted data on the other side.  This should be contrasted with Remark~\ref{rem:ups-useless}, above.
While this simplification of the problem is not crucial, it is actually rather useful in practice, as it lets us concentrate on the two main variables $\psi,\phi$ before solving for the last metric coefficient~$\upsilon$ (or equivalently~$\omega$).

\paragraph{Causality of momentum-preserving scattering maps.}

When solving the initial value problem with timelike singularity hypersurface in the next section, we discover that the jump in $(\phi_\flat,\psi_\flat)$ prescribed by~\eqref{plane-scattering-choice} can be determined from the initial data.
We find that the initial value problem is well posed if and only if the map~\eqref{causal-condition} below, which controls the jump of $(\phi_\flat,\psi_\flat)$, is bijective so that $(\phi_\sharp^-,\psi_\sharp^-)$ can be determined from this jump.

\begin{definition}[Causality for momentum-preserving ultralocal scattering maps]\label{def:causal}
  A momentum-preserving ultralocal scattering map $\Sbf=\Sani_{(\phi_0,\phi_1+f),c,+}$ (with $c=1$) determined by a periodic function $f=f(\theta,\phi_0)$ is {\bf causal} if the map
  \bel{causal-condition}
  (\phi_\sharp,\psi_\sharp) \mapsto (\phi_\sharp \beta_1 + f , \psi_\sharp \beta_1 + \beta_2)
  \ee
  is bijective, where $\beta_1,\beta_2$ are constructed from~$f$ through~\eqref{plane-scattering-choice}.
\end{definition}

\begin{example}
  Any affine map $f(\theta,\phi_0)=b\phi_0+e$ for $b\neq 0$ and $e\in\RR$ gives a causal scattering map.
  Indeed, one computes $\phi_\sharp\beta_1+f=(2b/3)\phi_\sharp+e$ and $\psi_\sharp \beta_1 + \beta_2=(2b/3)\psi_\sharp$, so that the map~\eqref{causal-condition} is then simply a bijective rescaling by $2b/3\neq 0$.
\end{example}


\section{Building a cyclic spacetime one diamond at a time}
\label{section---9} 

\subsection{Global solution of the plane collision problem}

\paragraph{Main statement.}

We are now in a position to provide a proof of Theorem~\ref{theo:first} which we will first restate in a more detailed form. 
We have described the incoming gravitational data and our choice of null coordinates in Section~\ref{section----6}, and presented the equations in Section~\ref{section---8}: the gravitational field equations away from singularities in Section~\ref{sec:81} and the scattering maps across singularities in Sections~\ref{sec:82} and \ref{sec:83}.
Hence, we can now summarize our formulation of the characteristic initial value problem:
we seek metric coefficients $\omega,A,\psi$ and a matter field~$\phi$
satisfying the field equations and junction conditions within the region $\Mup=\{ 0<\uplus; \, 0<\uminus\}$ when the following data are prescribed on $\Ncalplus_0$, $\Ncalminus_0$, and $\Pcal_0$: 
\bel{101-b}
\aligned
\omega & = 0, \quad &
\psi & = \psiplus_0, \quad &
\phi & = \phiplus_0 \qquad 
&& \text{ on the hypersurface } \Ncalplus_0,
\\
\omega & = 0, \quad &
\psi & = \psiminus_0, \quad &
\phi & = \phiminus_0 \qquad 
&& \text{ on the hypersurface } \Ncalminus_0,
\\ 
A & = 1 \quad &
A_\uplus &= 0, \quad &
A_\uminus & = 0 
\qquad 
&& \text{ on the two-plane } \Pcal_0.
\endaligned
\ee
For the junction conditions, we rely on a momentum-preserving scattering map~$\Sbf$, as described earlier in~\eqref{plane-scattering-choice}.
In the course of our proof below, we encounter a causality condition on~$\Sbf$ without which the evolution problem with a timelike singularity hypersurface would be ill-posed.
Our objective is to establish that the initial data set $(\psiplus_0, \phiplus_0, \psiminus_0, \phiminus_0)$ uniquely determines the unknown metric and matter field and, therefore, the global spacetime geometry in~$\Mup$.

For the sake of simplicity, we henceforth concentrate on data satisfying a non-degeneracy condition,
and throughout our discussion we work with functions that are $C^\infty$ (that is, smooth) away from the singularity hypersurfaces.

\begin{definition}[Generic initial data]
  \label{rem:non-degeneracy}
  An initial data set is said to be {\bf generic} (or non-degenerate) if the functions $\rplus$ and $\rminus$ are never constant on an interval other than the initial one $(-\infty,\uplus_\star^-]$ and $(-\infty,\uminus_\star^-]$, respectively, and if none of the local maxima of~$\rplus$ and $\rminus$ are exactly opposite.
\end{definition}

As noted in Proposition~\ref{prop:AreaFuncSm}, this genericity assumption does not preclude compactly supported initial data or intervals with no incoming radiation.
The two genericity conditions ensure respectively that the singular locus has no null hypersurface component and that it has no codimension~$2$ component, as explained in Section~\ref{sec:diamonds}.
This is needed in order for the constructed spacetime to be a cyclic spacetime in the sense of \cite{LLV-1a}.

\begin{theorem}[Global spacetime geometry for the plane gravitational collision problem]
\label{theorem-plane}
Let $\Sbf$ be a momentum-preserving ultralocal scattering map, that is, $\Sbf=\Sani_{\Phi,c,+}$ with $\Phi(\theta,\phi_0,\phi_1)=(\phi_0,\phi_1+f(\theta,\phi_0))$,
which additionally is causal in the sense of Definition~\ref{def:causal}.
Let $\phiplus_0,\psiplus_0,\phiminus_0,\psiminus_0\colon[0,+\infty)\to\RR$ be smooth data for the matter field and modular parameter along $\Ncalplus_0,\Ncalminus_0$, with $\phiplus_0(0)=\phiminus_0(0)$ and $\psiplus_0(0)=\psiminus_0(0)$, and assume that these data are generic in the sense of Definition~\ref{rem:non-degeneracy}.
\bei 
  
\item Then, the characteristic initial value problem associated with the plane-symmetric initial data~\eqref{101-b} induced on the two wave fronts $\Ncalplus_0,\Ncalminus_0$ admits a global Cauchy development $(\Mup, g, \phi)$ that is a cyclic spacetime based on the scattering map~$\Sbf$ in the sense of  \cite{LLV-1a}.

\item By the definition of a cyclic spacetime, the Einstein field equations are satisfied away from a collection of singularity hypersurfaces, while the junction condition prescribed by~$\Sbf$ holds across each (spacelike or timelike) singularity hypersurface, aside from a $2$-dimensional exceptional locus.
  The curvature of $(\Mup, g)$ generically blows up as one approaches any singularity hypersurface. 

\eei 
\end{theorem}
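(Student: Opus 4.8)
The plan is to construct the Cauchy development inductively, one monotonicity diamond at a time, along the partition $\Mup=\bigcup_{i,j\geq 1}\Delta_{ij}$ of Section~\ref{sec:diamonds}. The genericity hypothesis (Definition~\ref{rem:non-degeneracy}) guarantees that no diamond has null or identically vanishing $\nabla A$ and that no singularity reaches a diamond corner, so each $\Delta_{ij}$ has either timelike $\nabla A$ (hence at most one spacelike singularity hypersurface inside it, a Big Crunch/Big Bang interface) or spacelike $\nabla A$ (hence at most one timelike singularity hypersurface inside it, a two-sided interface). Ordering the diamonds by $\Delta_{ij}\preceq\Delta_{kl}$ iff $i\leq k$ and $j\leq l$, one proceeds by induction: once the solution on $\Delta_{i-1,j}$ and $\Delta_{i,j-1}$ is known, restricting it to the two past null faces of $\Delta_{ij}$ supplies Goursat data there; the base diamonds $\Delta_{1,j}$, $\Delta_{i,1}$ take their data from $\Ncalplus_0$, $\Ncalminus_0$ and the flat region $\Mdown$ through~\eqref{101-b}. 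Since $A=(\rplus+\rminus)/2$ is already known explicitly from~\eqref{equa:Bexpli} and Proposition~\ref{prop:AreaFuncSm}, within each diamond only $\psi$, $\phi$, and then $\omega$ remain to be found; the construction being canonical at each step, uniqueness within the class follows automatically.

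Inside a diamond I would first solve the essential equations~\eqref{cphi-eq-deux}--\eqref{cphi-eq-scalarfield}: these are decoupled linear wave equations whose only singular coefficient is $1/A$, and in the local coordinates $(\rplus,\rminus)$ they take the Euler--Poisson--Darboux normal forms recorded in Section~\ref{section---7} ($\phi_{tt}+t^{-1}\phi_t-\phi_{zz}=0$ for timelike $\nabla A$, with singularity at $t=0$ in the future of the data; $\phi_{tt}-\phi_{zz}-z^{-1}\phi_z=0$ for spacelike $\nabla A$, with singularity at $z=0$). I would establish an Abel representation formula for the solution of the regular Goursat problem up to the singularity and read off from it the Fuchsian expansion~\eqref{asympt}, that is, the leading and subleading profiles $(\phi_\sharp,\phi_\flat,\psi_\sharp,\psi_\flat)$ as functions of $w=-\rplus+\rminus$; together with $\upsilon_\sharp,\upsilon_\flat$ built from $\omega$ these form a Fuchsian data set satisfying the constraints~\eqref{alphasharpflat}, which by Remark~\ref{rem:ups-useless} are Einstein's Hamiltonian and momentum constraints. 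The coefficient $\omega$, and hence the full metric, is then recovered by integrating the first-order transport equations~\eqref{eq-a1-baru}--\eqref{eq-a1-ubar} along the characteristics, using $\omega=0$ on the initial hypersurfaces and propagating across each singularity via the $\upsilon_\flat$-jump of~\eqref{plane-scattering-choice}; equation~\eqref{eq-a2} then holds automatically as an integrability consequence of the others, and~\eqref{eqxi} holds globally after the $c=1$ normalization of Section~\ref{sec:83}.

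To cross a singularity I would convert the Fuchsian data into an ADM singularity data set $(\gmoi,\Kmoi,\phi_{0-},\phi_{1-})$ via Lemmas~\ref{lem:ADMfromF}--\ref{lem:FfromADM}, apply the prescribed momentum-preserving map $\Sbf$, and convert back through~\eqref{plane-scattering-choice}. For a timelike-$\nabla A$ diamond the singularity is spacelike: one solves the Fuchsian singular initial value problem up to $t=0$, applies the scattering jump --- which by the triangular structure of~\eqref{plane-scattering-choice} leaves $(\phi_\sharp,\psi_\sharp,\upsilon_\sharp)$ unchanged and shifts $(\phi_\flat,\psi_\flat,\upsilon_\flat)$ by a nonlinear function of the leading data --- and continues past $t=0$, which is well posed by the standard Fuchsian theory. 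For a spacelike-$\nabla A$ diamond the singularity is timelike and must be handled as a two-sided transmission problem at $z=0$: the jump of $(\phi_\flat,\psi_\flat)$ is not free data but is forced by the incoming solution, so well-posedness requires that one can reconstruct the leading coefficients $(\phi_\sharp,\psi_\sharp)$ from that jump, i.e.\ that the map~\eqref{causal-condition} be bijective --- this is exactly the causality hypothesis of Definition~\ref{def:causal}. Gluing the local solutions along an interior null face $\uplus=\uplus_k$ or $\uminus=\uminus_k$ is harmless, because there $\rplus'$ or $\rminus'$ vanishes so $1/A$ stays bounded and the essential equations reduce to the ordinary non-singular wave equation across the face; the traces match with the required regularity, and the genericity hypotheses ensure the assembled object satisfies the definition of a cyclic spacetime based on $\Sbf$ from~\cite{LLV-1a}.

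Finally, for the curvature statement I would substitute the Fuchsian expansions~\eqref{asympt},~\eqref{alphasharpflat} into the Ricci components~\eqref{equa:Ric1} and the curvature scalars: these grow like a negative power of the proper time or distance to the singularity unless the leading data degenerate ($\phi_\sharp=\psi_\sharp=0$ with isotropic Kasner exponents), which can occur at most at isolated values of $w$; hence for generic data the curvature blows up along every singularity hypersurface. The exceptional $2$-dimensional locus on which the junction condition need not be imposed consists of the symmetry orbits over the points where a singularity meets a diamond boundary (where its tangent is null) or the initial axes $\uplus=0$, $\uminus=0$, precisely where the Gaussian-to-null change of coordinates of Section~\ref{sec:82} degenerates. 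The step I expect to be the main obstacle is the timelike-singularity crossing: in contrast with the spacelike (cosmological bounce) case, where the scattering map directly prescribes the Fuchsian data on the far side, here half of that data is forced by the incoming solution, so that verifying well-posedness of the transmission problem is equivalent to, and amounts to establishing, the causality condition.
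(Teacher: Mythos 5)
Your proposal is correct and follows essentially the same route as the paper's proof: induction over the monotonicity diamonds, the Abel representation formula for the Goursat problem, extraction of Fuchsian data and conversion to ADM singularity data, application of the momentum-preserving map, well-posedness across a timelike singularity being equivalent to the bijectivity condition of Definition~\ref{def:causal}, and blow-up of curvature read off from the Fuchsian expansion. The only two points you assert rather than establish are (i) that the jump of $(\phi_\flat,\psi_\flat)$ across a timelike singularity is determined by the incoming data alone --- in the paper this follows from $\psi_\sharp^+=\psi_\sharp^-$, which forces the outgoing combination $\Psiplus^-(-z)+\Psiminus^+(z)$ to equal $-\Psiplus^+(-z)-\Psiminus^-(z)$, so that the jump involves only known data --- and (ii) that the degeneracy $\phi_\sharp=0$ occurs at most at isolated points, which is false for special data (e.g.\ vacuum, $\phi\equiv 0$) and for generic data is precisely what the paper's perturbation argument proves, using Lemma~\ref{prop:evol-char} to write $\phi_\sharp=\Phiplus+\Phiminus$ with the two pieces depending on disjoint portions of the initial data (your claim that $1/A$ stays bounded on interior null faces is likewise not quite right at the points where the singular locus touches them, but those points lie in the exceptional locus you correctly identify, so the conclusion stands).
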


\paragraph{Construction of the solution.}

Let us first summarize steps we have already achieved to obtain the geometry of~$\Mup$.

\bei
\item {\bf Initial data for the conformal coefficient $\omega$.}
We reiterate Remark~\ref{rem:invariance}: an initial data set with non-zero $\omega$ can be brought to the form~\eqref{101-b} with $\omega=0$ on $\Ncalplus_0,\Ncalminus_0$ by choosing coordinates $\uplus,\uminus$ to be affine parameters along the initial hypersurfaces.
We assume for definiteness that data for the two incoming waves are prescribed for an infinite range $\uplus\in[0,+\infty)$ of affine parameter (and likewise for~$\uminus$).
If data are only prescribed until some finite value $\uplus=\uplus_{\max}$, for example, our Cauchy development should simply be stopped at that value of $\uplus$ for all~$\uminus$ because data are missing to go further.

\item {\bf Initial data for the area function~$A$.}
As explained in Section~\ref{sec:64}, the prescribed incoming radiation data~\eqref{101-b} on $\Ncalplus_0$ determines a function $\Fplus(\uplus)=|A(\uplus,0)|^{1/2}$ on~$\Ncalplus_0$ by solving the Raychaudhuri equation~\eqref{equa:Ffunctions} $\Fplus_{\uplus\uplus}=-\Eplus\Fplus$ with $\Eplus=\psiplus_\uplus^2 + 4 \pi \phiplus_\uplus^2$, and likewise the data $\phiminus_0,\psiminus_0$ determines $\Fminus=|A|^{1/2}$ on~$\Ncalminus_0$.
The genericity assumption of Definition~\ref{rem:non-degeneracy} states that $\Fplus$ must not be constant on any interval other than the initial segment $(-\infty,\uplus_\star^-)$ before the start of the incoming wave, and likewise for~$\Fminus$.

\item {\bf Areal function~$A$ everywhere.}
These values of~$A$ along $\Ncalplus_0$ and~$\Ncalminus_0$ provide initial data for the wave equation $A_{\uplus\uminus}=0$, which is obeyed everywhere for our choice of scattering map as explained near~\eqref{Awave-everywhere}.  From its global solution $A(\uplus,\uminus) = \Fplus(\uplus)^2 + \Fminus(\uminus)^2 - 1$ one finds a collection of singularity hypersurfaces $\{A=0\}$ studied in Section~\ref{sec:singularities}.  The genericity assumption ensures that $\{A=0\}$ consists of spacelike and timelike hypersurfaces joined at a collection of points in the $(\uplus,\uminus)$ plane, with no null hypersurface.

\item {\bf Decomposition into monotonicity diamonds.}
In Section~\ref{sec:diamonds} we split~$\Mup$ along constant-$\uplus$ or constant-$\uminus$ null rays along which $A_\uplus=2\Fplus(\uplus)\Fplus'(\uplus)$ or $A_\uminus=2\Fminus(\uminus)\Fminus'(\uminus)$ vanish, respectively.
These rays partition the interaction domain $\Mup$ into monotonicity diamonds, which by definition are maximal characteristic domains within which the area coefficients $A_\uplus$ and $A_\uminus$ keep a constant sign.
Under our non-degeneracy assumption this sign is never zero: the sign of $A_\uplus$ is alternatively $\pm 1$ in successive intervals~$\Iplus_i$ of~$\uplus$ while that of $A_\uminus$ alternates in successive intervals~$\Iminus_j$ of~$\uminus$.
Thus, the gradient $\nabla A$ is alternatively timelike and spacelike in neighboring diamonds $\Delta_{ij}=\Iplus_i\times\Iminus_j$, in a checkerboard pattern.
\eei
To construct the metric and matter fields $\phi,\psi,\omega$ in the whole domain~$\Mup$ it is thus sufficient to solve the characteristic initial value problem in each diamond~$\Delta_{ij}$ successively, using values along future boundaries of~$\Delta_{ij}$ as initial data for the diamonds $\Delta_{i+1,j}$ and~$\Delta_{i,j+1}$.
By induction this constructs the spacetime geometry for all values of~$\uplus,\uminus$.

\paragraph{The initial value problem in each diamond.}

Throughout this section we work in a single diamond, hence we can use the local coordinates $(\rplus,\rminus)$, in which the (same) singular wave equation obeyed by the matter field~$\phi$ and modular parameter~$\psi$ takes a canonical form
\bel{eq:3-1-two}
\psi_{\rplus\rminus}  + {\psi_\rminus +  \psi_\rplus \over 2 (\rplus + \rminus)} = 0 .
\ee
This equation does not involve~$\omega$, and the junction condition for $\psi,\phi$ imposed by momentum-preserving scattering maps~\eqref{plane-scattering-choice} also does not involve~$\omega$ (nor its shifted version~$\upsilon$).
We can thus begin with these essential metric and matter fields $\psi,\phi$, whose evolution is {\sl decoupled except} at the singularity hypersurfaces where our scattering map, in general, does introduce some 
non-trivial coupling.
Once $\psi,\phi$ are known, the function $\omega=\omega(\uplus, \uminus)$ is easily obtained.
We summarize in Section~\ref{ssec:wrapup} how our construction yields a cyclic spacetime.

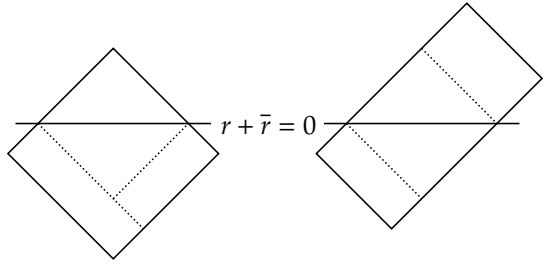
\begin{figure}
  \centering
  \begin{tikzpicture}[semithick]
    \begin{scope}
      \draw[densely dotted] (-1,0) -- (.4,-1.4);
      \draw[densely dotted] (0,-1) -- (1,0);
      \draw (-1.4,-.4) -- (0,1) -- (1.4,-.4) -- (0,-1.8) -- cycle;
      \draw (-1.3,0) -- (1.3,0) node [right] {$\rplus+\rminus=0$};
    \end{scope}
    \begin{scope}[shift={(4.1,0)}]
      \draw[densely dotted] (-1,0) -- (0,-1);
      \draw[densely dotted] (0,1) -- (1,0);
      \draw (-1.4,-.4) -- (.6,1.6) -- (1.6,.6) -- (-.4,-1.4) -- cycle;
      \draw (-1.3,0) -- (1.3,0);
    \end{scope}
  \end{tikzpicture}
  \caption{\label{fig:subdiamonds}{\bf Decomposition of monotonicity diamonds into regular and symmetric diamonds} (along dotted lines). Any diamond~$\Delta$ can be decomposed into smaller diamonds that are either completely on one side of the line $\rplus+\rminus=0$ or are symmetric around it.  Indeed, if the singularity $\rplus+\rminus=0$ cuts~$\Delta$ along the interval $(\rplus_1,\rplus_2)$ of values of~$\rplus$ we can consider the symmetric diamond $(\rplus_1,\rplus_2)\times(-\rplus_2,-\rplus_1)$ contained in~$\Delta$, and adjoin to it regular diamonds to obtain~$\Delta$.}
\end{figure}

As explained and depicted in Figure~\ref{fig:subdiamonds}, each $\Delta_{ij}$ can be cut further into smaller diamonds so that any singularity hypersurface passes through corners of these subdivisions.
Let us denote by $\Delta=(\rplus_1,\rplus_2)\times(\rminus_1,\rminus_2)$ one such smaller diamond, with bounds ordered as $\rplus_1<\rplus_2$ and $\rminus_1<\rminus_2$ (the upper bounds may be infinite).
Either
\bei
\item $\Delta$~is a diamond without singularity, lying entirely on one side of the line $\rplus+\rminus=0$ (so $\rplus_2+\rminus_2\leq 0$ or $0\leq\rplus_1+\rminus_1$), or
\item $\Delta$~is a symmetric diamond, in the sense that the line $\rplus+\rminus=0$ joins two of its vertices (specifically $\rplus_1+\rminus_2=\rplus_2+\rminus_1=0$).
\eei
The whole problem reduces to solving the characteristic initial value problem in these two types of diamonds.

The physical time orientation with respect to $\rplus,\rminus$ depends on the diamond, so we must allow data to be prescribed on any two neighboring sides of~$\Delta$: $(\rplus_1,\rplus_2)\times\{\rminus_i\}$ and
$\{\rplus_j\}\times(\rminus_1,\rminus_2)$ for some $i=1,2$ and $j=1,2$.
For example, the case $i=j=2$, where data are prescribed on the $\rminus=\rminus_2$ and $\rplus=\rplus_2$ sides, is relevant for diamonds such as $\Delta_{1,1}$ where $\rplus',\rminus'<0$ so $\nabla A$ is timelike.
We begin in Section~\ref{sec:9Abel} with an explicit formula for $\psi(\rplus,\rminus)$ (and for~$\phi$) away from singularities, based on the inverse Abel transform of characteristic data.
It is used for each type of diamond.

\bei
\item {\bf Diamond without singularity.}
  Both for timelike and for spacelike $\nabla A$ the explicit formula of Section~\ref{sec:9Abel} solves the characteristic initial value problem under consideration.
\item {\bf Symmetric diamond with a spacelike singularity.}
  In Section~\ref{sec:9sing1} we use the same Abel representation formula to solve in the triangle before the singularity.  We expand the explicit formula along the singularity, apply the singularity scattering map~$\Sbf$ to obtain Fuchsian data on the other side, and provide an explicit formula for $\phi$ (and~$\psi$) after the singularity in terms of this Fuchsian data.
\item {\bf Symmetric diamond with a timelike singularity.}
  In Section~\ref{sec:9sing2} we tackle the hardest case: initial data are prescribed on the past boundary of the diamond, which lies on both sides of the singularity.  We apply the same expansions as before to express Fuchsian data on each side of the singularity in terms of {\sl initial and final data on all boundaries} of the diamond.  Then we write down the relations that the singularity scattering map~$\Sbf$ imposes between these two sets of Fuchsian data.  This translates to equations on the initial and final data, which can be solved explicitly for the final data provided $\Sbf$~is causal in the sense of Definition~\ref{def:causal}.
\eei

\subsection{Abel representation formula}
\label{sec:9Abel}

\paragraph{A preliminary step: Abel transform and its inverse.}
The explicit solutions we find for the wave equation~\eqref{eq:3-1-two} obeyed by $\phi,\psi$ are based on the Abel transform and its inverse, which we introduce now.
In the definition below the restriction on~$\sigma$ ensures that the integrand has at most inverse square root singularities hence is integrable; we also use the standard convention that $\int_\rho^r\coloneqq -\int_r^\rho$ if $\rho>r$.
The interval~$I$ may be infinite, for instance $r_2$~may be~$+\infty$.
The formula below is derived in Appendix~\ref{app:part2}.

\begin{lemmadefinition}
\label{lemmadef:Abel}
Fix an interval $I=(r_1,r_2)$, one of its end points $\rho=r_1$~or~$r_2$, and a parameter $\sigma\in(-\infty,-r_2]\cup[-r_1,+\infty)$.
The {\bf Abel transform} $\Abf_{\rho\sigma}$ of a function $F\colon I\to\RR$ is the function $f\colon I\to\RR$ defined for all $r \in I$ by 
\[
f(r) = \Abf_{\rho\sigma}[F](r)
= \int_{\rho}^{r} {F(s) \over \sqrt{|r-s|\,|\sigma+s|}} \, ds.
\]
The Abel transform can be inverted explicitly, namely for $r\in I$  
\[
F(r)
= \Abf_{\rho\sigma}^{-1}[f](r)
= {\sgn(r-\rho) \over \pi} \sqrt{|\sigma+r|} \, {d \over dr}
    \int_\rho^r \frac{f(s)}{\sqrt{|r-s|}} \, ds .
\]
\end{lemmadefinition}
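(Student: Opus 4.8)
The plan is to prove the Abel inversion formula in Lemma and Definition~\ref{lemmadef:Abel} by direct computation: given $f = \Abf_{\rho\sigma}[F]$, substitute the definition of~$f$ into the claimed formula for $\Abf_{\rho\sigma}^{-1}[f]$ and verify that one recovers~$F$. First I would treat the representative case $\rho = r_1$ and $r > \rho$, so that all the absolute values $|r-s|$, $|s-\rho|$, $|r-s'|$ are just the obvious differences, and $|\sigma+s|$ has a fixed sign throughout~$I$ (this is exactly what the hypothesis $\sigma \in (-\infty,-r_2]\cup[-r_1,+\infty)$ guarantees); the other sign choices follow by the same manipulation or by symmetry. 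Writing out the composition, one gets
\[
\Abf_{\rho\sigma}^{-1}[\Abf_{\rho\sigma}[F]](r)
= \frac{\sqrt{|\sigma+r|}}{\pi}\,\frac{d}{dr}
  \int_\rho^r \frac{ds'}{\sqrt{r-s'}}
  \int_\rho^{s'} \frac{F(s)\,ds}{\sqrt{(s'-s)\,|\sigma+s|}} .
\]

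The key step is to exchange the order of integration in the double integral (Fubini, justified by the integrable square-root singularities noted in the statement) and perform the inner integral over~$s'$ explicitly. For fixed $s \in (\rho,r)$ the inner integral is $\int_s^r \frac{ds'}{\sqrt{(r-s')(s'-s)}}$, which is the classical Euler integral equal to~$\pi$, independent of $s$ and~$r$. Hence the double integral collapses to $\pi \int_\rho^r \frac{F(s)}{\sqrt{|\sigma+s|}}\,ds$, and the prefactor and the $\frac{d}{dr}$ then give
\[
\Abf_{\rho\sigma}^{-1}[\Abf_{\rho\sigma}[F]](r)
= \sqrt{|\sigma+r|}\,\frac{d}{dr}\int_\rho^r \frac{F(s)}{\sqrt{|\sigma+s|}}\,ds
= \sqrt{|\sigma+r|}\cdot\frac{F(r)}{\sqrt{|\sigma+r|}} = F(r),
\]
using the fundamental theorem of calculus. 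I would then note that the computation shows $\Abf_{\rho\sigma}^{-1}$ is a genuine two-sided inverse: running the same Fubini argument in the other order (i.e.\ starting from a general $f$, forming $\Abf_{\rho\sigma}[\Abf_{\rho\sigma}^{-1}[f]]$) reduces to the same Euler integral and recovers~$f$, so no separate surjectivity argument is needed. Since the lemma is also a \emph{definition} (it defines the two transforms), the content to prove is precisely this inversion identity together with the integrability remarks, which are immediate from $|r-s|^{-1/2}|\sigma+s|^{-1/2}$ being locally $L^1$ once $\sigma \notin -\overline{I}$; that is why the restriction on~$\sigma$ is imposed.

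The main obstacle is bookkeeping with the signs and absolute values rather than anything conceptual: one must check that the identity $\frac{d}{dr}\int_\rho^r = $ (evaluation at $r$) and the value~$\pi$ of the Euler integral come out with the correct signs when $r < \rho$ (where $\sgn(r-\rho) = -1$ enters) and when $\rho = r_2$ rather than $r_1$, and that the sign of $\sigma + s$ is constant on~$I$ in each admissible range of~$\sigma$ so that $\sqrt{|\sigma+s|}$ can be pulled through the derivative and through Fubini without branch issues. I would organize this by reducing, via the substitution $s \mapsto \rho + \sgn(r-\rho)(s-\rho)$ or simply by symmetry of the formula under reflection, to the single normalized case above, and relegate the detailed verification of the Euler integral $\int_s^r (r-s')^{-1/2}(s'-s)^{-1/2}\,ds' = \pi$ (via $s' = s + (r-s)\sin^2\theta$) to the appendix reference \ref{app:part2} already cited in the statement. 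This keeps the proof short while making the one genuine analytic input — Fubini plus the Euler beta integral $B(1/2,1/2) = \pi$ — explicit.
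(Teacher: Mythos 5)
Your proposal is correct and follows essentially the same route as the paper's Appendix~\ref{app:part2} computation of $\Abf_{\rho\sigma}^{-1}\bigl[\Abf_{\rho\sigma}[F]\bigr]=F$: swap the order of integration, evaluate the inner Euler integral $\int_t^r ds/\sqrt{|r-s|\,|s-t|}=\pi\,\sgn(r-t)$, and conclude with the fundamental theorem of calculus. One small caveat: the paper also proves the other composition $\Abf_{\rho\sigma}\bigl[\Abf_{\rho\sigma}^{-1}[f]\bigr]=f$ (needed later in the proof of Lemma~\ref{lemma-Abel}), and that direction is not quite ``the same Fubini argument in the other order'' since the derivative sits inside the outer integral and must first be carried out, producing a boundary term $f(\rho)$ before the integrals can be exchanged --- though this extra identity is not required for the statement as written.
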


Note that the Abel transform $f=\Abf_{\rho\sigma}[F]$ of a bounded function~$F$ has a limit $f(r)\to 0$ as $r\to\rho$.
Conversely, if a function~$f$ has a finite non-zero limit $f(\rho)\neq 0$, then its inverse Abel transform~$F$ has an inverse square-root singularity as $r\to\rho$.  This can be seen in the inverse Abel transform of a constant, which is explicitly $\Abf_{\rho\sigma}^{-1}[1](r) = \frac{1}{\pi} \sgn(r-\rho) \sqrt{|\sigma+r|/|r-\rho|}$.
To avoid such singularities below when applying this lemma, we systematically shift the function~$f$ by its value at~$\rho$ and treat the constant part separately as an overall shift of the solution.

\paragraph{Abel representation formula away from singularities.}
Consider a diamond $\Delta=(\rplus_1,\rplus_2)\times(\rminus_1,\rminus_2)$ from the decomposition explained earlier.
The matter field~$\phi$ and the modular parameter~$\psi$ obey the same wave equation~\eqref{eq:3-1-two}, which is nothing but the classical {\bf Euler-Poisson-Darboux equation} with exponent $1/2$, and is singular along the line $\rplus+\rminus=0$.
We consider solutions that are regular away from the hypersurface, but can become singular as one approaches the hypersurface.
We introduce here the key formula that parametrizes solutions to~\eqref{eq:3-1-two} in a connected component~$D$ of $\Delta\setminus\{\rplus+\rminus=0\}$ in terms of data prescribed on two null boundaries $\rminus=\rminus_i$ (for $i=1$~or~$2$) and $\rplus=\rplus_j$ (for $j=1$~or~$2$).

In a diamond with a singularity (which by assumption passes through two corners of~$\Delta$), our explicit formula only applies when data are prescribed on two boundaries that lie on the same side of the singularity, namely provided $\rplus_j+\rminus_i\neq 0$.  The solution is then defined in the domain
\[
D = \Delta\cap \bigl\{(\rplus,\rminus)\bigm|\sgn(\rplus+\rminus)=\sgn(\rplus_j+\rminus_i)\bigr\} .
\]
For a spacelike singularity the formulas provide the solution in a triangle before the singularity in terms of data prescribed on past boundaries of the diamond.
For a timelike singularity the formulas are not directly applicable to the evolution problem since the past boundary of~$\Delta$ lies on both sides of the singularity, but we use them as an intermediate step.
In a diamond without singularity, the four cases of $(i,j)$ are relevant for the evolution problem depending on signs of $\rplus',\rminus'$, and in all four cases $D=\Delta$.

We solve in~$D$ the following Goursat problem with prescribed boundary data $\psiplusG,\psiminusG$ (that must take equal value $\psiplusG(\rplus_j)=\psiminusG(\rminus_i)=\psinot$ at the common corner):
\bel{eq:ivp}
\Lcal  \psi \coloneqq
\psi_{\rplus\rminus}  + {\psi_\rminus +  \psi_\rplus \over 2 (\rplus + \rminus)} 
= 0 \qquad  \text{for all } (\rplus, \rminus) \in D,
\qquad \quad  
\psi |_{\rminus = \rminus_i} = \psiplusG,
\qquad
\psi |_{\rplus = \rplus_j} = \psiminusG .
\ee

\begin{lemma}[Abel representation formula for the Goursat problem]
\label{lemma-Abel}
The solution to the characteristic initial value problem~\eqref{eq:ivp} within~$D$
admits the representation formula
\bel{gen-form}
\psi(\rplus, \rminus)
= \psinot + \int_{\rplus_j}^{\rplus} {\Psiplus(s) \over \sqrt{|\rplus - s|\,|\rminus +s|}} \, ds
+ \int_{\rminus_i}^\rminus {\Psiminus(s)\over \sqrt{|\rplus + s| \, |\rminus - s|}} \, ds,
\qquad
(\rplus, \rminus)  \in D ,
\ee
in which $\Psiplus = \Abf_{\rplus_j\rminus_i}^{-1}[\psiplusG-\psinot] \colon [\rplus_1,\rplus_2] \to \RR$ and $\Psiminus = \Abf_{\rminus_i\rplus_j}^{-1}[\psiminusG-\psinot] \colon [\rminus_1,\rminus_2] \to \RR$ are inverse Abel transforms of the Goursat data $\rplus\mapsto\psiplusG(\rplus)-\psinot$ and $\rminus\mapsto\psiminusG(\rminus)-\psinot$.
\end{lemma}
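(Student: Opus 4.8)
The plan is to verify the formula directly and then invoke uniqueness for the Goursat problem. First I would exploit the two structural features of equation~\eqref{eq:3-1-two}: it is \emph{linear}, and the operator $\Lcal$ is \emph{symmetric} under interchanging $\rplus$ and~$\rminus$. By linearity it suffices to solve the two ``one-sided'' Goursat problems with data $(\psiplusG-\psinot,0)$ and $(0,\psiminusG-\psinot)$ on $\{\rminus=\rminus_i\}$ and $\{\rplus=\rplus_j\}$, respectively, and then add them to the constant $\psinot$ (trivially a solution). The second one-sided problem is the image of the first under $\rplus\leftrightarrow\rminus$, so only one needs treatment; I would check that its solution is the first integral in~\eqref{gen-form} with $\Psiplus=\Abf_{\rplus_j\rminus_i}^{-1}[\psiplusG-\psinot]$, using the inversion formula of Lemma and Definition~\ref{lemmadef:Abel}.

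The core computation is that the kernel $k(\rplus,\rminus;s)\coloneqq(|\rplus-s|\,|\rminus+s|)^{-1/2}$ satisfies $\Lcal k=0$ in $(\rplus,\rminus)$ wherever $s\neq\rplus$, $s\neq-\rminus$, and $\rplus+\rminus\neq 0$. This is a short verification: on any region where $\rplus-s$ and $\rminus+s$ keep fixed signs, writing $k=(pq)^{-1/2}$ with $p=|\rplus-s|$, $q=|\rminus+s|$, the elementary identity relating $p,q$ to $\rplus+\rminus$ (namely $p+q=\pm(\rplus+\rminus)$ or $p-q=\pm(\rplus+\rminus)$ according to the two signs) produces exactly the cancellation $k_{\rplus\rminus}+\tfrac{1}{2(\rplus+\rminus)}(k_\rplus+k_\rminus)=\tfrac14(pq)^{-3/2}-\tfrac14(pq)^{-3/2}=0$. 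Superposing, $\psi_1(\rplus,\rminus)=\int_{\rplus_j}^{\rplus}\Psiplus(s)\,k(\rplus,\rminus;s)\,ds$ then solves $\Lcal\psi_1=0$ in the interior of~$D$. The technical point here --- which I expect to be the main obstacle --- is justifying differentiation under the integral near the endpoint $s=\rplus$, where $k\sim|\rplus-s|^{-1/2}$ is integrable but $k_\rplus\sim|\rplus-s|^{-3/2}$ is not. I would remove this singularity by a single integration by parts in~$s$ (writing $|\rplus-s|^{-1/2}$ as an $s$-derivative of $2|\rplus-s|^{1/2}$), after which the integrand is bounded and $\rplus$-differentiable; this uses that $\Psiplus$, being the inverse Abel transform of a smooth function that vanishes at~$\rplus_j$ by the corner-matching condition, is $C^1$ on the relevant open interval. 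The same manipulation shows the integrals in~\eqref{gen-form} converge, and tracking the signs of $\rplus-s$, $\rminus+s$ over the relevant $s$-range (never changing sign inside~$D$) is what makes the subcase analysis go through.

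Next I would check the two boundary conditions. On $\{\rplus=\rplus_j\}$ the first integral degenerates to $\int_{\rplus_j}^{\rplus_j}(\cdots)=0$, and on $\{\rminus=\rminus_i\}$ it becomes $\int_{\rplus_j}^{\rplus}\Psiplus(s)\,(|\rplus-s|\,|\rminus_i+s|)^{-1/2}\,ds=\Abf_{\rplus_j\rminus_i}[\Psiplus](\rplus)=(\psiplusG-\psinot)(\rplus)$ by the very definition of~$\Psiplus$ as an inverse Abel transform; here one also checks that $\sigma=\rminus_i$ lies in the admissible range $(-\infty,-\rplus_2]\cup[-\rplus_1,+\infty)$ of Lemma and Definition~\ref{lemmadef:Abel}, which holds both for a symmetric diamond (where $\rminus_1=-\rplus_2$, $\rminus_2=-\rplus_1$) and for a diamond without singularity (where $\rplus_2+\rminus_2\leq 0$ or $\rplus_1+\rminus_1\geq 0$), with the mirror condition for $\rplus_j$ in $\Abf_{\rminus_i\rplus_j}$. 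By the $\rplus\leftrightarrow\rminus$ symmetry the analogous statements hold for the $\Psiminus$ integral, and adding $\psinot$ shows that~\eqref{gen-form} attains the prescribed data $\psiplusG$ on $\{\rminus=\rminus_i\}$ and $\psiminusG$ on $\{\rplus=\rplus_j\}$, with $\psi(\rplus_j,\rminus_i)=\psinot$.

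Finally, uniqueness: within~$D$ the coefficient $1/(2(\rplus+\rminus))$ is smooth (its only singularity lies on $\{\rplus+\rminus=0\}$, which meets $\Cl(D)$ only along the boundary portion \emph{opposite} to the two characteristics carrying the data), so~\eqref{eq:3-1-two} is there a regular linear second-order hyperbolic equation, and the Goursat problem with data on two transverse characteristics has at most one solution --- by the classical Riemann-function/Picard-iteration argument, the difference of two solutions with vanishing data vanishes identically on~$D$. Hence~\eqref{gen-form} is \emph{the} solution, which would complete the proof. Conceptually one could instead derive~\eqref{gen-form} ab initio from Riemann's method for the Euler--Poisson--Darboux operator, whose Riemann function for exponent~$1/2$ is elementary; the verification route above is shorter and is the one I would carry out.
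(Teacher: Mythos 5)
Your route is the paper's own: you check that \eqref{gen-form} assumes the Goursat data by composing the Abel transform with its inverse (identical to the step done in the main text), you verify the singular wave equation by differentiating the representation after taming the moving endpoint $s=\rplus$ (your integration by parts in~$s$ is equivalent to the paper's change of variables $s=\rplus+t$ in Appendix~\ref{app:part2}, and reproduces exactly the same expression for $\psi_\rplus$, including the boundary term $\Psiplus(\rplus_j)\,|\rplus_j-\rplus|^{-1/2}|\rminus+\rplus_j|^{-1/2}$), and you append a standard Goursat uniqueness argument that the paper leaves implicit. The admissibility check on the parameter~$\sigma$ and the regularity of~$\Psiplus$ are also handled correctly.

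There is, however, one step that would not go through as you describe it: the claim that, once differentiation under the integral sign is justified, $\Lcal\psi_1=0$ follows by ``superposition'' from the pointwise kernel identity $\Lcal k=0$. The obstruction is not merely integrability of $k_\rplus$: the variable endpoint $s=\rplus$ sits exactly at the singularity of the kernel, so the naive Leibniz rule produces two individually infinite contributions ($\Psiplus(\rplus)\,k(\rplus,\rminus;\rplus)$ and $\int\Psiplus k_\rplus\,ds$), and after your regularizing integration by parts the kernel identity is no longer the statement you need. Indeed, the boundary term generated by that integration by parts is proportional to $|\rplus-\rplus_j|^{1/2}|\rminus+\rplus_j|^{-1/2}$, and a direct computation shows $\Lcal\bigl(|\rplus-\rplus_j|^{1/2}|\rminus+\rplus_j|^{-1/2}\bigr)\neq 0$; likewise $\Lcal$ applied to the regularized integrand does not vanish pointwise. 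Only the aggregate cancels, and exhibiting this cancellation is precisely the content of the paper's appendix computation: one collects $2|\rplus+\rminus|\psi_{\rplus\rminus}+\epsilon(\psi_\rplus+\psi_\rminus)$, uses the identity $\epsilon\bigl(|\rminus+s|-|\rplus+\rminus|\bigr)=s-\rplus$ (which vanishes at the troublesome endpoint) and integrates by parts once more, after which the four remaining terms cancel in pairs. This supplement is short and mechanical, but it is genuinely needed; as written, your argument stops one computation short of the conclusion, since the kernel identity alone cannot deliver $\Lcal\psi_1=0$ through any term-by-term interchange.
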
 

\begin{proof}
  The condition that $\rplus+\rminus$ keeps a constant sign throughout~$D$ ensures that the integrands in~\eqref{gen-form} have at most inverse square root singularities hence remains integrable.
  It is easy to check using Lemma~\ref{lemmadef:Abel} that the proposed solution~\eqref{gen-form} takes the correct initial values, as we now show.
  Because $\psiminusG-\psinot$ vanishes at~$\rminus_i$, its inverse Abel transform~$\Psiminus$ remains bounded at~$\rminus_i$, so that the second integral in~\eqref{gen-form} vanishes at~$\rminus_i$.
  Then we compute the first integral:
  \[
  \psi(\rplus, \rminus_i) - \psinot
  = \int_{\rplus_j}^{\rplus} {\Psiplus(s) \over \sqrt{|\rplus - s|\,|\rminus_i +s|}} \, ds
  = \Abf_{\rplus_j\rminus_i}[\Psiplus](\rplus)
  = \Abf_{\rplus_j\rminus_i}\bigl[\Abf_{\rplus_j\rminus_i}^{-1}[\psiplusG-\psinot]\bigr](\rplus) = \psiplusG(\rplus) - \psinot ,
  \]
  and likewise $\psi(\rplus_j, \rminus) = \psinot + \Abf_{\rminus_i\rplus_j}[\Psiminus](\rminus) = \psiminusG(\rminus)$.
  For the proof that $\psi$ given in~\eqref{gen-form} solves the equation~\eqref{eq:3-1-two}, we refer to Appendix~\ref{app:part2}.
\end{proof}

\paragraph{Diamond without singularity.}

For a diamond~$\Delta$ that is entirely on one side of $\rplus+\rminus=0$, the Abel representation formula of Lemma~\ref{lemma-Abel} yields $\psi(\rplus,\rminus)$, and likewise $\phi(\rplus,\rminus)$, explicitly in terms of available initial data.

\subsection{Diamond with a spacelike singularity hypersurface}
\label{sec:9sing1}

We solve here the initial value problem in a diamond that is split into two triangles by a spacelike singularity.
This is in particular the case for the first region of interaction, between the wave-front intersection~$\Pcal_0$ and the first singularity.
Within the first triangle of the first diamond, the geometry resulting from colliding gravitational waves (vacuum Einstein equations) was first solved by Szekeres~\cite{Szekeres70} and further analyzed by Yurtsever~\cite{Yurtsever88} in terms of a (generalized) Kasner behavior near the spacelike singular hypersurface.
The subject was taken up again \cite{FKV,BV1} (after adding the dilaton of string theory) in the context of pre-Big Bang cosmology as an example of inhomogeneous initial conditions that naturally lead to dilaton-driven inflation in the string frame.
Reference~\cite{FKV} dealt directly with the problem in the Einstein frame and then converted the asymptotic solutions to the string frame.
In~\cite{BV1} the problem was studied directly in the string frame, with generic  $(d-1)$-dimensional planar symmetry and in the presence of other massless fields appearing in string theory.
In both papers only half of the Fuchsian data (i.e.\ the coefficients of the singular terms) were computed.
Because of the presence of $(d-1)$ abelian isometries, the equations of motion are endowed with a large global symmetry allowing to construct pair of duality (and time-reversal)-related solutions which, in the spirit of the pre-Big Bang scenario, should be joined together at the space like singular hypersurface.

We have studied this case in detail and found that, by limiting the matching to solutions related by the exact $(\ZZ/2\ZZ)^{d-1}$ symmetry it is {\it not} possible to satisfy our classification of consistent singularity maps. In particular, the shear coefficients are not uniformly rescaled through the map. However, one can argue that as one approaches the singularity
dependence upon the $z = \frac{1}{2}(\rminus - \rplus)$ coordinate becomes subleading and, in the limit, can be totally neglected. As a result the $(\ZZ/2\ZZ)^{d-1}$ symmetry gets enhanced to a full $(\ZZ/2\ZZ)^d$ group, allowing also for the reversal of the string-frame Kasner exponent in the $z$~direction. We have checked that, in analogy with the homogeneous case discussed in \cite{LLV-1a},
the scattering map corresponding to the reversal of all $d$~Kasner exponents does fall in the general classification scheme of this paper. This shows, once more, the predictive power of our classification of consistent singularity maps.

We treat for definiteness the case where physical time flows from positive to negative $\rplus,\rminus$, which happens in diamonds where $\rplus',\rminus'<0$, such as the first diamond.  (The other case $\rplus',\rminus'<0$ is mapped to it by the symmetry $(\rplus,\rminus)\to(-\rplus,-\rminus)$.)
Our convention is summarized as follows: 
\bel{diamond-spacelike}
\Delta = (\rplus_1,\rplus_2)\times(\rminus_1,\rminus_2) , \qquad
\rplus_1 + \rminus_2 = \rplus_2 + \rminus_1 = 0 , \qquad\qquad
\mathtikz{
  \draw (1.3,0) -- (0.3,1) node [above=-3pt] {\scriptsize $(\rplus_1,\rminus_1)$} -- (-.7,0) -- (0.3,-1) node [below=-3pt] {\scriptsize $(\rplus_2,\rminus_2)$} -- cycle;
  \draw (-1,0) -- (1.6,0) node [right,yshift=1pt] {$\rplus+\rminus=0$.};
  \draw[->,densely dashed] (-.5,.5) -- (.85,-.85) node [below right=-3pt]{$\rplus$};
  \draw[->,densely dashed] (.8,.8) -- (-.55,-.55) node [below left=-4pt]{$\rminus$};
}
\ee
Initial data $\psiplusG^-,\psiminusG^-$ are prescribed on the past null boundaries $\rminus=\rminus_2$ and $\rplus=\rplus_2$, respectively.
(The $-$~superscripts indicate $-\sgn A$, in accordance with our orientation convention for singularity hypersurfaces.)
The Abel representation formula of Lemma~\ref{lemma-Abel} yields $\phi,\psi$ in the bottom triangle in terms of these initial data.
To continue, we state here a two-term expansion in terms of Fuchsian data $(\phi_\sharp^-,\phi_\flat^-,\psi_\sharp^-,\psi_\flat^-)$ near the singularity, whose derivation is postponed to Appendix~\ref{app:part2}.

\begin{lemma}[Fuchsian data from Goursat data]
\label{prop:evol-char}
Under the conditions of Lemma~\ref{lemma-Abel} with data $\psiplusG^-$ and~$\psiminusG^-$ prescribed along the boundaries $\rminus=\rminus_2$ and $\rplus=\rplus_2$, respectively, with $\psiplusG^-(\rplus_2)=\psiminusG^-(\rminus_2)=\psinot$, one has
\[
\psi(\rplus, \rminus) = \psi_\sharp^-(-\rplus+\rminus) \log|\rplus+\rminus| + \psi_\flat^-(-\rplus+\rminus) + o(1) \qquad \text{as $\rplus+\rminus\to 0^+$,}
\]
with $\psi_\sharp^-,\psi_\flat^-\colon(-2\rplus_2,2\rminus_2)\to\RR$ given in terms of $\Psiplus^-=\Abf_{\rplus_2\rminus_2}^{-1}[\psiplusG^--\psinot]$ and $\Psiminus^-=\Abf_{\rminus_2\rplus_2}^{-1}[\psiminusG^--\psinot]$ as
\[
\aligned
\psi_\sharp^-(2z) & = \Psiplus^-(-z) + \Psiminus^-(z) ,
\\
\psi_\flat^-(2z)
& = \psinot - \Psiplus^-(-z) \log\bigl(4(\rplus_2+z)\bigr)
+ \int_{-z}^{\rplus_2} \frac{\Psiplus^-(-z)-\Psiplus^-(s)}{z+s} ds
- \Psiminus^-(z) \log\bigl(4(\rminus_2-z)\bigr)
+ \int_{z}^{\rminus_2} \frac{\Psiminus^-(z)-\Psiminus^-(s)}{s-z} ds .
\endaligned
\]
\end{lemma}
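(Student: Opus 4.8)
The plan is to specialize the Abel representation formula of Lemma~\ref{lemma-Abel} to the spacelike-singularity diamond~\eqref{diamond-spacelike}, with Goursat data $\psiplusG^-$ on $\rminus=\rminus_2$ and $\psiminusG^-$ on $\rplus=\rplus_2$, and to read off the asymptotics of its two integral terms as $\rplus+\rminus\to 0^+$. Since the equation $\Lcal\psi=0$ is invariant under constant shifts of~$\psi$, we may assume $\psinot=0$; then $\Psiplus^-=\Abf_{\rplus_2\rminus_2}^{-1}[\psiplusG^-]$ and $\Psiminus^-=\Abf_{\rminus_2\rplus_2}^{-1}[\psiminusG^-]$ are inverse Abel transforms of functions vanishing at $\rplus_2$ and~$\rminus_2$, hence are smooth and bounded near those endpoints (they may, and generically do, blow up like an inverse square root near the opposite corners $\rplus=\rplus_1$, $\rminus=\rminus_1$, which is precisely why $\psi_\sharp^-,\psi_\flat^-$ are asserted only on the open interval $(-2\rplus_2,2\rminus_2)$). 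I then introduce along the singularity the coordinate $2z=-\rplus+\rminus$ together with the transverse parameter $\epsilon=\rplus+\rminus$, so that $\rplus=\epsilon/2-z\to -z$ and $\rminus=\epsilon/2+z\to z$ as $\epsilon\to 0^+$, and fix $z$ in a compact subinterval of $(\rminus_1,\rminus_2)$, i.e.\ $-z$ in a compact subinterval of $(\rplus_1,\rplus_2)$.

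The core computation is a one-variable lemma: if $G\in C^1([a,b])$ and $p\to a^+$, $q\to a^-$ with $\epsilon\coloneqq p-q\to 0^+$, then, using the antiderivative $2\log\bigl(\sqrt{s-p}+\sqrt{s-q}\bigr)$ of $\bigl((s-p)(s-q)\bigr)^{-1/2}$ for the constant part $G(p)$ and dominated convergence for the remainder $G(s)-G(p)$,
\[
\int_p^b \frac{G(s)\,ds}{\sqrt{(s-p)(s-q)}}
= G(a)\bigl(\log\bigl(4(b-a)\bigr)-\log\epsilon\bigr) + \int_a^b \frac{G(s)-G(a)}{s-a}\,ds + o(1),
\]
the error being uniform over $a$ in compacta and $G$ in bounded $C^1$ sets. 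In the first term of~\eqref{gen-form}, namely $\int_{\rplus_2}^{\rplus}\Psiplus^-(s)\,|\rplus-s|^{-1/2}|\rminus+s|^{-1/2}\,ds$, the two square-root factors are singular at $s=\rplus$ and $s=-\rminus$, which straddle $-z$ with $-\rminus<-z<\rplus<\rplus_2$; only $s=\rplus$ lies in the integration range, and there $s+\rminus\geq\rplus+\rminus=\epsilon>0$, so after reversing orientation the lemma applies with $G=\Psiplus^-$ and $(p,q,a,b)=(\rplus,-\rminus,-z,\rplus_2)$. It produces $\Psiplus^-(-z)\log\epsilon$ for the singular part and $-\Psiplus^-(-z)\log\bigl(4(\rplus_2+z)\bigr)+\int_{-z}^{\rplus_2}\frac{\Psiplus^-(-z)-\Psiplus^-(s)}{z+s}\,ds$ for the finite remainder. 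The second term of~\eqref{gen-form} is treated identically under $(\rplus,\rminus,\Psiplus^-)\leftrightarrow(\rminus,\rplus,\Psiminus^-)$, contributing $\Psiminus^-(z)\log\epsilon$ and $-\Psiminus^-(z)\log\bigl(4(\rminus_2-z)\bigr)+\int_{z}^{\rminus_2}\frac{\Psiminus^-(z)-\Psiminus^-(s)}{s-z}\,ds$. Adding the two and restoring $\psinot$ yields exactly $\psi_\sharp^-(2z)=\Psiplus^-(-z)+\Psiminus^-(z)$ as the coefficient of $\log|\rplus+\rminus|$ and the stated formula for $\psi_\flat^-(2z)$ as the finite remainder; the same argument applies verbatim to $\phi$, which obeys the same equation~\eqref{eq:3-1-two}.

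The step I expect to demand the most care is the uniformity of the $o(1)$, i.e.\ controlling the remainders uniformly in $z$ on compacta as $\epsilon\to 0$. The natural organization is to split each Abel integral into a ``coalescence'' piece on a fixed short interval adjacent to the corner $\rplus=\rplus_2$ (resp.\ $\rminus=\rminus_2$), where the explicit antiderivative is used and the only errors come from replacing $\Psiplus^-(s)$ by $\Psiplus^-(\rplus)$ and from $\rplus,\rminus$ not being exactly $-z,z$ (both $O(\epsilon^{1/2})$ once Lipschitz bounds on $\Psiplus^-$ near $-z$ are invoked), plus a ``bulk'' piece on the complement, where the integrand is bounded uniformly and converges pointwise so that dominated convergence gives the limit. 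One must also stay uniformly away from the far corners $\rplus=\rplus_1$, $\rminus=\rminus_1$, where $\Psiplus^-,\Psiminus^-$, and hence $\psi_\sharp^-,\psi_\flat^-$ themselves, can have an integrable inverse-square-root singularity; restricting $z$ to a compact subinterval of $(\rminus_1,\rminus_2)$ does exactly that. That these two terms are genuinely the first two in a $\rplus+\rminus\to 0^+$ expansion, compatible with the Gaussian-coordinate normal form~\eqref{coordinates-Gauss-null} and with $a_\sharp=\tfrac12$, $a_\flat=-\tfrac12\log2$ in~\eqref{asympt}, is then immediate from the structure of~\eqref{gen-form}.
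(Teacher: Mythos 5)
Your proposal is correct and follows essentially the same route as the paper's own proof in Appendix~\ref{app:part2}: insert the Abel representation~\eqref{gen-form}, split $\Psiplus^-(s)$ (resp.\ $\Psiminus^-(s)$) into its value at $-z$ (resp.\ $z$) plus a remainder, evaluate the constant piece in closed form (your antiderivative $2\log(\sqrt{s-p}+\sqrt{s-q})$ is just the paper's $\operatorname{arccosh}$ substitution $s=-z+t\sigma$), and pass to the limit in the remainder. Your extra attention to uniformity of the $o(1)$ on compacta in~$z$ and to the shift by~$\psinot$ only adds care beyond what the paper records.
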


Next, we apply the momentum-preserving scattering map~$\Sbf$ of~\eqref{plane-scattering-choice} (with the normalization $a=+1$ and $c=1$) to the data $(\phi_\sharp^-,\phi_\flat^-,\psi_\sharp^-,\psi_\flat^-)$ coming from Lemma~\ref{prop:evol-char}.
This gives new Fuchsian data $(\phi_\sharp^+,\phi_\flat^+,\psi_\sharp^+,\psi_\flat^+)$ that we use as initial data to solve in the second triangle of the diamond~\eqref{diamond-spacelike}.
We seek a function $\psi(\rplus,\rminus)$ on the domain $\Delta\cap\{\rplus+\rminus<0\}$ solving the Fuchsian problem
\bel{eq:fuchsvp}
\aligned
\psi_{\rplus\rminus} & + {\psi_\rminus +  \psi_\rplus \over 2 (\rplus + \rminus)} = 0
& & \text{for all } (\rplus, \rminus) \in \Delta\cap\{\rplus+\rminus<0\},
\\
\psi(\rplus,\rminus) & = \psi_\sharp^+(-\rplus+\rminus) \log|\rplus+\rminus| + \psi_\flat^+(-\rplus+\rminus) + o(1)
& & \text{as $\rplus+\rminus\to 0^-$.}
\endaligned
\ee

\begin{lemma}[Poisson representation formula for the Fuchsian problem]
\label{lemma-Poisson}
The solution to the singular initial value problem~\eqref{eq:fuchsvp} within~$\Delta\cap\{\rplus+\rminus<0\}$ admits the representation formula
\bel{Poisson-rep}
\psi(\rplus,\rminus)
= {1 \over \pi} \int_{-1}^1 \biggl( \psi_\sharp^+(-\rplus+\lambda\rplus+\rminus+\lambda\rminus) \, \log\bigl(4(1-\lambda^2)\abs{\rplus+\rminus}\bigr) + \psi_\flat^+(-\rplus+\lambda\rplus+\rminus+\lambda\rminus) \biggr) \frac{d\lambda}{\sqrt{1-\lambda^2}} .
\ee
\end{lemma}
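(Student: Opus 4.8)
The plan is to pass to the coordinates $s\coloneqq\rplus+\rminus$ (which equals $2A$) and $w\coloneqq-\rplus+\rminus$ (the coordinate along the singularity), under which $\del_\rplus=-\del_w+\del_s$ and $\del_\rminus=\del_w+\del_s$, so that~\eqref{eq:3-1-two} becomes the Euler--Poisson--Darboux equation $L\psi\coloneqq\psi_{ss}+s^{-1}\psi_s-\psi_{ww}=0$ and the domain $\Delta\cap\{\rplus+\rminus<0\}$ becomes a region with $s<0$ abutting the singular line $s=0$. In these coordinates the argument $-\rplus+\lambda\rplus+\rminus+\lambda\rminus$ in~\eqref{Poisson-rep} is simply $w+\lambda s$, so the claimed solution is $\psi=\mathcal{W}_{\log}[\psi_\sharp^+]+\mathcal{W}[\psi_\flat^+]$, where for a $C^2$ function~$g$ I set $\mathcal{W}[g](s,w)\coloneqq\frac1\pi\int_{-1}^1 g(w+\lambda s)(1-\lambda^2)^{-1/2}\,d\lambda$ and $\mathcal{W}_{\log}[g](s,w)\coloneqq\frac1\pi\int_{-1}^1 g(w+\lambda s)\log\!\bigl(4(1-\lambda^2)\abs{s}\bigr)(1-\lambda^2)^{-1/2}\,d\lambda$; the same formulas with $(\phi_\sharp^+,\phi_\flat^+)$ handle~$\phi$, which obeys the identical equation. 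The proof then has three parts: (i)~both operators produce genuine solutions of the Euler--Poisson--Darboux equation; (ii)~$\psi$ has the prescribed $s\to0^-$ behavior from~\eqref{eq:fuchsvp}; (iii)~uniqueness of the Fuchsian problem. For well-definedness I would first note that for $(\rplus,\rminus)\in\Delta\cap\{s<0\}$ the point $w+\lambda s$ runs, as $\lambda\in[-1,1]$, over the interval $[2\rminus,-2\rplus]$, which by the symmetry conditions $\rplus_1+\rminus_2=\rplus_2+\rminus_1=0$ lies inside the interval $(-2\rplus_2,2\rminus_2)$ on which $\psi_\sharp^+,\psi_\flat^+$ are defined by Lemma~\ref{prop:evol-char}; since $(1-\lambda^2)^{-1/2}$ and $(1-\lambda^2)^{-1/2}\log(1-\lambda^2)$ are integrable on $(-1,1)$, both operators are then well defined there.

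For part~(i), the computation for $\mathcal{W}[g]$ is a differentiation under the integral followed by two integrations by parts in~$\lambda$: using $\lambda(1-\lambda^2)^{-1/2}=-\tfrac{d}{d\lambda}(1-\lambda^2)^{1/2}$ one gets $\mathcal{W}[g]_s=\frac{s}\pi\int_{-1}^1 g''(w+\lambda s)(1-\lambda^2)^{1/2}\,d\lambda$ with no boundary contribution, and then writes the $g'''$ arising from a further $\del_s$ as $s^{-1}\del_\lambda g''$ before integrating by parts once more, after which everything collapses to $\frac1\pi\int_{-1}^1 g''(w+\lambda s)\,\bigl[\lambda^2+(1-\lambda^2)-1\bigr](1-\lambda^2)^{-1/2}\,d\lambda=0$. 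For $\mathcal{W}_{\log}[g]$ I would decompose $\mathcal{W}_{\log}[g]=\log\abs{s}\,\mathcal{W}[g]+\mathcal{R}[g]$ with $\mathcal{R}[g](s,w)\coloneqq\frac1\pi\int_{-1}^1 g(w+\lambda s)\log\!\bigl(4(1-\lambda^2)\bigr)(1-\lambda^2)^{-1/2}\,d\lambda$, and use the elementary identity $L(\log\abs{s}\,u)=2s^{-1}u_s$ valid whenever $Lu=0$; combined with the formula for $\mathcal{W}[g]_s$ this gives $L(\log\abs{s}\,\mathcal{W}[g])=\frac2\pi\int_{-1}^1 g''(w+\lambda s)(1-\lambda^2)^{1/2}\,d\lambda$. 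The same two-step integration by parts applied to $\mathcal{R}[g]$---now using $\lambda\log(4(1-\lambda^2))(1-\lambda^2)^{-1/2}=-\tfrac{d}{d\lambda}\bigl[(1-\lambda^2)^{1/2}(\log(4(1-\lambda^2))-2)\bigr]$, whose bracket vanishes at $\lambda=\pm1$ because $(1-\lambda^2)^{1/2}\log(1-\lambda^2)\to0$---yields $L\mathcal{R}[g]=-\frac2\pi\int_{-1}^1 g''(w+\lambda s)(1-\lambda^2)^{1/2}\,d\lambda$, so the two contributions cancel and $L\mathcal{W}_{\log}[g]=0$.

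For part~(ii), I would Taylor-expand $g(w+\lambda s)=g(w)+\lambda s\,g'(w)+O(s^2)$ uniformly in~$w$ on compacts and use $\frac1\pi\int_{-1}^1(1-\lambda^2)^{-1/2}d\lambda=1$ together with $\int_{-1}^1\lambda(1-\lambda^2)^{-1/2}d\lambda=0$, obtaining $\mathcal{W}[g](s,w)=g(w)+O(s^2)$ and $\mathcal{W}_{\log}[g](s,w)=g(w)\log\abs{s}+c_0\,g(w)+O(s^2\log\abs{s})$ with $c_0\coloneqq\frac1\pi\int_{-1}^1\log(4(1-\lambda^2))(1-\lambda^2)^{-1/2}d\lambda$. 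The normalizing factor~$4$ in the kernel is precisely what makes $c_0=0$: the substitution $\lambda=\cos\theta$ reduces it to $\frac1\pi\int_0^\pi\log(4\sin^2\theta)\,d\theta=\log 4+\frac2\pi\int_0^\pi\log\sin\theta\,d\theta=\log 4-2\log 2=0$. Hence $\psi=\mathcal{W}_{\log}[\psi_\sharp^+]+\mathcal{W}[\psi_\flat^+]=\psi_\sharp^+(w)\log\abs{s}+\psi_\flat^+(w)+o(1)$ as $s\to0^-$, which upon reinstating $s=\rplus+\rminus$, $w=-\rplus+\rminus$ is exactly the required Fuchsian asymptotics in~\eqref{eq:fuchsvp}. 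For part~(iii), the difference of two solutions of~\eqref{eq:fuchsvp} solves $L=0$ with $o(1)$ behavior and vanishing coefficient of $\log\abs{s}$; since the indicial exponents of the equation at $s=0$ are the constant mode and the $\log\abs{s}$ mode, and every higher coefficient in the formal expansion is determined by those two, such a difference must vanish---rigorously by a standard Fuchsian energy estimate on the triangle, of the kind already underlying the well-posedness of the Fuchsian problems used elsewhere in Section~\ref{section---9}. The routine verifications (the integrations by parts, the legitimacy of differentiating under the integral, the uniformity of the expansion, the uniqueness estimate) would be collected in Appendix~\ref{app:part2}, together with the derivations of Lemmas~\ref{lemmadef:Abel}, \ref{lemma-Abel}, and~\ref{prop:evol-char}.

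The step I expect to be the main obstacle is the treatment of the logarithmically weighted operator $\mathcal{W}_{\log}$: one must establish both that it yields an honest solution---which rests on the exact cancellation between $L(\log\abs{s}\,\mathcal{W}[g])$ and $L\mathcal{R}[g]$ and on the boundary terms at $\lambda=\pm1$ vanishing despite the $\log(1-\lambda^2)$ factor---and that its expansion produces \emph{precisely} $\psi_\flat^+(w)$, and not $\psi_\flat^+(w)$ shifted by a spurious constant, which is what pins down the normalization~$4$ inside the logarithm. A secondary difficulty is making the $s\to0^-$ expansion uniform in~$w$ over the (possibly semi-infinite) strip of values of~$w$ arising in the diamond, so that the error is a genuine $o(1)$ rather than merely pointwise.
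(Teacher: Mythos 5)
Your proposal is correct and takes essentially the same route as the paper's proof in Appendix~A: a direct verification, by differentiating under the integral and integrating by parts in~$\lambda$, that the formula solves the singular wave equation, plus recovery of the Fuchsian asymptotics from the vanishing of $\frac{1}{\pi}\int_{-1}^{1}\log\bigl(4(1-\lambda^2)\bigr)(1-\lambda^2)^{-1/2}\,d\lambda$ (the paper's $\int\log(2\cos\theta)\,d\theta=0$), your passage to $(s,w)$ coordinates and the splitting $\mathcal{W}_{\log}=\log|s|\,\mathcal{W}+\mathcal{R}$ being only a reorganization of the same computation. Note that your part~(iii) on uniqueness goes beyond what the paper establishes (its proof only checks that the formula is a solution with the prescribed expansion), and as sketched it would still require the Fuchsian energy estimate to be carried out.
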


This lemma, established in Appendix~\ref{app:part2}, completes our construction of $\psi,\phi$ throughout the diamond~$\Delta$.
The explicit expression manifestly has finite limits on the future boundary of~$\Delta$, namely along the sides $\rminus=\rminus_1$ and $\rplus=\rplus_1$.
These values are then to be used as initial data for the next diamonds.

\subsection{Diamond with a timelike singularity hypersurface}
\label{sec:9sing2}

\paragraph{Stationary singular interface.}

We now reach the most difficult type of diamond, in which the singularity hypersurface is timelike.
Dealing with these diamonds is significantly more involved since each side of the singularity can no longer be handled independently from the other.
Instead, the singularity hypersurface at $A=0$ behaves as a (singular) interface with a genuine coupling between the two sides, and we will find that stronger conditions on the scattering map are required.

As we noted already, the momentum-preserving scattering maps~\eqref{plane-scattering-choice} that we selected respect the wave equation $A_{\uplus\uminus}=0$, so that its solution $A=(\rplus+\rminus)/2$ is known globally and the location of the singularity hypersurface $A=0$ is known a priori.
In the local null coordinates $\rplus,\rminus$ adapted to the problem in the given diamond, the interface is stationary, placed at a fixed position~$0$ in the spatial coordinate $\rplus+\rminus$.

Our approach involves many of the same formulas as in the previous section, but interpreted in a completely different way.
Just as in the case of a spacelike singularity, the singular hyperbolic equation~\eqref{eq:3-1-two} is obeyed independently by the matter field~$\phi$ and modular parameter~$\psi$ away from the singularity, while these fields may be mixed by the junction condition.

As explained and depicted in Figure~\ref{fig:subdiamonds} we can restrict our attention to a {\sl symmetric diamond}, in which the singularity passes through the two corners $(\rplus_1,\rminus_2)$ and $(\rplus_2,\rminus_1)$.  Without loss of generality (up to exchanging $\rplus\leftrightarrow\rminus$) we assume $\rplus'<0<\rminus'$ so that physical time flows toward negative~$\rplus$ and positive~$\rminus$.
Our conventions and some further notations are summarized as follows, with physical time flowing from the bottom to the top of the diagram:
\[
\Delta = (\rplus_1,\rplus_2)\times(\rminus_1,\rminus_2) , \qquad
\rplus_1 + \rminus_2 = \rplus_2 + \rminus_1 = 0 , \qquad\qquad
\mathtikz[scale=1.1]{\scriptsize
  \draw (0,1) node [above left=-2pt] {$(\rplus_1,\rminus_2)$}
  -- (1,0) node [midway,above right=-3pt] {$\Psiplus^-$}
  node [right=-2pt] {$(\rplus_2,\rminus_2)$.}
  -- (0,-1) node [midway,below right=-3pt] {$\Psiminus^-$}
  node [below left=-2pt] {$(\rplus_2,\rminus_1)$}
  -- (-1,0) node [midway,below left=-3pt] {$\Psiplus^+$}
  node [left=-2pt] {$(\rplus_1,\rminus_1)$}
  -- cycle node [midway,above left=-3pt] {$\Psiminus^+$};
  \draw (0,-1.3) -- (0,1.4) node [right=-2pt] {$\rplus+\rminus=0$};
  \draw[->,densely dashed] (-.35,-.85) -- (1,.4) node [above right=-2pt]{$\rminus$};
  \draw[->,densely dashed] (-.85,.35) -- (.4,-1) node [below right=-2pt]{$\rplus$};
}
\]

\paragraph{Solution on both sides of the singularity.}

Our strategy begins with expressing the solution $\psi(\rplus,\rminus)$ on both sides of the singularity, using the Abel representation formula~\eqref{gen-form}, in terms of both the prescribed initial data {\sl and the ``final data'' along future boundaries}.
As in the spacelike case we reserve the superscripts $\pm$ to denoting the $\pm(\rplus+\rminus)<0$ sides of the singularity.
In the region $\rplus+\rminus<0$ the formula involves the initial data $\psiplusG^+$ along $\rminus=\rminus_1$ and the final data $\psiminusG^+$ along $\rplus=\rplus_1$, while
in the region $\rplus+\rminus>0$ it involves the initial data $\psiminusG^-$ along $\rplus=\rplus_2$ and the final data $\psiplusG^-$ along $\rminus=\rminus_2$:
\bel{psitimelikebothsides}
\aligned
\psi(\rplus, \rminus)
& = \int_{\rplus_1}^{-\rminus} {- \Psiminus^-(-s)\over \sqrt{|\rplus - s| \, |\rminus + s|}} \, ds
+ \int_{\rminus_1}^{-\rplus} {- \Psiplus^-(-s) \over \sqrt{|\rplus + s|\,|\rminus - s|}} \, ds
\quad \text{for } \rplus+\rminus>0,
\\
\psi(\rplus, \rminus)
& = \int_{\rplus_1}^{\rplus\phantom{-}} {\Psiplus^+(s) \over \sqrt{|\rplus - s|\,|\rminus +s|}} \, ds
+ \int_{\rminus_1}^{\rminus\phantom{-}} {\Psiminus^+(s)\over \sqrt{|\rplus + s| \, |\rminus - s|}} \, ds
\quad \text{for } \rplus+\rminus<0,
\endaligned
\qquad
\aligned
\Psiplus^- & = \Abf_{\rplus_2\rminus_2}^{-1}[\psiplusG^-] , \\
\Psiminus^- & = \Abf_{\rminus_2\rplus_2}^{-1}[\psiminusG^-] , \\
\Psiplus^+ & = \Abf_{\rplus_1\rminus_1}^{-1}[\psiplusG^+] , \\
\Psiminus^+ & = \Abf_{\rminus_1\rplus_1}^{-1}[\psiminusG^+] ,
\endaligned
\ee
where we changed variables $s\to -s$ and used $\rplus_1+\rminus_2=\rplus_2+\rminus_1=0$ to rewrite the expressions for $\rplus+\rminus>0$.
We emphasize that $\psiplusG^+,\psiminusG^-$ are initial data (hence $\Psiplus^+,\Psiminus^-$ are known), while $\psiminusG^+,\psiplusG^-$ (hence $\Psiminus^+,\Psiplus^-$) are unknown at this stage.

To connect the two solutions along the singularity hypersurface, we expand~\eqref{psitimelikebothsides} in terms of Fuchsian data $(\psi_\sharp^{\pm},\psi_\flat^{\pm})$ on both sides, using Lemma~\ref{prop:evol-char} and its analogue on the other side.
We get
\bel{timelike-fuchs}
\aligned
\psi(\rplus, \rminus) & = \psi_\sharp^{\pm}(-\rplus+\rminus) \log|\rplus+\rminus| + \psi_\flat^{\pm}(-\rplus+\rminus) + o(1) \qquad \text{as $\rplus+\rminus\to 0^{\mp}$,}
\\
\psi_\sharp^{\pm}(2z) & = \mp \Psiplus^{\pm}(-z) \mp \Psiminus^{\pm}(z) ,
\\
\psi_\flat^-(2z) & = - \Psiminus^-(z) \log\bigl(4(-z-\rplus_1)\bigr)
+ \int_{\rplus_1}^{-z} \frac{\Psiminus^-(z)-\Psiminus^-(-s)}{-z-s} ds
- \Psiplus^-(-z) \log\bigl(4(z-\rminus_1)\bigr)
+ \int_{\rminus_1}^z \frac{\Psiplus^-(-z)-\Psiplus^-(-s)}{z-s} ds ,
\\
\psi_\flat^+(2z) & = \Psiplus^+(-z) \log\bigl(4(-z-\rplus_1)\bigr)
+ \int_{\rplus_1}^{-z} \frac{\Psiplus^+(s)-\Psiplus^+(-z)}{-z-s} ds
+ \Psiminus^+(z) \log\bigl(4(z-\rminus_1)\bigr)
+ \int_{\rminus_1}^{z} \frac{\Psiminus^+(s)-\Psiminus^+(z)}{z-s} ds .
\endaligned
\ee

\paragraph{The continuous junction condition is not causal.}
The formulas above apply to the matter field upon changing all~$\psi$ to~$\phi$ and $\Psi$ to~$\Phi$.
As a warm-up let us consider the simplest momentum-preserving scattering map, obtained by taking $f=0$ in~\eqref{plane-scattering-choice}.
It leads to the junction condition $(\phi_\sharp^+,\phi_\flat^+,\psi_\sharp^+,\psi_\flat^+) = (\phi_\sharp^-,\phi_\flat^-,\psi_\sharp^-,\psi_\flat^-)$, thus to solutions $\phi,\psi$ that have the same expansion on both sides of the singularity, simply changing $\rplus+\rminus$ to its opposite while keeping $-\rplus+\rminus$ fixed.  This is implemented by the symmetry $(\rplus,\rminus)\to(-\rminus,-\rplus)$ of the wave equation~\eqref{eq:3-1-two} obeyed by $\phi,\psi$.
A solution with such a symmetric expansion must thus itself obey the symmetry.

In particular this would lead to $\psiplusG^+(s)=\psiminusG^-(-s)$ and $\psiplusG^-(s)=\psiminusG^+(-s)$: initial data $\psiplusG^+,\psiminusG^-$ would be identified {\sl to each other}, which means that only certain classes of initial data are compatible with the junction condition.
This is completely unsuitable for our application to describing a timelike interface which appears dynamically in the collison of plane-symmetric gravitational waves.
In fact, this junction condition describes a $\ZZ/2\ZZ$ orbifold of spacetime.

Another point of view is instructive.
Imposing the junction conditions $\psi_\sharp^+=\psi_\sharp^-$ and $\psi_\flat^+ = \psi_\flat^-$ on~\eqref{timelike-fuchs} leads to equality $\Psiplus^+(s)=-\Psiminus^-(-s)$ and $\Psiplus^-(s)=-\Psiminus^+(-s)$ of inverse Abel transforms up to a sign.
(This can be argued for instance by noting that the latter equalities indeed lead to equal $\psi_\sharp^{\pm}$ and $\psi_\flat^{\pm}$, then using that $\Psiplus^{\pm},\Psiminus^{\pm}$ are uniquely determined by Fuchsian data.)
The relations between inverse Abel transforms then translate to relations $\psiplusG^+(s)=\psiminusG^-(-s)$ and $\psiplusG^-(s)=\psiminusG^+(-s)$ between values along boundaries of~$\Delta$.
The fact that different calculations match serves as a consistency check on various signs in formulas above.

\paragraph{Causal junction condition.}

As we just saw,
the momentum-preserving scattering map~$\Sbf$ given in~\eqref{plane-scattering-choice} is not causal for $f=0$.
By imposing that the evolution problem of interest has a solution for arbitrary initial data $\psiplusG^+,\psiminusG^-$, we now uncover the causality condition on~$f$, which we stated above as Definition~\ref{def:causal}.
The junction condition reads
\bel{junction-sec9}
\phi_\sharp^+ = \phi_\sharp^- , \qquad
\psi_\sharp^+ = \psi_\sharp^- , \qquad
\phi_\flat^+ = \phi_\flat^- + F_1(\phi_\sharp^-,\psi_\sharp^-) , \qquad
\psi_\flat^+ = \psi_\flat^- + F_2(\phi_\sharp^-,\psi_\sharp^-)
\ee
for a function $F=(F_1,F_2)\colon\RR^2\to\RR^2$ determined from~$f$ through~\eqref{plane-scattering-choice}.
As we will see momentarily, the causality condition is that $F$~is a bijection of~$\RR^2$.

Injecting~\eqref{timelike-fuchs} into $\psi_\sharp^+ = \psi_\sharp^-$ and rearranging terms yields
\[
\Psiplus^-(-z) + \Psiminus^+(z) = - \Psiplus^+(-z) - \Psiminus^-(z) ,
\]
which expresses a combination of (inverse Abel transforms of) outgoing data $\Psiplus^-,\Psiminus^+$ in terms of the initial data $\Psiplus^+,\Psiminus^-$.
The jump $\psi_\flat^+-\psi_\flat^-$ computed using~\eqref{timelike-fuchs} involves the two sums $\Psiplus^-(-z) + \Psiminus^+(z)$ and $\Psiplus^+(-z) + \Psiminus^-(z)$, which we just saw are opposite and determined solely from the initial data.
By changing $s\to-s$ and integrating by parts the jump simplifies to
\[
\psi_\flat^+(2z) - \psi_\flat^-(2z)
= - \del_z \int_{\rminus_1}^{-\rplus_1} \bigl(\Psiplus^+(-s)+\Psiminus^-(s)\bigr) \log|z-s|\,ds
\]
for all $z\in(\rminus_1,-\rplus_1)$, in which the derivative~$\del_z$ cannot be placed inside the integral as it would make the integrand divergent.
Together with the analogous expression for $\phi_\flat^+(2z)-\phi_\flat^-(2z)$, this expresses the left-hand side of the last junction condition in~\eqref{junction-sec9} explicitly in terms of the initial data $\psiplusG^+,\psiminusG^-$.

The junction condition~\eqref{junction-sec9} expresses the jump (that we just computed) as a function $F\colon\RR^2\to\RR^2$ of the leading Fuchsian data $(\phi_\sharp,\psi_\sharp)$.  Inverting the relation gives
\[
\bigl(\phi_\sharp^-(2z),\psi_\sharp^-(2z)\bigr)
=F^{-1}\bigl(\phi_\flat^+(2z) - \phi_\flat^-(2z),\psi_\flat^+(2z) - \psi_\flat^-(2z)\bigr)
\]
in terms of the inverse function $F^{-1}\colon\RR^2\to\RR^2$, provided $F$ is bijective.
If $F$ were not surjective, then only certain values of the jump would be allowed, rather than arbitrary $(\phi_\flat^+-\phi_\flat^-,\psi_\flat^+-\psi_\flat^-)(2z)\in\RR^2$, so that the existence of the singularity would put a restriction on the initial data; such a restriction would violate causality.
If $F$ were not injective, then there could be several allowed values of $(\phi_\sharp,\psi_\sharp)$ for a given initial data set, which would lead to a lack of predictive power.
This motivates our definition of a causal momentum-preserving scattering map in Definition~\ref{def:causal} as one for which $F$ is bijective.

Once $(\phi_\sharp,\psi_\sharp)$ are determined from the $(\phi_\flat,\psi_\flat)$ jumps, hence from the initial data, one easily uses~\eqref{timelike-fuchs} to deduce the outgoing data $\Psiplus^-,\Psiminus^+$, and finally the full solution~\eqref{psitimelikebothsides} on both sides of the singularity.
Altogether, we are done constructing the matter field~$\phi$ and the modular parameter~$\psi$ that solves the characteristic initial value problem in a diamond with a timelike singularity.

\subsection{Completing the construction}
\label{ssec:wrapup}

\paragraph{Construction of the conformal factor.}

At this stage, we know the solution coefficients $a,\phi,\psi$ for all $(\uplus,\uminus)\in\RR^2$.
To construct the conformal factor~$\omega$ in each regularity domain (in which $A\neq 0$) we integrate the differential equations~\eqref{eq-a1-baru} for~$\omega_\uplus$ and~\eqref{eq-a1-ubar} for~$\omega_\uminus$ along characteristics.
This determines the coefficient $\omega$ within any region where $A\neq 0$, up to a ``locally constant'' function.
Moreover, the compatibility condition $(\omega_\uplus)_\uminus=(\omega_\uminus)_\uplus$ and the wave equation~\eqref{eq-a2} for~$\omega_{\uplus\uminus}$ are then obeyed, thanks to the wave equations satisfied by $\phi,\psi$ as we now establish it.
By symmetry it suffices to check that the $\uminus$ derivative of \eqref{eq-a1-baru} gives~\eqref{eq-a2}:
\[
(\omega_\uplus)_\uminus
\overset{\eqref{eq-a1-baru}}{=} \biggl(\frac{A_{\uplus\uplus}}{2A_\uplus}\biggr)_\uminus - \biggl(\frac{A_\uplus}{4A}\biggr)_\uminus + \biggl(\frac{A}{A_\uplus}\bigl(\psi_\uplus^2 + 4 \pi \phi_\uplus^2\bigr)\biggr)_\uminus
\overset{\text{\eqref{eqxi}--\eqref{cphi-eq-scalarfield}}}{=}
\frac{A_\uplus A_\uminus}{4A^2} - \psi_\uplus\psi_\uminus - 4\pi\phi_\uplus\phi_\uminus .
\]
To fix the relative constants in disconnected components of the regularity domain $\{A\neq 0\}$, we rely on our junction conditions~\eqref{plane-scattering-choice} across each singularity hypersurface.
These junction conditions relate the expansion of $\upsilon=\omega-\frac{1}{2}\log|2\rplus'\rminus'|$ on both sides of each singularity hypersurface.
Thanks to our construction of the singularity scattering maps, the junction at each singularity hypersurface is fully consistent with the linear combinations of \eqref{eq-a1-baru} and~\eqref{eq-a1-ubar} that express the parallel derivative of~$\omega$ on both sides of the hypersurface (namely Einstein's constraint equations).

\paragraph{Cyclic spacetime.}

We now prove that our construction yields a cyclic spacetime in the sense of  \cite{LLV-1a}.
The underlying $4$-manifold is $\Mcal^4\simeq\RR^2\times\Tbb^2$ (or $\RR^4$), endowed with a global coordinate system $(\uplus,\uminus,x,y)$, a scalar field~$\phi$ and a Lorentzian metric
\[
g^{(4)} = -2\Omega d\uplus \,d\uminus + e^{2(a+\psi)} dx^2 + e^{2(a-\psi)} dy^2 ,
\]
where $\Omega,a,\psi,\phi$ are constructed in the previous sections.
The singular locus is $\Ncal^3=\Lscr=\{A=0\}$ defined in~\eqref{Lscr}.
By our genericity assumption of Definition~\ref{rem:non-degeneracy} this locus is the union of a collection of hypersurfaces, oriented by the choice of normal vector $-\nabla A$.
The exceptional locus $\Pcal^2\subset\Ncal^3$ is
\[
\Pcal^2 = \bigl\{(\uplus,\uminus,x,y)\in\Mcal^4\bigm|A(\uplus,\uminus)=0,\,A_\uplus(\uplus,\uminus) A_\uminus(\uplus,\uminus)=0\bigr\} ,
\]
which is $2$-dimensional by our genericity assumption. Importantly, all of the conditions for $(\Mcal^4,\Ncal^3,\Pcal^2,g^{(4)},\phi)$ to be a cyclic spacetime are obeyed by construction, as we now explain.

\bei 

\item {\bf Einstein equations.} 
The Einstein-scalar field equations~\eqref{equa-EinEq} hold outside the singular locus $\Ncal^3=\{A=0\}$, since this is how we constructed the metric components $A,\psi,\omega$ and scalar field~$\phi$ away from singularities.

\item {\bf Local foliations.}
Consider a point $(\uplus,\uminus,x,y)\in\Ncal^3\setminus\Pcal^2$, namely such that $A=0$, $A_\uplus\neq 0$, and $A_\uminus\neq 0$.  The latter two conditions state that the point belongs to the interior of a monotonicity diamond, and the first condition states that it lies on the spacelike or timelike singularity hypersurface $A=0$ within the diamond.  Upon constructing $\psi,\phi,\omega$ in Sections~\ref{sec:9sing1} and~\ref{sec:9sing2} we found their logarithmic singularities near a singularity hypersurface, from which we had determined in Section~\ref{sec:82} the behaviour of geodesics normal to the singularity hypersurface.
For a sufficiently small neighborhood~$\Hcal_0$ of $(\uplus,\uminus,x,y)$ inside $\Ncal^3 \setminus\Pcal^2$, the normal geodesics originating from~$\Hcal_0$ are well-defined for a sufficiently small interval $(s_{-1},s_1)\ni 0$ of proper time or distance coordinate~$s$.
The union of these geodesic segments is manifestly a neighborhood of $(\uplus,\uminus,x,y)$ foliated by constant-$s$ hypersurfaces~$\Hcal_s$ and the metric is $g^{(4)}=\pm ds^2+g(s)$ by construction.  We choose the sign of $s$ to be that of $-A$, which corresponds to choosing the hypersurface's orientation such that $-\nabla A$ is a positively-oriented normal vector.

\item {\bf Singularity behavior.}
We also determined in Section~\ref{sec:82} the asymptotic profiles on both sides of the singularity, hence the singularity data sets $(\gmoi, \Kmoi, \phi_{0-}, \phi_{1-})$ and $(\gpoi, \Kpoi, \phi_{0+}, \phi_{1+})$, expressed in~\eqref{sharpflat-to-01} in terms of Fuchsian data describing the expansions of $\phi,\psi,\omega$ near the singularity (in this equation the subscripts $\pm$ were dropped to lighten notations).

\item {\bf Scattering conditions.}
In Section~\ref{sec:83} we related Fuchsian data describing the expansions of $\phi,\psi,\omega$ on both sides of the singularity by translating the junction condition
\bel{junction-last}
(\gpoi, \Kpoi, \phi_{0+}, \phi_{1+}) = \Sbf(\gmoi, \Kmoi, \phi_{0-}, \phi_{1-})
\ee
to Fuchsian data.  We then used the resulting relation between Fuchsian data to determine $\phi,\psi,\omega$ beyond singularity hypersurfaces.
Thus, the junction condition~\eqref{junction-last} holds by construction on each singularity hypersurface.

\eei

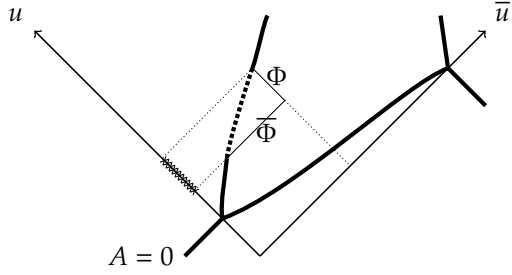
\begin{figure}\centering
  \begin{tikzpicture}
    \draw[semithick,->] (0,0) -- (-3,3) node [above left] {$\uplus$};
    \draw[semithick,->] (0,0) -- (3,3) node [above right] {$\uminus$};
    \draw[ultra thick] (-.5,.5) -- (-1,0) node [left] {$A=0$};
    \draw[ultra thick] (2.5,2.5) -- +(.5,-.5);
    \draw[ultra thick] (.1,3.2) .. controls +(-120:.1) and +(70:.2) .. (-.1,2.5);
    \draw[ultra thick,densely dotted] (-.1,2.5) .. controls +(-110:.2) and +(85:.2) .. (-.44,1.3);
    \draw[ultra thick] (-.44,1.3)
    .. controls +(-95:.2) and +(100:.2) .. (-.5,.5)
    .. controls +(20:1) and +(-160:.6) .. (2.5,2.5) -- (2.4,3.2);
    \draw(-.1,2.5)--(.33,2.07) node[midway,xshift=3pt,yshift=3pt]{$\Phiplus$}--(-.44,1.3) node[midway,xshift=4pt]{$\Phiminus$};
    \draw[densely dotted] (-.1,2.5) -- (-1.3,1.3);
    \draw[densely dotted] (.33,2.07) -- (1.2,1.2);
    \draw[densely dotted] (-.44,1.3) -- (-.87,.87);
    \foreach\x in{.87,.91,...,1.3} {
      \draw(-\x-.02,\x-.06)--+(.04,.12);
      \draw(-\x-.06,\x-.02)--+(.12,.04);
    }
  \end{tikzpicture}
  \caption{\label{fig:generic-curvature-sing}{\bf Non-generic plane-symmetric spacetime} in which Fuchsian data $\phi_\sharp$ vanishes on some interval (thick dotted segment) within the $A=0$ locus (thick lines).  By Lemma~\ref{prop:evol-char} the data $\phi_\sharp=\Phiplus+\Phiminus$ where $\Phiplus,\Phiminus$ are computed from values of~$\phi$ along certain null segments (thin lines).  In this configuration, $\Phiplus$ depends on more initial data than $\Phiminus$; perturbations of the initial data in the corresponding interval (hatched line) only affect~$\Phiplus$, hence do not leave $\phi_\sharp=0$ on the given interval.}
\end{figure}

\paragraph{Curvature is generically singular.}

The last statement of Theorem~\ref{theorem-plane} that remains to be proven is that curvature generically blows up as one approaches each singularity hypersurface $2A=\rplus+\rminus=0$.
Specifically, let us consider the (four-dimensional) Ricci scalar~$R^{(4)}$.
The Einstein-scalar field equations express it as the norm of $\nabla\phi$.  We write the expression in $\rplus,\rminus$ coordinates and use the expansions~\eqref{asympt} and~\eqref{alphasharpflat} to get
\[
R^{(4)} = \mp 2\pi e^{-2\upsilon} \phi_\rplus \phi_\rminus
= \mp 2\pi e^{-2\upsilon_\flat} \phi_\sharp^2 \,|\rplus+\rminus|^{-2(1+\upsilon_\sharp)} (1+o(1)) , \quad \rplus+\rminus\to 0 ,
\]
where the sign depends on the monotonicity diamond, more precisely on the spacelike or timelike nature of the singularity hypersurface.
Since $1+\upsilon_\sharp=3/4+\psi_\sharp^2+4\pi\phi_\sharp^2>0$, the Ricci scalar blows up, except perhaps at points of the singularity where $\phi_\sharp=0$.
We recall that the junction condition states that $(\phi_\sharp,\psi_\sharp,\upsilon_\sharp)$ are the same on both sides of the singularity so we can freely switch from one side of the singularity to the other in that regard.

Let us show that points with $\phi_\sharp=0$ are generically isolated, namely that $\phi_\sharp$ generically does not vanish on a whole interval.
For this, we assume that $\phi_\sharp$ vanishes on an interval and show that this situation is unstable.
Consider the symmetric diamond~$\Delta$ whose diagonal is the interval on which $\phi_\sharp$ vanishes identically, and apply Lemma~\ref{prop:evol-char} to this diamond: it expresses the Fuchsian data schematically as $\phi_\sharp=\Phiplus+\Phiminus$, as a sum of contributions from two null boundaries of~$\Delta$ lying on the same side of the singularity hypersurface.
If the hypersurface is timelike these boundaries are a past and a future boundary, while if the hypersurface is spacelike we can choose to use the two past boundaries, as depicted in Figure~\ref{fig:generic-curvature-sing}.
In either case, the two null boundaries have different causal pasts, so that $\Phiplus$ and $\Phiminus$ depend on different subsets of the initial data.
Perturbations of the initial data in suitable intervals thus affect only one of $\Phiplus$ and~$\Phiminus$, so that $\phi_\sharp=\Phiplus+\Phiminus$ cannot remain identically vanishing under such generic perturbations.
This ends the proof that generically $R^{(4)}$ blows up at generic points of singularity hypersurfaces, and thereby concludes the proof of Theorem~\ref{theorem-plane}.


\small

\paragraph*{Acknowledgments.} 
 
The authors gratefully acknowledge support from the Simons Center for Geometry and Physics, Stony Brook University. 
The second author (PLF)  
 is  grateful to the Institut Mittag-Leffler where he attended the Program: ``General relativity, geometry and analysis: beyond the first 100 years after Einstein''.


\small

\appendix 

\section{Derivation of representation formulas}
\label{app:part2} 

\paragraph{Proof of Lemma~\ref{lemmadef:Abel}.}

Let us first show that $\Abf_{\rho\sigma}\bigl[\Abf_{\rho\sigma}^{-1}[f]\bigr](r)=f(r)$ for any smooth function $f\colon(r_1,r_2)\to\RR$ and any $r\in(r_1,r_2)$.
We compute the left-hand side as follows:
\[
\aligned
\Abf_{\rho\sigma}\bigl[\Abf_{\rho\sigma}^{-1}[f]\bigr](r)
& = \frac{1}{\pi} \int_\rho^r \frac{\sgn(s-\rho)}{\sqrt{|r-s|}}  \, \frac{d}{ds} \int_\rho^s \frac{\,f(t)}{\sqrt{|s-t|}} \, dt \, ds & & (\text{a factor of~$\sqrt{|\sigma+s|}$ cancels out}),
\\
& = \frac{\sgn(r-\rho)}{\pi} \int_\rho^r \frac{1}{\sqrt{|r-s|}}  \, \frac{d}{ds} \int_{\rho-s}^0 \frac{f(s+t)}{\sqrt{|t|}} \, dt \, ds & & (t\to s+t),
\endaligned
\]
thus
\[
\aligned
& \Abf_{\rho\sigma}\bigl[\Abf_{\rho\sigma}^{-1}[f]\bigr](r)
= \frac{\sgn(r-\rho)}{\pi} \int_\rho^r \frac{1}{\sqrt{|r-s|}}  \, \biggl(
\frac{f(\rho)}{\sqrt{\rho-s|}} + \int_{\rho-s}^0 \frac{f'(s+t)}{\sqrt{|t|}} \, dt
\biggr) \, ds & & (\text{compute derivative}),
\\
& \qquad = f(\rho) \frac{\sgn(r-\rho)}{\pi} \int_\rho^r \frac{1}{\sqrt{|r-s||\rho-s|}} \, ds
+ \int_\rho^r f'(t) \frac{\sgn(r-\rho)}{\pi} \int_t^r \frac{1}{\sqrt{|r-s||s-t|}} \, ds dt
& & (t\to -s+t \text{ and swap integrals}),
\\
& \qquad = f(\rho) + \int_\rho^r f'(t) dt = f(r),
\endaligned
\]
where to get the last line we used that $\int_t^r ds/\sqrt{|r-s||s-t|}=\pi\sgn(r-t)=\pi\sgn(r-\rho)$.
Conversely let us show that $\Abf_{\rho\sigma}^{-1}\bigl[\Abf_{\rho\sigma}[F]\bigr]=F$:
\[
\aligned
\Abf_{\rho\sigma}^{-1}\bigl[\Abf_{\rho\sigma}[F]\bigr]
& = \sgn(r-\rho) \frac{1}{\pi} \sqrt{|\sigma+r|} \, \frac{d}{dr}
\int_\rho^r \int_\rho^s \frac{F(t)}{\sqrt{|r-s||s-t||\sigma+t|}} \, dt \, ds ,
\\
& = \sgn(r-\rho) \frac{1}{\pi} \sqrt{|\sigma+r|} \, \frac{d}{dr}
\int_\rho^r \frac{F(t)}{\sqrt{|\sigma+t|}} \int_t^r \frac{1}{\sqrt{|r-s||s-t|}} \, ds \, dt
& & \text{(swap integrals),}
\\
& = \sqrt{|\sigma+r|} \, \frac{d}{dr} \int_\rho^r \frac{F(t)}{\sqrt{|\sigma+t|}} \, dt = F(r)
& & \text{(compute the $s$ integral).}
\endaligned
\]

\paragraph{Proof of Lemma~\ref{lemma-Abel}.}

We recall that we work in a characteristic domain~$D$ that lies entirely on one side of the line $\rplus+\rminus=0$.  Let us denote $\epsilon=\sgn(\rplus+\rminus)$.
We showed in the main text that $\psi$ given in~\eqref{gen-form} assumes the prescribed values on suitable boundaries.  Our task now is to show that it obeys the singular wave equation~\eqref{eq:3-1-two}, which we write as $2|\rplus+\rminus|\psi_{\rplus\rminus}+\epsilon(\psi_\rminus+\psi_\rplus) = 0$.
The wave equation is linear so we simply need to show each of the two terms in~\eqref{gen-form} obey it.
In addition, these two terms are mapped to each other under the symmetry $\rplus\leftrightarrow\rminus$ of the wave equation, so we simply need to treat the $\Psiplus$ term (equivalently we set $\psinot=0$ and $\Psiminus=0$).
The $\rminus$ derivative is easy:
\[
\psi(\rplus, \rminus) = \int_{\rplus_j}^{\rplus} \frac{\Psiplus(s)}{|\rplus-s|^{1/2}\,|\rminus+s|^{1/2}} \, ds , \qquad
\psi_\rminus(\rplus, \rminus)
= - \frac{\epsilon}{2} \int_{\rplus_j}^{\rplus} \frac{\Psiplus(s)}{|\rplus-s|^{1/2}\,|\rminus+s|^{3/2}} \, ds .
\]
For the $\rplus$ derivative we change variables as $s=\rplus+t$, so that the variable integration bound is not the one which lies at a singularity of the integrand, then after taking the derivative we change back to~$s$.  (Intuitively, we take the $\del_\rplus+\del_s$ derivative, or the $\rplus$ derivative ``at fixed $s-\rplus$''.)
This yields
\[
\psi_\rplus(\rplus, \rminus)
= \frac{\Psiplus(\rplus_j)}{|\rplus_j-\rplus|^{1/2}\,|\rminus+\rplus_j|^{1/2}}
+ \int_{\rplus_j}^{\rplus} \frac{1}{|s-\rplus|^{1/2}}\del_s\biggl(\frac{\Psiplus(s)}{|\rminus+s|^{1/2}}\biggr) \, ds
\]
and its $\rminus$ derivative
\[
\psi_{\rplus\rminus}(\rplus, \rminus)
= - \frac{\epsilon}{2} \frac{\Psiplus(\rplus_j)}{|\rplus_j-\rplus|^{1/2}\,|\rminus+\rplus_j|^{3/2}}
- \frac{\epsilon}{2} \int_{\rplus_j}^{\rplus} \frac{1}{|s-\rplus|^{1/2}} \del_s \biggl(\frac{\Psiplus(s)}{|\rminus+s|^{3/2}}\biggr)\, ds .
\]
We are ready to compute.  First we collect together terms without integrals, and terms with $\del_s$ derivatives, then, in a second line, we use $\epsilon(|\rminus+s|-|\rplus+\rminus|)=s-\rplus$ and integrate by parts:
\begin{align*}
& 2|\rplus+\rminus|\psi_{\rplus\rminus}+\epsilon(\psi_\rplus+\psi_\rminus)
= \frac{\epsilon(|\rminus+\rplus_j|-|\rplus+\rminus|)\Psiplus(\rplus_j)}{|\rplus_j-\rplus|^{1/2}\,|\rminus+\rplus_j|^{3/2}}
+ \int_{\rplus_j}^{\rplus} \frac{1}{|s-\rplus|^{1/2}} \del_s \biggl(\frac{\epsilon(|\rminus+s| - |\rplus+\rminus| )\Psiplus(s)}{|\rminus+s|^{3/2}}\biggr)\, ds
- \frac{1}{2} \int_{\rplus_j}^{\rplus} \frac{\Psiplus(s)}{|\rplus-s|^{1/2}\,|\rminus+s|^{3/2}} \, ds ,
\\
& \qquad = \frac{(\rplus_j-\rplus)\Psiplus(\rplus_j)}{|\rplus_j-\rplus|^{1/2}\,|\rminus+\rplus_j|^{3/2}}
+ \biggl[\frac{(s-\rplus)\Psiplus(s)}{|s-\rplus|^{1/2}\,|\rminus+s|^{3/2}}\biggr]_{\rplus_j}^{\rplus}
- \int_{\rplus_j}^{\rplus} \frac{(s-\rplus)\Psiplus(s)}{|\rminus+s|^{3/2}} \del_s \biggl(\frac{1}{|s-\rplus|^{1/2}}\biggr)\, ds
- \frac{1}{2} \int_{\rplus_j}^{\rplus} \frac{\Psiplus(s)}{|\rplus-s|^{1/2}\,|\rminus+s|^{3/2}} \, ds = 0 .
\end{align*}
Our final step is to notice that the first two terms cancel, as do the last two terms.  This concludes our proof that $\psi$ given in~\eqref{gen-form} obeys the wave equation.

\paragraph{Proof of Lemma~\ref{prop:evol-char}.}

We consider the solution $\psi$ given in~\eqref{gen-form} in the case of a symmetric diamond ($\rplus_1+\rminus_2=\rplus_2+\rminus_1=0$) where data are prescribed along the $\rminus=\rminus_2$ and $\rplus=\rplus_2$ boundaries.  We wish to expand the solution near the singularity $\rplus+\rminus\to 0^+$.
For this we will switch to coordinates $t=(\rplus+\rminus)/2>0$ and $z=(-\rplus+\rminus)/2\in(-\rplus_2,\rminus_2)$ as needed.
As in the proof of Lemma~\ref{lemma-Abel} above, we treat the term $\Psiplus^-$ only, setting $\Psiminus^-=0$ as the corresponding term is entirely analogous to~$\Psiplus^-$.
We thus wish to expand (as $t\to 0^-$)
\[
\psi(\rplus,\rminus)
= - \int_{\rplus}^{\rplus_2} \frac{\Psiplus^-(s)\,ds}{\sqrt{(s-\rplus)(\rminus+s)}}
= - \int_{t-z}^{\rplus_2} \frac{\Psiplus^-(s)\,ds}{\sqrt{(z+s)^2-t^2}}.
\]
We split $\Psiplus^-(s)$ into $\Psiplus^-(-z)$ and $\Psiplus^-(s)-\Psiplus^-(-z)$ and treat the two parts separately.  First, by the change of coordinates $s=-z+t\sigma$ we obtain
\[
- \int_{t-z}^{\rplus_2} \frac{\Psiplus^-(-z)\,ds}{\sqrt{(z+s)^2-t^2}}
= - \Psiplus^-(-z) \int_1^{(\rplus_2+z)/t} \frac{d\sigma}{\sqrt{\sigma^2-1}}
= - \Psiplus^-(-z) \operatorname{arccosh}\biggl(\frac{|\rplus_2+z|}{|t|}\biggr)
= \Psiplus^-(-z) \log\biggl(\frac{|t|}{2|\rplus_2+z|}\biggr) + O(t^2) .
\]
The second part has a manifestly finite limit as $t\to 0$:
\[
\int_{t-z}^{\rplus_2} \frac{\Psiplus^-(-z)-\Psiplus^-(s)}{\sqrt{(z+s)^2-t^2}} ds
= \int_{-z}^{\rplus_2} \frac{\Psiplus^-(-z)-\Psiplus^-(s)}{z+s} ds
+ o(1) .
\]
This yields the desired terms in the expansion (from which the first correction term can be computed to be $O(t^2\log|t|)$): 
\[
\psi(\rplus,\rminus)
= \Psiplus^-(-z) \log|\rplus+\rminus|
- \Psiplus^-(-z) \log(4|\rplus_2+z|)
+ \int_{-z}^{\rplus_2} \frac{\Psiplus^-(-z)-\Psiplus^-(s)}{z+s} ds + o(1) .
\]

\paragraph{Proof of Lemma~\ref{lemma-Poisson}.}

For this lemma we work in the region $\rplus+\rminus>0$.
The claim in the lemma is two-fold: first, that $\psi(\rplus,\rminus)$ given in~\eqref{Poisson-rep} is a solution of the wave equation; second, that it has the expected asymptotics at the singularity.
By linearity we can treat the $\phi_\flat^+$ and $\phi_\sharp^+$ terms in turn.
First, the $\phi_\flat^+$ term (i.e.\ take $\phi_\sharp^+=0$)
\[
\psi(\rplus,\rminus)
= {1 \over \pi} \int_{-1}^1 \psi_\flat^+(-\rplus+\lambda\rplus+\rminus+\lambda\rminus) \frac{d\lambda}{\sqrt{1-\lambda^2}} .
\]
Showing it obeys the wave equation is straightforward:
\[
2 (\rplus+\rminus)\psi_{\rplus\rminus} + \psi_\rplus + \psi_\rminus
= {1 \over \pi} \int_{-1}^1 \biggl(
2 (\rplus+\rminus) \sqrt{1-\lambda^2}\,\psi_\flat^{+\prime\prime}
+ \frac{-1+\lambda}{\sqrt{1-\lambda^2}}\,\psi_\flat^{+\prime}
+ \frac{1+\lambda}{\sqrt{1-\lambda^2}}\,\psi_\flat^{+\prime}
\biggr) \, d\lambda
= {2 \over \pi} \int_{-1}^1 \del_\lambda \bigl( \sqrt{1-\lambda^2}\,\psi_\flat^{+\prime}\bigr) \, d\lambda = 0 ,
\]
where derivatives of~$\psi_\flat^+$ are evaluated at $-\rplus+\lambda\rplus+\rminus+\lambda\rminus$.
Taking the $t=(\rplus+\rminus)/2\to 0$ limit at fixed $z=(-\rplus+\rminus)/2$ is even easier:
\[
\psi(-z,z)
= {1 \over \pi} \int_{-1}^1 \psi_\flat^+(2z) \frac{d\lambda}{\sqrt{1-\lambda^2}}
= \psi_\flat^+(2z) ,
\]
where the integral is evaluated by writing $\lambda=\sin\theta$ with $\theta\in[-\pi/2,\pi/2]$.
Second, the $\phi_\sharp^+$ term (i.e.\ take $\phi_\flat^+=0$)
\[
\psi(\rplus,\rminus)
= {1 \over \pi} \int_{-1}^1 \psi_\sharp^+(-\rplus+\lambda\rplus+\rminus+\lambda\rminus) \, \log\bigl(4(1-\lambda^2)(\rplus+\rminus)\bigr) \frac{d\lambda}{\sqrt{1-\lambda^2}} .
\]
Calculations are more tedious than above.  Again we omit the arguments of derivatives of~$\phi_\sharp^+$ for brevity.  First we compute derivatives:
\[
\aligned
\psi_{\rplus} & = {1 \over \pi} \int_{-1}^1 \biggl(
(-1+\lambda) \log\bigl(4(1-\lambda^2)(\rplus+\rminus)\bigr) \, \psi_\sharp^{+\prime}
+ \frac{1}{\rplus+\rminus}\,\psi_\sharp^+
\biggr)\, \frac{d\lambda}{\sqrt{1-\lambda^2}} ,
\\
\psi_{\rminus} & = {1 \over \pi} \int_{-1}^1 \biggl(
(1+\lambda) \log\bigl(4(1-\lambda^2)(\rplus+\rminus)\bigr) \, \psi_\sharp^{+\prime}
+ \frac{1}{\rplus+\rminus}\,\psi_\sharp^+
\biggr)\, \frac{d\lambda}{\sqrt{1-\lambda^2}} ,
\\
\psi_{\rplus\rminus} & = {1 \over \pi} \int_{-1}^1 \biggl(
- (1-\lambda^2) \log\bigl(4(1-\lambda^2)(\rplus+\rminus)\bigr) \, \psi_\sharp^{+\prime\prime}
+ (1+\lambda) \frac{1}{\rplus+\rminus} \, \psi_\sharp^{+\prime}
+ (-1+\lambda) \frac{1}{\rplus+\rminus} \, \psi_\sharp^{+\prime}
- \frac{1}{(\rplus+\rminus)^2}\,\psi_\sharp^+
\biggr) \frac{d\lambda}{\sqrt{1-\lambda^2}} .
\endaligned
\]
Then we combine them and collect terms as
\[
\aligned
2(\rplus+\rminus)\psi_{\rplus\rminus} + \psi_{\rplus} + \psi_{\rminus}
& = {2 \over \pi} \int_{-1}^1 \biggl(
\lambda \log\bigl(4(1-\lambda^2)(\rplus+\rminus)\bigr) \, \psi_\sharp^{+\prime}
+ 2\lambda \psi_\sharp^{+\prime}
- (\rplus+\rminus) (1-\lambda^2) \log\bigl(4(1-\lambda^2)(\rplus+\rminus)\bigr) \, \psi_\sharp^{+\prime\prime}
\biggr)\, \frac{d\lambda}{\sqrt{1-\lambda^2}} ,
\\
& = {2 \over \pi} \int_{-1}^1 \del_\lambda \biggl(
- \sqrt{1-\lambda^2} \log\bigl(4(1-\lambda^2)(\rplus+\rminus)\bigr) \, \psi_\sharp^{+\prime}
\biggr)\, d\lambda = 0 .
\endaligned
\]
To take the $t=(\rplus+\rminus)/2\to 0$ limit at fixed $z=(-\rplus+\rminus)/2$ we split the logarithm into $\log(\rplus+\rminus)+\log(4(1-\lambda^2))$.  The first piece leads to exactly the same calculations as for $\psi_\flat^+$, while the second piece turns out to give a vanishing contribution.
\[
\aligned
\psi(\rplus,\rminus)
& = \log(\rplus+\rminus) {1 \over \pi} \int_{-1}^1 \psi_\sharp^+(-\rplus+\lambda\rplus+\rminus+\lambda\rminus) \frac{d\lambda}{\sqrt{1-\lambda^2}}
+ {1 \over \pi} \int_{-1}^1 \psi_\sharp^+(-\rplus+\lambda\rplus+\rminus+\lambda\rminus) \, \log(4(1-\lambda^2)) \frac{d\lambda}{\sqrt{1-\lambda^2}} ,
\\
& = \psi_\sharp^+(2z) \log(\rplus+\rminus) {1 \over \pi} \int_{-1}^1 \frac{d\lambda}{\sqrt{1-\lambda^2}}
+ \psi_\sharp^+(2z) {1 \over \pi} \int_{-1}^1 \log(4(1-\lambda^2)) \frac{d\lambda}{\sqrt{1-\lambda^2}} + o(1)
= \psi_\sharp^+(2z) \log(\rplus+\rminus) + o(1) ,
\endaligned
\]
where the second integral is evaluated by changing $\lambda=\sin\theta$ and recognizing a famous integral $2\int_{-\pi/2}^{\pi/2} \log(2\cos\theta) d\theta=0$.
This concludes our proof of Lemma~\ref{lemma-Poisson}.

\end{document}